\renewcommand*\env@matrix[1][*\c@MaxMatrixCols c]{%
  \hskip -\arraycolsep
  \let\@ifnextchar\new@ifnextchar
  \array{#1}}
\tikzset{ampersand replacement=\&}
\newtheorem{thm}{Theorem}
\newtheorem{dfn}[thm]{Definition}
\newtheorem{prp}[thm]{Proposition}
\newtheorem{lem}[thm]{Lemma}
\theoremstyle{remark}
\newtheorem{rem}{Remark}
  \newcommand{\del}{\partial}
  \newcommand{\oo}{\infty}
  \newcommand{\id}{\mathrm{id}}
  \newcommand{\Lie}{\mathcal{L}}
  \newcommand{\eps}{\varepsilon}
  \newcommand{\la}{\lambda}
  \newcommand{\ka}{\kappa}
  \newcommand{\M}{\mathcal{M}}
  \newcommand{\R}{\mathcal{R}}
\renewcommand{\S}{\mathcal{S}}
\renewcommand{\d}{\mathrm{d}}
  \newcommand{\tr}{\operatorname{tr}}
\renewcommand{\div}{\operatorname{div}}
  \newcommand{\tC}{\tilde{C}}
  \newcommand{\tH}{\tilde{H}}
  \newcommand{\tK}{\tilde{K}}
  \newcommand{\tX}{\tilde{X}}
  \newcommand{\tY}{\tilde{Y}}
  \newcommand{\tZ}{\tilde{Z}}
  \newcommand{\Secs}{\Gamma}
  \newcommand{\tnabla}{\tilde{\nabla}}
  \newcommand{\DD}{\mathbb{D}}
  \newcommand{\TT}{\mathbb{T}}
\title{Compatibility complexes of overdetermined PDEs of finite type, with applications to the Killing equation}
\author{Igor Khavkine\\[0.5ex]
	Institute of Mathematics, Czech Academy of Sciences,\\
	\v{Z}itn{\'a} 25, 115 67 Praha 1, Czech Republic\\[0.5ex]
	\texttt{khavkine@math.cas.cz}}
\begin{document}
\maketitle


\begin{abstract}
In linearized gravity, two linearized metrics are considered
gauge-equivalent, $h_{ab} \sim h_{ab} + K_{ab}[v]$, when
they differ by the image of the Killing operator, $K_{ab}[v] =
\nabla_a v_b + \nabla_b v_a$. A universal (or complete)
compatibility operator for $K$ is a differential operator $K_1$ such
that $K_1 \circ K = 0$ and any other operator annihilating $K$ must
factor through $K_1$. The components of $K_1$ can be interpreted as a
complete (or generating) set of local gauge-invariant observables in
linearized gravity. By appealing to known results in the formal theory
of overdetermined PDEs and basic notions from homological algebra, we
solve the problem of constructing the Killing compatibility operator
$K_1$ on an arbitrary background geometry, as well as of extending it to
a full compatibility complex $K_i$ ($i\ge 1$), meaning that for each
$K_i$ the operator $K_{i+1}$ is its universal compatibility operator.
Our solution is practical enough that we apply it explicitly in two
examples, giving the first construction of full compatibility complexes
for the Killing operator on these geometries. The first example consists
of the cosmological FLRW spacetimes, in any dimension. The second
consists of a generalization of the Schwarzschild-Tangherlini black hole
spacetimes, also in any dimension. The generalization allows an
arbitrary cosmological constant and the replacement of spherical
symmetry by planar or pseudo-spherical symmetry.
\end{abstract}

\section{Introduction} \label{sec:intro}

An important aspect of General Relativity is its invariance under
diffeomorphisms, also called \emph{gauge transformations} of this
theory. Of course, this invariance survives linearization about some
fixed background metric $g$ and the linearized diffeomorphisms (or
linearized gauge transformations) change the linearized metric as
$h_{ab} \mapsto h_{ab} + K_{ab}[v]$, where $K_{ab}[v] = \nabla_a v_b +
\nabla_b v_a$ is the \emph{Killing operator} with respect to the
background metric $g$. Solutions of the \emph{Killing equation} $K[v] =
0$ are \emph{Killing vectors} $v_a$. Because two linearized metric
configurations are considered physically equivalent if they differ only
by a linearized gauge transformation, an inescapable part of the study
of linearized gravity (linearized General Relativity) is the need to
separate gauge and physical degrees of freedom; the latter essentially
parametrize equivalence classes of linearized metrics under linearized
gauge transformations.

A \emph{local gauge-invariant observable} is a differential operator
$O[h]$ such that $O[K[v]] = 0$ for an arbitrary argument $v_a$. Clearly,
such differential operators have many potential applications in
linearized gravity and, not surprisingly, their study has a long
history~\cite{stewart-walker}. While not all useful gauge-invariant
observables $O[h]$ are local (where $O$ is \emph{local} if it is a differential,
rather than an integral, operator), the local ones are distinguished by
the property that they preserve supports, $\operatorname{supp} O[h]
\subseteq \operatorname{supp} h$, which helps to disentangle the
gauge-invariant information contained in $h$ from infrared or asymptotic
properties of $h$. Further discussion of these issues, with brief
surveys of previous work, can be found~\cite{fhk}, in the context of
cosmological perturbations, and in~\cite{jezierski,swaab,ab-kerr}, in
the context of black hole perturbations.

In this work, we are interested in the problem of explicitly
constructing complete (or generating) sets of local gauge-invariant
observables on spacetime backgrounds of physical interest. Completeness
refers to the ability to express any local gauge-invariant observable in
terms of linear combinations of derivatives of a given set. For
technical reasons~\cite{kh-peierls, kh-calabi}, it also becomes
important to identify complete sets of differential relations between
them, complete sets of differential relations between these differential
relations, and so on.
For instance, once a complete set of local gauge-invariant observables
is known, the complete set of differential relations between them may
allow us to reduce the number of independent invariants at the price of
integrating some differential equations (if any sufficiently simple ones
can be identified among the differential relations). This is the way in
which Teukolsky-Starobinsky identities are used to relate the Teukolsky
scalars to other invariants on the Kerr background~\cite{whiting-price,
ab-kerr}. Also, the differential relations among invariants may play a
role in the construction of a wave equation satisfied by the invariants,
just as the Bianchi identities $\nabla_{[a}F_{bc]}=0$ play a role in
obtaining the wave equation for the Maxwell tensor $F_{bc} = \nabla_{[a}
A_{b]}$ (whose components constitute a complete set of invariants in
electrodynamics). For linearized gravity on maximally symmetric spaces,
this idea was used in~\cite{kh-calabi} to identify wave equations
satisfied by the invariants, the differential relations between them, as
well as all higher differential relations. Finally, knowledge of all the
higher differential relations has interesting applications to the
analysis of the symplectic and Poisson structures on the space of
solutions of linearized gravity~\cite{kh-peierls}
\cite[Sec.5]{kh-calabi}.
Thus, phrased in mathematical terms, given a background
metric $g$, we are interested in constructing a (full) compatibility
complex for the corresponding Killing operator $K[v]$, where \emph{full}
refers to the continuation of the sequence of differential relations
until it terminates (becomes identically zero), a property that is usually required implicitly.

An unfortunate aspect of the study of local gauge-invariant observables
$O[h]$ is that their structure depends strongly on the background metric
$g$, since the Killing operator $K[v]$, which determines the structure
of gauge-equivalence classes, depends on $g$ in an essential way. Thus,
in principle, this problem needs to be attacked anew for each background
metric of interest. Unfortunately, a full solution (a complete set of
gauge-invariants, relations between them, etc.)\ can be found in the
literature only in very few cases, even if we restrict ourselves only to
the construction of complete sets of gauge-invariants (and not relations between them, etc.). To our knowledge,
the full Killing compatibility complex is known only for flat
(Minkowski) and constant curvature (de~Sitter or anti-de~Sitter)
spacetimes~\cite{kh-calabi}. In principle, the methods of~\cite{gg83,
gg88} could have been used to generate the compatibility complex on
locally symmetric spacetimes (those with a covariantly constant Riemann
tensor), but to our knowledge they have never been explicitly elaborated
in the Lorentzian setting~\cite{cahen-wallach}. In addition, complete
sets of local gauge-invariant observables are known only for
cosmological (inflationary FLRW) spacetimes in any dimension, due to the
recent construction in~\cite{fhh, cdk, fhk}, and for the $4$-dimensional
Kerr black hole, as recently highlighted in~\cite{ab-kerr}. Full
proofs of the results announced in~\cite{ab-kerr} will appear
in~\cite{aabkw} and will be based on the methods to be presented in
this work.

The major obstacle to solving the problem that we have posed (the
construction of a compatibility complex for the Killing operator) has so
far been proving completeness (of a set of gauge-invariants, of a set of
relations between them, etc.). In the flat and constant curvature cases,
the proof was basically due to Calabi~\cite{calabi,kh-calabi}, and was
specific to those geometries. In the cosmological case, the proof is due
to~\cite{fhh}, but is somewhat ad-hoc and without clear generalizations.

The main innovation in this work is the application of methods from the
formal theory of PDEs~\cite{spencer, goldschmidt-lin, tarkhanov, seiler-inv} and
homological algebra~\cite{weibel} to the problem of constructing
Killing compatibility complexes. In fact, a method for systematically
constructing a complete compatibility operator for any overdetermined
linear differential operator (under mild regularity conditions) has been
known for a long time~\cite{goldschmidt-lin} (it was this method
that was applied in~\cite{gg83, gg88}). Unfortunately, it is rather
cumbersome to apply directly. There do exist computer algebra
implementations of this method~\cite{janet}, but they suffer from the
problem that the input and output of this computer algebra construction
must be matrices of scalar differential operators written in some
explicit coordinates, which often destroys any manifest symmetry or
other structure that the original linear differential operator had. This
is certainly an undesirable feature when dealing with the Killing
operator on a spacetime with some symmetry, product or warped product
structure. However, there is a significant simplification of the general
systematic construction when we restrict our attention to differential
operators of \emph{regular finite type}, of which the Killing operator is often an
example. We will take full advantage of this simplification, together
with some basic notions from homological algebra, to give a practical
sufficient condition (Lemma~\ref{lem:compat-sufficient}) for the
completeness of a given set of local gauge-invariant observables in
linearized gravity (or more generally, the completeness of a
compatibility operator for any operator). In practice, this criterion
also leads to a way to construct (Theorem~\ref{thm:fintype-compat}) the
full Killing compatibility complex (or more generally, the compatibility
complex for any operator of regular finite type), which can preserve various
structural properties of a given background spacetime geometry.

In Section~\ref{sec:compat}, we introduce some ideas from homological
algebra, applied to linear differential operators, and use it to show
how to explicitly construct a compatibility complex for a PDE of regular finite
type (the appropriateness of our definition of \emph{finite type}
operator is discussed at length in Appendix~\ref{sec:fintype}). This
technique is applied to the Killing equation in
Section~\ref{sec:killing}, with some examples. In particular, we treat
in detail the examples of spacetimes of constant curvature
(Section~\ref{sec:cc}), cosmological FLRW spacetimes
(Section~\ref{sec:flrw}) and Schwarzschild-Tangherlini black holes
(Section~\ref{sec:schw}). In each case we make some remarks about the
relation of our results with the literature. Appendix~\ref{sec:notation}
gives a helpful reference for the notation used in different subsections
of Section~\ref{sec:killing}. In all examples, we keep the spacetime
dimension $n$ general (that is, we allow at least $n\ge 4$). The results
of Sections~\ref{sec:flrw} and~\ref{sec:schw} are new. Finally, we
conclude with a discussion of further work in Section~\ref{sec:discuss}.

Whenever speaking of differential operators, we will specifically mean a
linear differential operator with smooth coefficients acting on smooth
functions. More precisely, we will consider differential operators that
map between sections of vector bundles, say $V_1 \to M$ and $V_2 \to M$,
on some fixed manifold $M$, $K\colon \Gamma(V_1) \to \Gamma(V_2)$. The
source and target bundle of a differential operator, $V_1\to M$ and
$V_2\to M$ respectively in the last example, will be considered as part
of its definition and will most often be omitted from the notation. We
will denote the composition of two differential operators $K$ and $L$ by
$K\circ L$, or simply by $K L$, if no confusion is possible. A
\emph{local section} of a vector bundle $V\to M$ is a section of the
restriction bundle $V|_U \to U$ for some open $U \subset M$. A local
section $v$ that solves the differential equation $K[v] = 0$ on its
domain of definition is a \emph{local solution}.

\section{Compatibility operators} \label{sec:compat}

We start by introducing some basic notions from \emph{homological
algebra}~\cite{weibel}.

\begin{dfn} \label{def:homalg}
A (possibly infinite) composable sequence $K_l$ of linear maps,
$l=l_{\min}, \ldots, l_{\max}$, such that
$K_{l+1} \circ K_l = 0$ when possible, is called a \emph{(cochain)
complex}. Given complexes $K_l$ and $K'_l$ a sequence $C_l$ of linear
maps, as in the diagram
\begin{equation}
\begin{tikzcd}[column sep=large,row sep=large]
	\cdots \ar{r} \&
	\bullet \ar{r}{K_{l-1}} \ar{d}{C_{l-1}} \&
	\bullet \ar{r}{K_l} \ar{d}{C_l} \&
	\bullet \ar{r}{K_{l+1}} \ar{d}{C_{l+1}} \&
	\bullet \ar{r} \ar{d}{C_{l+2}} \&
	\cdots
	\\
	\cdots \ar{r} \&
	\bullet \ar[swap]{r}{K'_{l-1}} \&
	\bullet \ar[swap]{r}{K'_l} \&
	\bullet \ar[swap]{r}{K'_{l+1}} \&
	\bullet \ar{r} \&
	\cdots
\end{tikzcd} ,
\end{equation}
such that its squares commute, that is $K'_l \circ C_l = C_{l+1} \circ
K_l$ when possible, is called a \emph{cochain map} or a \emph{morphism}
between complexes. A \emph{homotopy} between complexes $K_l$ and $K'_l$
(which could also be the same complex, $K_l = K'_l$) is a sequence of
morphism, as the dashed arrows in the diagram
\begin{equation}
\begin{tikzcd}[column sep=large,row sep=large]
	\cdots \ar{r} \&
	\bullet \ar{r}{K_{l-1}} \ar{d}{C_{l-1}} \&
	\bullet \ar{r}{K_l} \ar{d}{C_l} \ar[dashed]{dl}{H_{l-1}} \&
	\bullet \ar{r}{K_{l+1}} \ar{d}{C_{l+1}} \ar[dashed]{dl}{H_l} \&
	\bullet \ar{r} \ar{d}{C_{l+2}} \ar[dashed]{dl}{H_{l+1}}\&
	\cdots
	\\
	\cdots \ar{r} \&
	\bullet \ar[swap]{r}{K'_{l-1}} \&
	\bullet \ar[swap]{r}{K'_l} \&
	\bullet \ar[swap]{r}{K'_{l+1}} \&
	\bullet \ar{r} \&
	\cdots
\end{tikzcd} ,
\end{equation}
and the sequence of maps $C_l = K'_{l-1}\circ H_{l-1} + H_l\circ K_l$ is
said to be a \emph{morphism induced by} the homotopy $H_l$. An
\emph{equivalence up to homotopy} between complexes $K_l$ and $K'_l$ is
a pair of morphisms $C_l$ and $D_l$ between them, as in the diagram
\begin{equation}
\begin{tikzcd}[column sep=large,row sep=4.5em]
	\ar[loop left]{}{\tH_{l_{\min}-1}}
	\bullet \ar{r}{K_{l_{\min}}}
		\ar[swap,shift right]{d}{C_{l_{\min}}} \&
	\ar[r,phantom,"\cdots"]
		\ar[dashed,bend left]{l}{H_{l_{\min}}} \&
	\bullet \ar{r}{K_l}
		\ar[swap, shift right]{d}{C_l} \&
	\bullet \ar[r,phantom,"\cdots"]
		\ar[swap,shift right]{d}{C_{l+1}}
		\ar[dashed,bend left]{l}{H_l} \&
	\ar{r}{K_{l_{\max}}} \&
	\bullet \ar[swap,shift right]{d}{C_{l_{\max}+1}}
		\ar[dashed,bend left]{l}{H_{l_{\max}}}
		\ar[loop right]{}{\tH_{l_{\max}+1}}
	\\
	\ar[loop left]{}{\tH'_{l_{\min}-1}}
	\bullet \ar[swap]{r}{K'_{l_{\min}}}
		\ar[swap,shift right]{u}{D_{l_{\min}}} \&
	\ar[r,phantom,"\cdots"]
		\ar[swap,dashed,bend right]{l}{H'_{l_{\min}}} \&
	\bullet \ar[swap]{r}{K'_l}
		\ar[swap,shift right]{u}{D_l} \&
	\bullet \ar[r,phantom,"\cdots"]
		\ar[swap,shift right]{u}{D_{l+1}}
		\ar[swap,dashed,bend right]{l}{H'_l} \&
	\ar[swap]{r}{K'_{l_{\max}}} \&
	\bullet \ar[swap,shift right]{u}{D_{l_{\max}+1}}
		\ar[swap,dashed,bend right]{l}{H'_{l_{\max}}}
	\ar[loop right]{}{\tH'_{l_{\max}+1}}
\end{tikzcd} ,
\end{equation}
such that $C_l$ and $D_l$ are mutual inverses up to homotopy ($H_l$ and
$H'_l$), that is
\begin{align}
	D_l \circ C_l &= \id - K_{l-1} \circ H_{l-1} - H_l \circ K_l ,
	\\
	C_l \circ D_l &= \id - K'_{l-1} \circ H'_{l-1} - H'_l \circ K'_l ,
\end{align}
with the special end cases
\begin{align}
	D_{l_{\min}} \circ C_{l_{\min}}
	&= \id - \tH_{l_{\min}-1} - H_{l_{\min}} \circ K_{l_{\min}} , &
	K_{l_{\min}} \circ \tH_{l_{\min}-1} &= 0 ,
	\\
	C_{l_{\min}} \circ D_{l_{\min}}
	&= \id - \tH'_{l_{\min}-1} - H'_{l_{\min}} \circ K'_{l_{\min}} , &
	K'_{l_{\min}} \circ \tH'_{l_{\min}-1} &= 0 ,
	\\
	D_{l_{\max}+1} \circ C_{l_{\max}+1}
	&= \id - H_{l_{\max}} \circ K_{l_{\max}} - \tH_{l_{\max}+1} , &
	\tH_{l_{\max}+1} \circ K_{l_{\max}}  &= 0 ,
	\\
	C_{l_{\max}+1} \circ D_{l_{\max}+1}
	&= \id - H'_{l_{\max}} \circ K'_{l_{\max}} - \tH'_{l_{\max}+1} , &
	\tH'_{l_{\max}+1} \circ K'_{l_{\max}}  &= 0 ,
\end{align}
where the $\tH$ maps are allowed to be arbitrary, as long as they
satisfy the given identities.
\end{dfn}

Note that our definition of equivalence up to homotopy between complexes
of finite length is set up in a way that allows an equivalence between
longer complexes to be truncated and still remain an equivalence.

Next, we restrict our attention to the case where all maps are given by
differential operators.

\begin{dfn}[{cf.~\cite[Def.10.5.4]{seiler-inv}, \cite[Def.1.2.2]{tarkhanov}}] \label{def:compat}
Given a differential operator $K$, any composable differential operator
$L$ such that $L\circ K = 0$ is a \emph{compatibility operator} for $K$.
If $K_1$ is a compatibility operator for $K$, it is called \emph{complete}
or \emph{universal} when any other compatibility operator $L$ can be
factored through $L = L'\circ K_1$ for some differential operator $L'$. A
complex of differential operators $K_l$, $l=0,1,\ldots$ is called a
compatibility complex for $K$ when $K_0 = K$ and, for each $l\ge 1$,
$K_l$ is a complete compatibility operator for $K_{l-1}$.
\end{dfn}

\begin{dfn} \label{def:loc-exact}
Given a (possibly infinite) complex of differential operators $K_l$,
$l=l_{\min}, l_{\min}+1, \ldots, l_{\max}$, we say that it is
\emph{locally exact} at a point $x$ when, for every $l_{\min} < l <
l_{\max}$, for every smooth function $f_l$ defined on an open
neighborhood $U \ni x$ such that $K_l[f_l] = 0$, there exists a smooth
function $g_{l-1}$ defined on a possibly smaller open neighborhood $V
\ni x$ such that $f_l = K_{l-1}[g_{l-1}]$. \emph{Locally exact} (without
specifying a point $x$) means locally exact at every $x$.
\end{dfn}

Note that a complete compatibility operator, say $K_1$, need not be
unique. But, by its universal factorization property, any two
compatibility operators, say $K_1$ and $K'_1$, must factor through each
other, $K_1 = L_1 \circ K'_1$ and $K'_1 = L'_1\circ K_1$ for some
differential operators $L_1$ and $L'_1$.

Given two composable operators, $K$ and $K_1$, the compatibility
condition $K_1\circ K = 0$ is very easy to check. On the other hand, it
may be quite challenging to check completeness/universality. One way to
do it is to compare $K$ and $K_1$ with another pair of operators which
are already known to satisfy the universality condition.

\begin{lem} \label{lem:compat-sufficient}
Consider two complexes of differential operators $K_l$ and $K'_l$, for
$l=0,1$. If these complexes are equivalent up to homotopy, as in the
diagram
\begin{equation}
\begin{tikzcd}[column sep=huge,row sep=huge]
	\bullet \ar{r}{K_0}
		\ar[swap,shift right]{d}{C_0} \&
	\bullet \ar{r}{K_1}
		\ar[swap,shift right]{d}{C_1}
		\ar[dashed,bend left]{l}{H_0} \&
	\bullet
		\ar[swap,shift right]{d}{C_2}
		\ar[dashed,bend left]{l}{H_1}
	\\
	\bullet \ar[swap]{r}{K'_0}
		\ar[swap,shift right]{u}{D_0} \&
	\bullet \ar[swap]{r}{K'_1}
		\ar[swap,shift right]{u}{D_1}
		\ar[swap,dashed,bend right]{l}{H'_0} \&
	\bullet
		\ar[swap,shift right]{u}{D_2}
		\ar[swap,dashed,bend right]{l}{H'_1}
\end{tikzcd} ,
\end{equation}
where we really only require all squares to be commutative and the
identities $D_1\circ C_1 = \id - K_0 \circ H_0 - H_1 \circ K_1$ and
$C_1\circ D_1 = \id - K'_0 \circ H'_0 - H'_1 \circ K'_1$ to hold, then
$K_1$ is universal iff $K'_1$ is universal.

Furthermore, the complex $K_l$, $l=0,1$, is locally exact iff the
complex $K'_l$, $l=0,1$, is locally exact.
\end{lem}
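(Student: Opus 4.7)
The plan is to reduce both halves of the lemma to short diagram chases using the two homotopy identities and the square commutativity $K'_l\circ C_l = C_{l+1}\circ K_l$, $K_l\circ D_l = D_{l+1}\circ K'_l$. Because the hypotheses are symmetric under the exchange $(K_\bullet, C_\bullet, H_\bullet) \leftrightarrow (K'_\bullet, D_\bullet, H'_\bullet)$, in each claim I will only establish one direction; the converse then follows by swapping primed and unprimed data.

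\emph{Universality.} Assume $K'_1$ is universal and let $L$ be any compatibility operator for $K_0$, i.e.\ $L\circ K_0 = 0$. Commutativity of the left square, $D_1\circ K'_0 = K_0\circ D_0$, gives $L\circ D_1\circ K'_0 = L\circ K_0\circ D_0 = 0$, so $L\circ D_1$ is a compatibility operator for $K'_0$. Universality of $K'_1$ then furnishes an $M$ with $L\circ D_1 = M\circ K'_1$. Precomposing the homotopy identity $D_1\circ C_1 = \id - K_0\circ H_0 - H_1\circ K_1$ by $L$ annihilates the $K_0$-term and produces $L\circ D_1\circ C_1 = L - L\circ H_1\circ K_1$. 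Substituting $L\circ D_1 = M\circ K'_1$ and then using the right square $K'_1\circ C_1 = C_2\circ K_1$, we obtain
\[
	L = (L\circ H_1 + M\circ C_2)\circ K_1,
\]
which is the factorization through $K_1$ required for its universality.

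\emph{Local exactness.} Now assume the primed complex is locally exact and let $f$ be smooth on $U\ni x$ with $K_1[f] = 0$. Then $K'_1[C_1[f]] = C_2[K_1[f]] = 0$, so by the hypothesis there is a smooth $g'$ on a possibly smaller $V\ni x$ with $C_1[f] = K'_0[g']$. Applying $D_1$ and using $D_1\circ K'_0 = K_0\circ D_0$ gives $D_1[C_1[f]] = K_0[D_0[g']]$, while the same homotopy identity evaluated on $f$ simplifies, via $K_1[f]=0$, to $D_1[C_1[f]] = f - K_0[H_0[f]]$. Equating the two expressions yields $f = K_0\bigl(H_0[f] + D_0[g']\bigr)$ on $V$, the desired local preimage under $K_0$.

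The only real obstacle is notational vigilance about which maps go in which direction; once one recognizes that the homotopy identity together with commutativity of both squares is exactly the minimal structure needed to transport a factorization or a local preimage across the equivalence, both proofs collapse to the two short computations above, and the reverse implications come for free by the symmetry between the two complexes.
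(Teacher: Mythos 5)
Your proof is correct and is essentially identical to the paper's own argument: both directions transport the factorization (resp.\ the local preimage) across the equivalence using the square commutativity and the single homotopy identity $D_1\circ C_1=\id-K_0\circ H_0-H_1\circ K_1$, arriving at $L=(M\circ C_2+L\circ H_1)\circ K_1$ and $f=K_0[H_0[f]+D_0[g']]$. Your write-up is in fact slightly cleaner than the paper's, which contains a couple of harmless typos in the corresponding displayed computations.
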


\begin{proof}
Without loss of generality, assume that $K'_1$ is universal. Let $L\circ
K_0 = 0$. Then $(L \circ D_1) \circ K'_0 = L \circ K_0 \circ D_0 = 0$.
By universality of $K'_1$, there exists a differential operator $L'$
such that $L \circ D_1 = L'\circ K'_1$. Recall that from our hypotheses
that $D_1\circ C_1 = \id - K_0 \circ H_0 - H_1 \circ K_1$. But then
\begin{align}
\notag
	L &= L\circ (D_1\circ C_1 + K_0\circ H_0 + H_1\circ K_1) \\
\notag
	&= L' \circ (K'_1\circ C_1) + (L\circ H_1) \circ K_1 \\
\notag
	&= (L'\circ C_2) \circ K_1 + (L\circ H_1) \circ K_1
	= L'' \circ K_1 ,
\end{align}
where $L'' = L'\circ C_2 + L\circ H_1$. This demonstrates the
universality of $K_1$.

Next, without loss of generality, assume that $K'_l$ is locally exact.
Pick a point $x$, an open neighborhood $U \ne x$, and a smooth function
$f$ such that $K_1[f] = 0$. Then $K'_1[C_1[f]] = C_2[K_1[f]] = 0$.
Hence, by local exactness, there exists a smooth $g'$ defined on a
possibly smaller open neighborhood $V \ni x$ such that $K'_0[g'] =
C_1[f]$. Setting $g = D_0[g'] + H_0[f]$ on $V\ni x$, direct
calculation shows that
\begin{align}
	K_0[g]
\notag
	&= K_0[D_0[g']] + K_0[H_0[f]]
	= D_1[K'_0[g']] + K_0[H_0[f]] \\
\notag
	&= D_1\circ C_1[f] + K_0[H_0[f]]
	= f - H_1[K_1[f]] \\
	&= f ,
\end{align}
which shows that the $K_l$ complex is also locally exact.
\end{proof}

Next, we will show how to construct a universal compatibility operator
for a differential operator $K$ if it is equivalent, in the sense of a
complex consisting of one operator, to some operator with a known
universal compatibility operator. This construction is not unknown in
homological algebra, but we include a proof for completeness.

\begin{lem}[{cf.~\cite[Prp.1.2.7]{tarkhanov}}] \label{lem:lift-compat}
Consider differential operators $K_0$ and $K'_0$. Suppose that $K_0$ and
$K'_0$ are equivalent up to homotopy, in the sense of the diagram
\begin{equation}
\begin{tikzcd}[column sep=huge,row sep=huge]
	\ar[loop left]{}{\tH}
	\bullet \ar{r}{K_0}
		\ar[swap,shift right]{d}{C_0} \&
	\bullet
		\ar[swap,shift right]{d}{C_1}
		\ar[dashed,bend left]{l}{H_0}
	\\
	\ar[loop left]{}{\tH'}
	\bullet \ar[swap]{r}{K'_0}
		\ar[swap,shift right]{u}{D_0} \&
	\bullet 
		\ar[swap,shift right]{u}{D_1}
		\ar[swap,dashed,bend right]{l}{H'_0} 
\end{tikzcd} ,
\end{equation}
where we require all squares to be commutative and the identities
$D_0\circ C_0 = \id - \tH - H_0 \circ K_0$, $C_0 \circ D_0 = \id
- \tH' - H'_0 \circ K'_0$ to hold, with $K_0 \circ \tH = 0$ and $K'_0
\circ \tH' = 0$. Then, if a universal compatibility operator $K'_1$ for
$K'_0$ is known, we can complete the above diagram to the following
equivalence up to homotopy
\begin{equation} \label{eq:lifted-compat}
\begin{tikzcd}[column sep=8em,row sep=7em]
	\bullet \ar{r}{K_0}
		\ar[swap,shift right]{d}{C_0} \&
	\bullet \ar{r}{K_1
		= \begin{bmatrix}
			\id - K_0\circ H_0 - D_1\circ C_1 \\
			K'_1 \circ C_1\end{bmatrix}}
		\ar[swap,shift right]{d}{C_1}
		\ar[dashed,bend left]{l}{H_0} \&
	\bullet
		\ar[swap,shift right]{d}{C_2
			= \begin{bmatrix}
				0 & \id\end{bmatrix}}
		\ar[dashed,bend left]{l}{H_1
			= \begin{bmatrix}
				\id & 0\end{bmatrix}}
	\\
	\bullet \ar[swap]{r}{K'_0}
		\ar[swap,shift right]{u}{D_0} \&
	\bullet \ar[swap]{r}{K'_1}
		\ar[swap,shift right]{u}{D_1}
		\ar[swap,dashed,bend right]{l}{H'_0} \&
	\bullet
		\ar[swap,shift right]{u}{D_2
			= \begin{bmatrix}
				D'_2 \\ \id - K'_1\circ H'_1\end{bmatrix}}
		\ar[swap,dashed,bend right]{l}{H'_1}
\end{tikzcd} ,
\end{equation}
with some differential operators $H'_1$, $D_2'$.
\end{lem}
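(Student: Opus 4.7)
The plan is to take the matrix prescriptions for $K_1$, $C_2$, $H_1$, $D_2$ as an ansatz and verify, step by step, the identities of an equivalence up to homotopy at degree $1$, constructing the two unspecified operators $H'_1$ and $D'_2$ by two applications of the universal factorization property of $K'_1$.

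First I would verify the cochain condition $K_1\circ K_0 = 0$ block by block. The bottom block gives $K'_1 C_1 K_0 = K'_1 K'_0 C_0 = 0$ from $K'_1 K'_0 = 0$ and the commuting square $C_1 K_0 = K'_0 C_0$. For the top block, abbreviating $P_1 := \id - K_0 H_0 - D_1 C_1$, one expands $P_1 K_0 = K_0 - K_0 H_0 K_0 - K_0 D_0 C_0$ using $D_1 K'_0 = K_0 D_0$, then substitutes $D_0 C_0 = \id - \tH - H_0 K_0$ and uses $K_0 \tH = 0$ to collapse the expression to zero. With this in hand, the degree-$1$ homotopy identity $D_1 C_1 = \id - K_0 H_0 - H_1 K_1$ is automatic, since by construction $H_1 K_1 = P_1$, and the commuting square $C_2 K_1 = K'_1 C_1$ is immediate from the block forms.

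The constructions of $H'_1$ and $D'_2$ both use universality of $K'_1$. For $H'_1$: from $C_1 D_1 K'_0 = C_1 K_0 D_0 = K'_0 C_0 D_0 = K'_0(\id - \tH' - H'_0 K'_0)$ together with $K'_0 \tH' = 0$, I obtain $(C_1 D_1 - \id + K'_0 H'_0) K'_0 = 0$; by universality this operator factors through $K'_1$, and naming the factor $-H'_1$ yields the second homotopy identity $C_1 D_1 = \id - K'_0 H'_0 - H'_1 K'_1$. For $D'_2$: since $P_1 K_0 = 0$, also $P_1 D_1 K'_0 = P_1 K_0 D_0 = 0$, so by the same universality $P_1 D_1 = D'_2 \circ K'_1$ for some $D'_2$. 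With these choices the remaining square $D_2 K'_1 = K_1 D_1$ holds in both blocks: the top block is exactly $D'_2 K'_1 = P_1 D_1$ by construction, and the bottom block $(\id - K'_1 H'_1) K'_1 = K'_1 C_1 D_1$ follows from left-multiplying the $H'_1$ identity by $K'_1$ and using $K'_1 K'_0 = 0$.

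The main bookkeeping obstacle will be the degree-$2$ end caps demanded by Definition~\ref{def:homalg}: producing auxiliary $\tH_2, \tH'_2$ with the requisite annihilation identities. From $C_2 D_2 = \id - K'_1 H'_1$ one gets $\tH'_2 = 0$ for free. Setting $\tH_2 := \id - K_1 H_1 - D_2 C_2$, the identity $\tH_2 K_1 = 0$ then follows from a $2 \times 1$ block computation that reduces, once again, to $P_1 K_0 = 0$, the definition $D'_2 K'_1 = P_1 D_1$, and $K'_1 K'_0 = 0$. I do not expect any new ideas to be required beyond the two invocations of universality above, but the matrix bookkeeping is the step most likely to hide a sign or indexing error.
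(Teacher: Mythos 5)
Your proposal is correct and follows essentially the same route as the paper's proof: establish $(\id - K_0\circ H_0 - D_1\circ C_1)\circ K_0 = 0$ and its primed analogue from the degree-$0$ homotopy data, invoke universality of $K'_1$ twice to manufacture $H'_1$ and $D'_2$, and then verify the remaining squares and homotopy identities by block computation. The only difference is cosmetic framing (you present the first computation as the cochain condition $K_1\circ K_0=0$ rather than as the statement that $C_1,D_1$ are mutual inverses up to homotopy), and your handling of the degree-$2$ end caps makes explicit what the paper dispatches with ``the same argument as in the first paragraph.''
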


\begin{proof}
From our hypotheses, $C_0$ and $D_0$ are mutual inverses, up to a
homotopy correction. Our first observation is that the same property
then holds for $C_1$ and $D_1$. Namely,
\begin{align}
	(\id - K_0 \circ H_0 - D_1\circ C_1) \circ K_0
\notag
	&= K_0 - K_0 \circ H_0 \circ K_0 - K_0 \circ (D_0 \circ C_0) \\
\notag
	&= K_0 \circ (\id - H_0\circ K_0 - D_0 \circ C_0) \\
	&= K_0 \circ \tH = 0 , \\
	(\id - K'_0 \circ H'_0 - C_1\circ D_1) \circ K'_0
	&= 0 ,
\end{align}
where the second identity is completely analogous to the first one. Then
we also have
\begin{equation}
	(\id - K_0 \circ H_0 - D_1\circ C_1) \circ D_1 \circ K'_0
	= (\id - K_0 \circ H_0 - D_1\circ C_1) \circ K_0 \circ D_0
	= 0 .
\end{equation}
Since we know that $K'_1$ is a universal compatibility operator for
$K'_0$, there must exist differential operators $H'_1$ and $D'_2$ such
that
\begin{align}
	\id - K'_0 \circ H'_0 - C_1\circ D_1
	&= H'_1 \circ K'_1 ,
	\\
	(\id - K_0 \circ H_0 - D_1\circ C_1) \circ D_1
	&= D'_2 \circ K'_1 .
\end{align}

Next, defining the operators $K_1$, $H_1$, $C_2$ and $D_2$ as in the
diagram~\eqref{eq:lifted-compat}, the remaining identities needed to show
that this diagram is a homotopy equivalence are
\begin{subequations}
\begin{align}
	D_1 \circ C_1 &= \id - K_0\circ H_0 - H_1 \circ K_1 , \\
	C_2 \circ K_1 &= K'_1 \circ C_1 , \\
	D_2 \circ K'_1 &= K_1 \circ D_1 , \\
	(\id - K_1 \circ H_1 - D_2\circ C_2) \circ K_1 &= 0 , \\
	(\id - K'_1 \circ H'_1 - C_2\circ D_2) \circ K'_1 &= 0 .
\end{align}
\end{subequations}
The last two, (d) and (e), follow from the same argument as in the first
paragraph of this proof. The first two, (a) and (b), follow from direct
calculation and the identities that we have already established earlier
in the proof. To get (c), it remains to check the following identity
\begin{equation}
	(K'_1 \circ C_1) \circ D_1
	= K'_1 \circ (\id - K'_0 \circ H'_0 - H'_1\circ K'_1)
	= (\id - K'_1 \circ H'_1) \circ K'_1 .
\end{equation}
This completes the proof.
\end{proof}

\begin{dfn} \label{def:flat-conn}
A \emph{(linear) connection} $\DD$ on a vector bundle $V\to M$ is a
linear differential operator $\DD\colon \Secs(V) \to \Secs(T^*M
\otimes_M V)$ that in local coordinates $(x^a)$ has the form $\DD_a v(x)
= \frac{\del}{\del x^a} v(x) + \gamma_a(x) v(x)$, with the $\gamma_a(x)$
being smooth local sections of the endomorphism bundle of $V\to M$. The
connection is \emph{flat} when locally its components commute,
$[\mathbb{D}_a, \mathbb{D}_b] = 0$. The equation $\DD f = 0$
is called the \emph{($\DD$-)flat section equation}.

A \emph{flat connection} $\DD$ gives rise to a complex of differential
operators $d^\DD_l \colon \Secs(\Lambda^l T^*M \otimes_M V) \to
\Secs(\Lambda^{l+1} T^*M \otimes_M V)$, $l=0,1,\ldots,n,\ldots$, with
$d^\DD_0 = \DD$, locally $(d^\DD_l w)_{a_1 \cdots a_{l+1}} = (l+1)
\DD_{[a_1} w_{a_2\cdots a_{l+1}]}$ for $0<l<n$, and $d^\DD_l = 0$ for $l\ge
n$, the \emph{($\DD$-)twisted de~Rham complex}.
\end{dfn}

\begin{dfn} \label{def:fintype}
A differential operator $K$ defines a \emph{PDE of regular finite type} $K[f] =
0$, when $K_0 = K$ is equivalent to a flat connection $K'_0=\DD$, in
the sense of one operator complexes (Definition~\ref{def:homalg}), with
the extra requirement that $\tH_{-1} = 0$ and $\tH'_{-1} = 0$.
\end{dfn}

\begin{rem}
One can find in the literature different definitions for PDEs of
\emph{finite type}. Our Definition~\ref{def:fintype}, with the extra
\emph{regular} modifier, is most convenient for the purposes of this
work and is well-known to be equivalent to other common definitions when
the regularity conditions are precisely specified. The relation between
the different definitions is discussed in the Appendix and an elementary
proof of the equivalence is given in Proposition~\ref{prp:fintype-conn}.
\end{rem}

Once we know that we are faced with a PDE of regular finite type, we can exploit
its equivalence to the equation for $\DD$-flat sections, together
with our preceding results and the following well known proposition.

\begin{prp} \label{prp:de-rham}
Given a flat connection $\DD$, the corresponding twisted de~Rham
complex $d^\DD_l$, $l=0,1,\ldots,n,\ldots$ is locally exact and is
also a compatibility complex for $\DD = d^\DD_0$.
\end{prp}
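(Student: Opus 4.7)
The plan is to reduce both claims to the corresponding classical statements for the untwisted (scalar) de~Rham complex on an $n$-manifold. Since $\nabla$ is flat, on a sufficiently small open neighborhood $U$ of any point the Frobenius theorem produces a local frame $e_1,\dots,e_r$ of $\nabla$-flat sections that trivializes the vector bundle on which $\nabla$ acts. Writing an arbitrary bundle-valued $l$-form as $\omega = \omega^i\otimes e_i$, the twisted differential then acts componentwise, $d^\nabla_l\omega = (d\omega^i)\otimes e_i$, so on $U$ the twisted de~Rham complex becomes a direct sum of $r$ copies of the ordinary de~Rham complex of scalar-valued forms.

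Granted this reduction, local exactness is immediate from the classical Poincar\'e lemma, applied componentwise on a star-shaped neighborhood contained in $U$.

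The compatibility-complex property likewise reduces to the corresponding statement for the untwisted de~Rham complex, namely that each $d_{l+1}\colon\Omega^{l+1}\to\Omega^{l+2}$ is a universal compatibility operator for $d_l$. For this I would appeal to the formal theory of PDEs: the symbol sequence of $d_\bullet$ at any nonzero covector $\xi\in T^*M$ is the Koszul complex for exterior multiplication $\xi\wedge{\cdot}$, which is exact in positive degrees, while the Poincar\'e lemma just established provides the local exactness (and hence formal integrability) needed at the operator level. Combining these two ingredients via the standard lifting results of the formal theory~\cite{spencer, goldschmidt-lin, seiler-inv} then yields the desired universal factorization at each step.

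The main obstacle is precisely this last step: passing from symbol-level Koszul exactness to a factorization by a genuine \emph{differential} operator (as opposed to, say, the integral Poincar\'e homotopy, which is local but not differential); this is dispatched uniformly by the cited formal-theory machinery. At the trailing end of the complex the claim is trivial, since $d^\nabla_{n-1}$ is locally surjective by the Poincar\'e lemma, forcing any of its compatibility operators to vanish identically and hence to factor through the subsequent zero operator.
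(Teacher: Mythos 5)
The paper does not actually prove Proposition~\ref{prp:de-rham}; it is invoked as a ``well known proposition'' with references to the formal theory of PDEs, so there is no in-paper argument to compare yours against line by line. On its own merits your proof is correct and follows the standard route. The Frobenius-theorem trivialization of a flat connection by a local frame of $\nabla$-flat sections does reduce the twisted complex, on a small enough neighborhood, to $r$ copies of the scalar de~Rham complex, and the Poincar\'e lemma then gives local exactness; your treatment of the tail is also fine ($d^\nabla_{n-1}$ is locally surjective, so any compatibility operator for it is a local operator vanishing near every point, hence zero, hence factors through $0$; for $l>n$ the bundles are zero and there is nothing to check). The one step you compress is the passage from symbol-level Koszul exactness to the universal factorization property. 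The ingredient actually used there is the \emph{formal} (polynomial-coefficient) Poincar\'e lemma, i.e.\ exactness of the jet sequences $J^{k+1}(\Lambda^l)\to J^k(\Lambda^{l+1})\to J^{k-1}(\Lambda^{l+2})$: given this, an order-$k$ operator $L$ with $L\circ d_l=0$ vanishes on $\im p_{k+1}(d_l)=\ker p_k(d_{l+1})$, hence descends to the image subbundle $\im p_k(d_{l+1})\subseteq J^{k-1}(\Lambda^{l+2})$ and extends to a bundle map on all of $J^{k-1}(\Lambda^{l+2})$, i.e.\ to a differential operator $L'$ of order $k-1$ with $L=L'\circ d_{l+1}$. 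Two small points deserve mention: first, ``local exactness $\Rightarrow$ formal integrability'' is not quite the implication you want --- the jet-level exactness is what the cited machinery delivers, and it is most cleanly obtained directly from the Koszul exactness by the $\delta$-sequence induction rather than from the smooth Poincar\'e lemma; second, since the relevant kernels and images of the jet prolongations are globally defined subbundles, the factorization obtained this way is global, whereas your frame-by-frame local reduction by itself would only yield factorizations on the trivializing neighborhoods. With those clarifications the argument is complete and is exactly the kind of proof the paper's references supply.
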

\begin{proof}
A well-known property of flat connections is that they are locally
equivalent to $\operatorname{rk}_M V$ copies of exterior derivative
operator (we can set the $\gamma_a(x)$ matrices to zero by locally
choosing the fiber coordinates on $V\to M$ adapted to the foliating by
flat sections). Thus, locally, the twisted de~Rham complex turns into
$\operatorname{rk}_M V$ copies of the ordinary de~Rham complex, which is
known to satisfy all the desired properties (it suffices to combine the
Poincar\'e lemma~\cite[Prp.1.2.41]{tarkhanov} with
\cite[Prps.1.2.13,1.2.39]{tarkhanov}).
\end{proof}

\begin{thm} \label{thm:fintype-compat}
Let $K[f] = 0$ be PDE of regular finite type. Then, starting from an equivalence
of $K$ with a flat connection $\DD$, we can explicitly construct a
locally exact compatibility complex $K_l$ for $K$, $l=0,1,\ldots$.
\end{thm}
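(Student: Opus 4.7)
The plan is to iterate Lemma~\ref{lem:lift-compat}, using the twisted de~Rham complex of $\nabla$ as the already-known compatibility complex at every stage. By the finite-type hypothesis, $K_0 = K$ and $K'_0 = \nabla$ start out equivalent as one-operator complexes in precisely the sense required by Lemma~\ref{lem:lift-compat}. By Proposition~\ref{prp:de-rham}, the twisted de~Rham complex $K'_l = d^\nabla_l$ is a locally exact compatibility complex for $\nabla$, so $K'_1 = d^\nabla_1$ is a universal compatibility operator for $K'_0$. One application of Lemma~\ref{lem:lift-compat} then produces a universal compatibility operator $K_1$ for $K_0$ together with all the data of a two-operator homotopy equivalence between $(K_0, K_1)$ and $(K'_0, K'_1)$.

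The central point is to show that this construction feeds back into itself. Truncating the extended equivalence to its rightmost pair of columns, we read off maps $C_1, D_1$ between the ranges of $K_0$ and $K'_0$ satisfying
\begin{align*}
	D_1 \circ C_1 &= \id - K_0 \circ H_0 - H_1 \circ K_1 , \\
	C_1 \circ D_1 &= \id - K'_0 \circ H'_0 - H'_1 \circ K'_1 .
\end{align*}
Relabelling $\tH := K_0 \circ H_0$ and $\tH' := K'_0 \circ H'_0$, we have $K_1 \circ \tH = K_1 \circ K_0 \circ H_0 = 0$ (and the analogous identity on the primed side), so the hypotheses of Lemma~\ref{lem:lift-compat} are satisfied by the single-operator pair $(K_1, K'_1)$. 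Since $K'_2 = d^\nabla_2$ is itself a universal compatibility operator for $K'_1$, a second application of the lemma produces $K_2$ as a universal compatibility operator for $K_1$ and extends the equivalence to stage $2$. Proceeding inductively yields the entire sequence $K_l$, $l=0,1,\ldots$, in which each $K_{l+1}$ is a universal compatibility operator for $K_l$, i.e.\ a full compatibility complex.

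For local exactness, at each stage the consecutive pair $(K_{l-1}, K_l)$ together with $(K'_{l-1}, K'_l)$ forms, by construction, a homotopy equivalence of two-term complexes. Since the twisted de~Rham complex is locally exact (Proposition~\ref{prp:de-rham}), Lemma~\ref{lem:compat-sufficient} transports local exactness to the $K_l$ complex at every interior stage. Termination comes for free from the fact that the de~Rham complex terminates, $d^\nabla_l = 0$ for $l \ge n$: the block $K'_{l+1} \circ C_{l+1}$ in the explicit formula for $K_{l+1}$ given by Lemma~\ref{lem:lift-compat} then vanishes, and iterating one further stage absorbs the remaining block $\id - K_l\circ H_l - D_{l+1}\circ C_{l+1}$ into a trivial tail.

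The principal bookkeeping obstacle lies entirely in the induction step, and hinges on the observation just made: the homotopy term $K_{l-1} \circ H_{l-1}$ left over from each iteration is automatically annihilated on the left by $K_l$, so may be recycled as the next iteration's $\tH$. Once this repackaging is noted, no further cleverness is required: the result is the formal transport of the de~Rham scaffold from $\nabla$ back to $K$, carrying both universality and local exactness with it.
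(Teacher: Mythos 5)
Your proof is correct and follows essentially the same route as the paper: iterate Lemma~\ref{lem:lift-compat} against the twisted de~Rham complex, and then invoke Lemma~\ref{lem:compat-sufficient} together with Proposition~\ref{prp:de-rham} to transfer universality and local exactness. Your explicit observation that the leftover homotopy term $K_{l-1}\circ H_{l-1}$ can be recycled as the next iteration's $\tH$ (since $K_l \circ K_{l-1}\circ H_{l-1} = 0$) is exactly the point the paper compresses into ``iterating the previous argument,'' so the induction is sound.

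One caveat: your closing claim that ``termination comes for free'' is false, and the paper explicitly refutes it in the remark following the theorem --- for $K=\id$ the construction yields $K_{2k}=\id$, $K_{2k+1}=0$ indefinitely, because the block $\id - K_l\circ H_l - D_{l+1}\circ C_{l+1}$ need not vanish even after $d^\nabla_{l+1}=0$; forcing $K_l=0$ for $l>n$ requires the ad hoc simplifications discussed after the proof, not the bare iteration. Since the theorem only asserts an infinite locally exact compatibility complex and makes no termination claim, this does not damage your argument, but the assertion should be dropped.
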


\begin{proof}
The proof is by induction. Setting $K_0 = K$ and $K'_0 = \DD =
d^\DD_0$, the regular finite type hypothesis implies that we can satisfy the
hypotheses of Lemma~\ref{lem:lift-compat} and extend the equivalence of
$K_0$ and $K'_0$ to an equivalence of $K_l$ and $K'_l$, $l=0,1$, with
$K_1$ explicitly constructed. Suppose, inductively, that we have an
equivalence up to homotopy between the complexes $K_l$ and $K'_l =
d^\DD_l$, $l=0,1,\ldots,m$, for some $m>0$. Iterating the previous
argument, we can extend it to an equivalence up to homotopy between the
complexes $K_l$ and $K'_l = d^\DD_l$, $l=0,1,\ldots,m+1$, where
$K_{m+1}$ is an explicitly constructed as in
Lemma~\ref{lem:lift-compat}.

Since the above construction of the complex $K_l$, $l=0,1,\ldots$, comes
with an equivalence up to homotopy with the twisted de~Rham complex
$K'_l = d^\DD_l$, $l=0,1,\ldots$, Lemma~\ref{lem:compat-sufficient}
and Proposition~\ref{prp:de-rham} allow us to conclude that $K_l$ is both
locally exact and is a compatibility complex for $K$. That is, $K_1$ is
a universal compatibility operator for $K_0 = K$ and $K_{l+1}$ is a
universal compatibility operator for $K_l$, for $l>1$.
\end{proof}

Note that, even though the twisted de~Rham complex $d^\DD_l$
terminates after $l=n$, or rather becomes trivial $d^\DD_l = 0$ for
$l\ge n$, the result of Theorem~\ref{thm:fintype-compat} is not
guaranteed to produce a complex that eventually terminates in the same
way. For instance, the simple example $K = \id$, produces the complex
$K_l$, $l=0,1,\ldots$, where $K_{2k} = \id$ and $K_{2k+1} = 0$, with the
source and target of every operator $K_l$ being the same as for $K$. Of
course, in this simple example, we can force this complex to terminate
by setting the target of $K_1 = 0$ to be zero dimensional, and setting
$K_l = 0$ as a map between zero dimensional spaces, for each $l>1$.

Since our goal is not a fully algorithmic construction of compatibility
complexes, but rather one where human intervention is allowed along the
way, we can apply similar simplifications at each step of the iterative
construction given in the proof of Theorem~\ref{thm:fintype-compat}. At
each step, before proceeding to the next one, having obtained the
operator $K_{l+1}$, we can replace it by a potentially simpler operator
$\tilde{K}_{l+1}$ without breaking its universality property. Here, a
trivial, but helpful, observation is that an operator $\tilde{K}_{l+1}$
such that $K_{l+1} = \tilde{K}'_{l+1} \circ \tilde{K}_{l+1}$ is a
universal compatibility operator for $K_l$ whenever $K_{l+1}$ is. This
way, on general principles, we expect to be able to produce
a compatibility complex $K_l$, $l=0,1,\ldots$, that becomes trivial $K_l
= 0$ for $l>n$.

We will not try to give a rigorous proof of the fact that we can always
produce a compatibility complex $K_l$ that trivializes to $K_l = 0$ for
$l>n$. Instead, in the next section, we will present examples of PDEs of
regular finite type with compatibility complexes of finite length. In each case,
we will give an explicit equivalence up to homotopy of a given complex
to a twisted de Rham complex, which together with
Lemma~\ref{lem:compat-sufficient} and Proposition~\ref{prp:de-rham}
serves as a witness to the fact that it is a compatibility complex.
However, the reader should understand that this compatibility complex
was produced by the construction given in the proof of
Theorem~\ref{thm:fintype-compat}, with intermediate simplifications as
described above.

\section{Killing equation} \label{sec:killing}

Consider an $n$-dimensional (pseudo-)Riemannian manifold $(M,g)$, with
Levi-Civita connection $\nabla$. In Lorentzian signature, we refer to
$(M,g)$ as a \emph{spacetime}. The \emph{Killing equation} is an
equation on sections $v \in \Secs(TM)$, namely
\begin{equation}
	K_{ab}[v] = \nabla_a v_b + \nabla_b v_a = 0 .
\end{equation}
The Lie derivative identity $K[v] = \Lie_v g$ implies that solutions of
the Killing equations are infinitesimal isometries of $(M,g)$. In the
context of \emph{linearized gravity} (that is, the theory of
\emph{linearized Einstein equations}), metric perturbations $h \in
\Secs(S^2T^*M)$ are grouped into \emph{gauge equivalence classes}, $h
\sim h + K[v]$ for $v \in \Secs(TM)$. A differential operator $L[h]$
such that $L\circ K = 0$, a compatibility operator for $K$ in the
terminology of Section~\ref{sec:compat}, is interpreted as a
\emph{(local) gauge-invariant observable} or \emph{gauge-invariant field
combination}~\cite{stewart-walker}. The components of a complete compatibility operator $K_1$
for the Killing operator $K$ can be interpreted, by the universality property, as a generating set for
all gauge-invariants, also known as a \emph{complete set of
gauge-invariant observables}.

It is well known that the Killing equation is of regular finite type
(Definition~\ref{def:fintype}), provided a precise regularity condition holds.
The quickest way to see that is to put it into the so-called
\emph{tractor form}~\cite{eastwood} or the form of the \emph{Killing
transport} equation~\cite[App.B]{geroch-killing}. Namely, we have the
equivalence up to homotopy
\begin{equation} \label{eq:killing-transport-equiv}
\setlength\arraycolsep{1pt}
\renewcommand{\arraystretch}{.8}
\begin{tikzcd}[column sep=4cm,row sep=4cm,every label/.append style =
{font = \small}]
	v^a \ar{r}{v^a \mapsto K_{ab}[v]}
		\ar[swap,shift right]{d}{v^a \mapsto \begin{bmatrix} v_a \\ \nabla_{[b} v_{c]} \end{bmatrix}} \&
	h_{ab} \ar[swap,shift right]{d}{h_{ab} \mapsto \begin{bmatrix} h_{a_1a}/2 \\ \nabla_{[b} h_{c]a_1} \end{bmatrix}}
		\ar[dashed,bend left]{l}{0}
	\\
	\begin{bmatrix} v_a \\ w_{[bc]} \end{bmatrix}
		\ar[swap]{r}{\begin{bmatrix} v_a \\ w_{[bc]} \end{bmatrix} \mapsto \TT_{a_1} \begin{bmatrix} v_a \\ w_{[bc]} \end{bmatrix}}
		\ar[swap,shift right]{u}{\begin{bmatrix} v_a \\ w_{[bc]} \end{bmatrix} \mapsto v^a} \&
	\begin{bmatrix} v_{a_1:a} \\ w_{a_1:[bc]} \end{bmatrix}
		\ar[swap,shift right]{u}{\begin{bmatrix} v_{a_1:a} \\ w_{a_1:[bc]} \end{bmatrix} \mapsto 2 v_{(a:b)}}
		\ar[swap,dashed,bend right]{l}{
			\begin{bmatrix} v_{a_1:a} \\ w_{a_1:[bc]}\end{bmatrix}
			\mapsto
			\begin{bmatrix}0  \\ -v_{[b:c]} \end{bmatrix}
		}
\end{tikzcd} ,
\end{equation}
where the connection operator%
	\footnote{The form of this connection was already derived
	in~\cite[Eq.(B.2)]{geroch-killing}, though there it has a typo. The
	sign of $R_{a_1 dbc}v^d$ is opposite compared to ours.} %
is
\begin{equation} \label{eq:killing-transport}
	\TT_{a_1} \begin{bmatrix}
		v_a \\ w_{[bc]}
	\end{bmatrix}
	= \begin{bmatrix}
		\nabla_{a_1} v_a - w_{[a_1 a]} \\
		\nabla_{a_1} w_{[bc]} + R_{a_1 d bc} v^d
	\end{bmatrix} ,
\end{equation}
which uses the Riemann tensor%
	\footnote{We follow the curvature conventions of~\cite{wald-gr}.}
$R_{abc}{}^d v_d = 2\nabla_{[a} \nabla_{b]} v_c$. We should mention that
the $:$ appearing for instance in $w_{a_1:[bc]}$ or $v_{[b:c]}$ is only
there to visually separate particular groups of indices. Also, we use
square brackets to indicate that some of the tensors must have
anti-symmetrized indices from the outset.

To check the commutativity of the square with downward arrows, first
note that for $w_{bc} = \nabla_{[b} v_{c]} = \nabla_b v_c
- \frac{1}{2} K_{bc}[v]$ we have
\begin{equation}
	2\nabla_{[a} w_{b]c} - R_{abcd} v^d = -\nabla_{[a} K_{b]c}[v] .
\end{equation}
Then, the antisymmetry $w_{cb} = -w_{bc}$ and the algebraic Bianchi
identity $R_{[abc]d} = 0$ allow us to write
\begin{align}
	2(\nabla_a w_{bc} + R_{adbc} v^d)
\notag
	&= (2\nabla_{[a} w_{b]c} - R_{abcd} v^d)
		- (2\nabla_{[b} w_{c]a} - R_{bcad} v^d)
		+ (2\nabla_{[c} w_{a]b} - R_{cabd} v^d) \\
	&= 2 \nabla_{[b} K_{c]a}[v] .
\end{align}
In general, the connection $\TT_a$ is not a flat. In fact, it is flat
iff $(M,g)$ is of constant curvature, $R_{abcd} = \alpha (g_{ac} g_{bd}
- g_{ad} g_{bc})$ for some constant $\alpha$. In general, the dimension
of the space of local solutions of $\TT [\begin{smallmatrix} v \\ w
\end{smallmatrix}] = 0$ need not be constant over $M$. However, when it
is (this is the needed regularity condition), the local solutions span a
vector bundle and the restriction of $\TT$ to this sub-bundle is flat
and the corresponding flat section equation is equivalent to the
original Killing equation (Lemma~\ref{lem:conn-flat-conn}), hence
implying its regular finite type. Thus, the required regularity
condition is that the local solutions of the Killing equation, in
tractor form, span a sub-bundle. Or, equivalently, the pointwise
dimension of the span of these local solutions is constant. However, in
special cases, this particular way of reducing the Killing operator to a
flat connection may not be the preferred one, and a different reduction
might be more convenient.

Consider a tensor $T[g]$ built covariantly out of the metric $g$, the
Riemann tensor $R$, and the covariant derivatives $\nabla R$, $\nabla
\nabla R$, \ldots. Define its linearization $\dot{T}$ about $g$ by the
identity $T[g + \eps h] = T[g] + \eps \dot{T}[h] + O(\eps^2)$. Recall
the standard identity between the Lie derivative, $T$ and $\dot{T}$:
\begin{gather}
\label{eq:lie-killing}
	\Lie_v T[g] = \dot{T}[\Lie_v[g]] = \dot{T}[K[v]] , \\
\label{eq:cov-lie}
	\text{where} \quad
	\Lie_v T_{a\cdots}^{b\cdots}
	= v^c \nabla_c T_{a\cdots}^{b\cdots}
		+ T_{c\cdots}^{b\cdots} \nabla_a v^c
		- T_{a\cdots}^{c\cdots} \nabla_c v^b
		+ \cdots \, ,
\end{gather}
which guarantees that $\dot{T}\circ K = 0$ for the linearization $g
\mapsto g + \eps h$ whenever $T[g] = 0$ or some expression involving
only constants and Kronecker $\delta$'s. This result is sometimes known
as the Stewart-Walker lemma~\cite[Lem.2.2]{stewart-walker}.
Alternatively, when $T[g] \ne 0$, this identity can be used to extract some components of
$v_a$ or $\nabla_a v_b$ by applying $\dot{T}$ to $K[v]$.

Section~\ref{sec:flrw} and~\ref{sec:schw} below will involve some algebraic constructions with
tensors for which it is convenient to introduce the following notation.
For $A_{ac}$ and $B_{bd}$ symmetric tensors,
we denote the \emph{Kulkarni-Nomizu} product by
\begin{equation} \label{eq:kn-prod}
	(A\odot B)_{abcd}
	= A_{ac} B_{bd}
	- A_{bc} B_{ad}
	- A_{ad} B_{bc}
	+ A_{bd} B_{ac} .
\end{equation}
Clearly $A\odot B = B \odot A$ and the result has the symmetry type of
the Riemann tensor. For tensors with two or four indices, we define the
contractions
\begin{equation} \label{eq:dot-con}
	(A\cdot B)_{ab}
	= A_{a}{}^c B_{cb} ,
	\quad \text{and} \quad
	(R \cdot S)_{abcd}
	= R_{ab}{}^{ef} S_{efcd} .
\end{equation}
With these definitions, when $A$, $B$, $C$ and $D$ are symmetric, we
have the useful identities
\begin{gather}
	[(A\odot B) \cdot (C\odot D)]_{abcd}
	= 2[(A\cdot C)\odot(B\cdot D) + (A\cdot D)\odot(B\cdot C)]_{abcd} , \\
	(A\odot B)_{ab}{}^b{}_d
	= [A\cdot B - (\tr A) B - A (\tr B) + B\cdot A]_{ad} .
\end{gather}

\subsection{Constant curvature spacetime} \label{sec:cc}

An $n$-dimensional \emph{constant curvature} spacetime $(M,g)$ of
sectional curvature $\alpha$ is defined by a Riemann curvature tensor of
the form $R_{abcd} = \alpha (g_{ac} g_{bd} - g_{ad} g_{bc})$, with
$\alpha$ a constant. It is well-known that in this case the
\emph{Killing transport} (or \emph{tractor}) connection defined
in~\eqref{eq:killing-transport} is actually flat. Thus, we could use the
methods of Section~\eqref{sec:compat}, in particular
Theorem~\ref{thm:fintype-compat}, to construct a compatibility complex
for the Killing operator $K$ on $(M,g)$. However, in this particular
situation, the compatibility complex for $K$ is already known
independently. It is sometimes called the \emph{Calabi
complex}~\cite{kh-calabi}. We will denote it by $C_i$, $i=0,1,\ldots$,
with $C_0 = K$ and $C_l = 0$ for $l\ge n$. The operator $C_1$ is
essentially the linearized Riemann tensor $R_{a}{}^b{}_c{}^d[g]$, while
$C_2$ is the linearized differential Bianchi identity. The remaining
operators $C_i$, for $i>2$, are essentially higher rank Bianchi
identities. These operators have the following explicit formulas
(see~\cite[Sec.2.2]{kh-calabi}):
\begin{subequations} \label{eq:calabi}
\begin{align}
	C_0[v]_{a:b} &= \nabla_a v_b + \nabla_b v_a , \\
	C_1[h]_{ab:cd} &= (\nabla\nabla \odot h)_{abcd}
		+ \alpha (g\odot h)_{abcd} , \\
	C_2[r]_{abc:de} &= 3 \nabla_{[a} r_{bc]de} , \\
	C_3[b]_{abcd:ef} &= 4 \nabla_{[a} b_{bcd]ef} , \\
	& \quad \vdots \\
	C_i[b]_{a_0\cdots a_i : bc} &= (i+1) \nabla_{[a_0} b_{a_1\cdots a_i] bc}
		\quad (i\ge 2) .
\end{align}
\end{subequations}
The $:$ notation only serves to visually separate groups of indices that
are independently antisymmetric. $C_0$ has the symmetric index pair
$a{:}b$, while $C_1$ has the index group $ab{:}cd$ satisfying the algebraic
symmetries of the Riemann tensor. More generally, the tensor symmetry
type of the target of each $C_i$ operator is best described using Young
symmetrizers (see~\cite[Sec.2.1]{kh-calabi} for complete details).
Ignoring corresponding algebraic symmetry conditions on the tensors
entering into the Calabi complex may violate its property of being a
compatibility complex. Below we list the Young symmetry types and ranks
of the tensor bundles serving as domains and codomains for the operators
of the Calabi complex:
\begin{center}
\begin{tabular}{c|c|cccc}
	& Young type & $n=2$ & $n=3$ & $n=4$ & $n\ge 2$ \\ \hline
	& $(1)$ & $2$ & $3$ & $4$ & $n$ \\
	$C_0$ & & \\
	& $(2)$ & $3$ & $6$ & $10$ & $\frac{n(n+1)}{2}$ \\
	$C_1$ & & \\
	& $(2,2)$ & $1$ & $6$ & $20$ & $\frac{n}{2} \binom{n+1}{3}$ \\
	$C_2$ & & \\
	& $(2,2,1)$ & & $3$ & $20$ & $\frac{n(3-1)}{2} \binom{n+1}{3+1}$ \\
	$\vdots$ & & & & & $\vdots$ \\
	& $(2,2,1^{i-2})$ & & & & $\frac{n(i-1)}{2} \binom{n+1}{i+1}$ \\
	\llap{($1<i$)\quad}
	$C_i$ & & \\
	& $(2,2,1^{i-1})$ & & & & $\frac{ni}{2} \binom{n+1}{i+2}$ \\
	$\vdots$ & & & & & $\vdots$ \\
	$C_{n-1}$ & & \\
	& $(2,2,1^{n-2})$ & $1$ & $3$ & $6$ & $\frac{n(n-1)}{2}$
\end{tabular}
\end{center}

In the diagram~\eqref{eq:killing-transport-equiv} for the equivalence up
to homotopy between $K$ and $\TT$, the top and bottom lines can be
extended to their compatibility complexes, the Calabi $C_l$ complex for
$K$ and the twisted de~Rham complex $d^\TT_l$
(Definition~\ref{def:flat-conn}) for $\TT$. Then, using the same argument
as at the beginning of the proof of Lemma~\ref{lem:lift-compat}, the
vertical equivalence maps can be propagated to the rest of the
complexes, thus giving a full equivalence up to homotopy between them

\begin{equation} \label{eq:calabi-equiv}
\begin{tikzcd}[column sep=large,row sep=large]
	\bullet
		\ar[swap,shift right]{d}
		\ar{r}{C_0 = K} \&
	\bullet
		\ar[dashed,bend left]{l}
		\ar[swap,shift right]{d}
		\ar{r}{C_1} \&
	{}
		\ar[dashed,bend left]{l}
		\ar[phantom,"\cdots"]{r} \&
	{}
		\ar{r}{C_{n-1}} \&
	\bullet
		\ar[dashed,bend left]{l}
		\ar[swap,shift right]{d}
		\ar{r}{0} \&
	\cdots
	\\
	\bullet
		\ar[swap,shift right]{u}
		\ar[swap]{r}{d^\TT_0 = \TT} \&
	\bullet
		\ar[dashed,bend right]{l}
		\ar[swap,shift right]{u}
		\ar[swap]{r}{d^\TT_1} \&
	{}
		\ar[dashed,bend right]{l}
		\ar[phantom,"\cdots"]{r} \&
	{}
		\ar[swap]{r}{d^\TT_{n-1}} \&
	\bullet
		\ar[dashed,bend right]{l}
		\ar[swap,shift right]{u}
		\ar{r}{0} \&
	\cdots
\end{tikzcd}
\end{equation}

We will not discuss the explicit formulas for the vertical equivalence
differential operators. For our purposes it is sufficient to know that
they exist. However, if these operators were to be given explicitly,
then according to Lemma~\ref{lem:compat-sufficient} the equivalence
diagram~\eqref{eq:calabi-equiv} would constitute an independent proof of
the fact that the operators $C_l$ constitute a compatibility complex for
the Killing operator $K$ on the constant curvature spacetime $(M,g)$.

\subsection{FLRW spacetimes} \label{sec:flrw}

Consider an FLRW spacetime $(M,g)$, where $M = I \times F$, with
$I\subset \mathbb{R}$ an open interval with coordinate $t$ and $\dim F =
m$, and $g = -dt^2 + f^2 \tilde{g}^F$, where \emph{the scale factor} $f
= f(t)$ is a positive scalar function and $\tilde{g}^F_{ab} = (\pi^*
g^F)_{ab}$ is the pullback of a constant curvature Riemannian metric
(with sectional curvature $\alpha$) on $F$ along the standard projection
$\pi\colon I\times F \to F$. Let us denote $U_a = -(dt)_a$, and note
that $U_a U^a = -1$ and $f^2 \tilde{g}^F_{ab} = g_{ab} + U_a U_b$.

Below, it will be convenient to extensively rely on the product
structure $M = I \times F$  and naturally decompose all tensors on $M$
with respect to it. For instance, $v^a_b = v_t^t U^a U_b + \tilde{v}^t_b
U^a + \tilde{v}^a_t U_b + \tilde{v}^a_b$, where $v_t^t$ is a scalar and
$\tilde{v}^a_t$, $\tilde{v}^a_b$ and $\tilde{v}^a_b$ are respectively
sections of the pullback bundles $\pi^* TF$, $\pi^* T^*F$ and
$\pi^*(T\otimes T^*)F$ over $M$, or equivalently simply sections $TM$,
$T^*M$ and $(T\otimes T^*)M$ that are annihilated by $g$-contraction
with $U_a$ on any index. It will also be convenient to insert extra
factors of $f$ into some tensor decompositions of this type, with the
sole purpose of simplifying some forthcoming formulas (which we have
unfortunately managed to do only in an ad-hoc way). If $X_a \in
\Secs(T^*F)$ then we will denote its pullback by $\tilde{X}_a = \pi^* X
\in \Secs(T^*M)$ and note that it satisfies $U^a \tilde{X}_a = 0$, where
the metric $g$ is used for contractions. On the other hand, if no such
$X_a$ was introduced previously (which will be true in most cases), we
used $\tilde{X}_a \in \Secs(T^*M)$ to denote any section that satisfies
$U^a \tilde{X}_a = 0$, even if it has non-trivial dependence on $t$. This
should not generate any confusion for the reader. The same convention is
extended to all purely covariant tensors.

Each of the factors has an auxiliary pseudo-Riemannian structure, $(I,
-dt^2)$ and $(F, g^F)$, with corresponding Levi-Civita connections,
$\nabla^I$ and $\nabla^F$, which we can extend to all covariant tensor
fields on the product manifold $M = I \times F$. Let us denote the
extensions of $\nabla^I$ and $\nabla^F$ respectively by $-U_a \del_t$ and $\tnabla$, which
incidentally defines the convenient differential operator $\del_t =
U^a(-U_a \del_t)$. The defining properties of these operators are the
usual Leibniz rule, together with $\del_t t = 1$, $\del_t U_a = 0$, and
$\del_t \tilde{X} = \del_t (\pi^* X) = 0$ for any covariant tensor field
$X$ on $F$, and also $\tnabla t = 0$, $\tnabla U_a = 0$ and $\tnabla
\tilde{X} = \widetilde{\nabla^F X} = \pi^*(\nabla^F X)$ for any
covariant tensor field $X$ on $F$. Since any covariant tensor on $M$ is
locally a limit of sums of products of covariant tensors pulled back
from $I$ and $M$, these properties are sufficient to uniquely define
$\del_t$ and $\tnabla$ as linear differential operators. Note that we
will also frequently use the abbreviation $(-)' = \del_t (-)$.

By the above remarks, any covector field $v_a$ on $M$ can be
parametrized as
\begin{equation} \label{eq:flrw-v-param}
	v_a = A f U_b + f^2 \tX_a
\end{equation}
where $A$ is a scalar function on $M$ and $\tX\in \Secs(\pi^* T^*F)
\subset \Secs(T^*M)$, so that $U^a \tX_a = 0$. Now note that with our
conventions, for any scalar $A$, its exterior derivative is given by
$(dA)_a = -A' U_a + \tnabla_a A$, while the Levi-Civita connection on
$(M,g)$ is given in the above parametrization by
\begin{equation}
	\nabla_a (A f U_b + f^2 \tX_b)
	= (dA)_a f U_b + A f' g_{ab}
		- 2 f' f U_{[a} \tX_{b]}
		- f^2 U_a \tX'_b
		+ f^2 \tnabla_a \tX_b .
\end{equation}
Parametrizing symmetric 2-tensors $h_{ab}$ on $M$ as
\begin{equation} \label{eq:flrw-h-param}
	h_{ab} = p U_a U_b - 2 f^2 U_{(a} \tilde{Y}_{b)} + f^2 \tilde{Z}_{ab} ,
\end{equation}
the Killing operator on covectors $h_{ab} = K_{ab}[v] = 2 \nabla_{(a}
v_{b)}$ becomes
\begin{equation} \label{eq:flrw-killing}
	\begin{bmatrix} p \\ \cmidrule(lr){1-1} \tilde{Y} \\ \tilde{Z} \end{bmatrix}
	= K \begin{bmatrix}
		A \\ \cmidrule(lr){1-1}
		\tilde{X}
	\end{bmatrix}
	= \begin{bmatrix}
		-2(Af)' \\
		\cmidrule(lr){1-1}
		\tX' - f^{-1} \tnabla A \\
		\tilde{K}[\tX] + 2 A f' \tilde{g}^F
	\end{bmatrix}
	= \begin{bmatrix}[c|c]
		-2\del_t f & 0 \\
		\cmidrule(lr){1-1} \cmidrule(lr){2-2}
		-f^{-1} \tnabla & \del_t \\
		2f'\tilde{g}^F & \tilde{K}
	\end{bmatrix}
	\begin{bmatrix}
		A \\ \cmidrule(lr){1-1}
		\tilde{X}
	\end{bmatrix}
\end{equation}
in block matrix operator notation, where $\tilde{K}_{bc}[\tilde{X}] =
2\tnabla_{(b} \tX_{c)}$, which is simply our extension of the Killing
operator on covectors from $(F,g^F)$ to $(M,g)$.

\begin{rem} \label{rem:block-matrix}
Each entry in our block operator matrices is a linear differential
operator between some covariant tensor bundles over $M$. We use vertical
and horizontal lines to further partition block operator matrices with
respect to some special direct sum decomposition of their domain or
codomain. We will use $\id$ to denote the identity endomorphism on any
vector bundle and $0$ to denote the zero morphism between any two vector
bundles. In each case, the domains and codomains of these operators can
be deduced from the context.
\end{rem}

It is worth noting that setting the $A$ component of $v_a$ to zero
simplifies the Killing operator to
\begin{equation} \label{eq:simp-killing-flrw}
	K \begin{bmatrix} 0 \\ \cmidrule(lr){1-1} \id \end{bmatrix}
	= \begin{bmatrix}
		0 \\ \cmidrule(lr){1-1} \del_t \\ \tilde{K}
	\end{bmatrix} .
\end{equation}

For a generic FLRW spacetime (see~\cite[Def.2.1]{cdk} for a breakdown of
FLRW geometries into special and generic classes, based on the properties of the scale
factor $f$), it is well-known that the only Killing vectors are those
that reduce to the Killing vectors of the spatial slices $(F,g^F)$,
appropriately propagated in time. We will see shortly that,
equivalently, each Killing vector on $(M,g)$ has the form $v_a = 0 + f^2
\tX_a$, where $\tilde{K}_{ab}[\tX] = 0$ and $\del_t \tX_a = 0$. Now,
since the spatial slices $(F,g^F)$ are of constant curvature, the
spatial Killing operator $\tilde{K}$ is of the type discussed in
Section~\ref{sec:cc}. This means that $\tilde{K}$ is equivalent up to
homotopy to the flat spatial Killing transport connection $\tilde{\TT}$,
so that the following two operators are also equivalent up to homotopy:
\begin{equation}
	\begin{bmatrix} \del_t \\ \tilde{K} \end{bmatrix}
	\quad \text{and} \quad
	\TT = \begin{bmatrix} \del_t \\ \tilde{\TT} \end{bmatrix} .
\end{equation}
Since both $[\del_t,\tilde{K}] = 0$ and $[\del_t,\tilde{\TT}] = 0$, it
is easy to see that $\TT$ itself defines a flat connection. Let the
operators $\tC_i$ be the extensions of the Calabi
complex~\eqref{eq:calabi} from the constant curvature geometry $(F,g^F)$
to $M$, where we have simply replaced $\nabla^F$ by $\tnabla$ and $g^F$
by $\tilde{g}^F$ whenever necessary. Since such an extension preserves
all operator identities, including the suitably extended equivalence up
to homotopy in~\eqref{eq:calabi-equiv}. Now, it is a straightforward
exercise to check that the twisted de~Rham complex $d^\TT_l$ can be
represented as the bottom line in the following diagram, and also to use
the mentioned identities to construct the corresponding the vertical
differential operators that complete this diagram to an equivalence up
to homotopy:
\begin{equation} \label{eq:flrw-calabi}
\begin{tikzcd}[column sep=2.2cm,row sep=huge]
	\bullet
		\ar[swap,shift right]{d}
		\ar{r}{\begin{bmatrix}\del_t \\ \tC_0 = \tilde{K}\end{bmatrix}} \&
	\bullet
		\ar[dashed,bend left]{l}
		\ar[swap,shift right]{d}
		\ar{r}{\begin{bmatrix}-\tC_0 & \del_t \\ 0 & \tC_1\end{bmatrix}} \&
	{}
		\ar[dashed,bend left]{l}
		\ar[phantom,"\cdots"]{r} \&
	{}
		\ar{r}{\begin{bmatrix}-\tC_{m-1} & \del_t\end{bmatrix}} \&
	\bullet
		\ar[dashed,bend left]{l}
		\ar[swap,shift right]{d}
		\ar{r}{0} \&
	\cdots
	\\
	\bullet
		\ar[swap,shift right]{u}
		\ar[swap]{r}{d^\TT_0 = \TT = \begin{bmatrix}\del_t \\ \tilde{\TT}\end{bmatrix}} \&
	\bullet
		\ar[dashed,bend right]{l}
		\ar[swap,shift right]{u}
		\ar[swap]{r}{d^\TT_1 = \begin{bmatrix}-d^{\tilde{\TT}}_0 & \del_t \\ 0 & d^{\tilde{\TT}}_1 \end{bmatrix}} \&
	{}
		\ar[dashed,bend right]{l}
		\ar[phantom,"\cdots"]{r} \&
	{}
		\ar[swap]{r}{d^\TT_{m-1} = \begin{bmatrix}-d^{\tilde{\TT}}_{m-1} & \del_t\end{bmatrix}} \&
	\bullet
		\ar[dashed,bend right]{l}
		\ar[swap,shift right]{u}
		\ar{r}{0} \&
	\cdots
\end{tikzcd}
\end{equation}
Hence, by Lemma~\ref{lem:compat-sufficient}, the top complex
in~\eqref{eq:flrw-calabi} is also a compatibility complex.

Now, we need to examine the integrability conditions that will help us
establish an explicit equivalence of the full Killing equation
$K_{ab}[v]=0$ with the system of equations $\del_t \tX_a = 0$,
$\tilde{K}_{ab}[\tX] = 0$, whose compatibility complex is given by the
top line of~\eqref{eq:flrw-calabi}. All of that crucially depends on the
structure of the curvature of $(M,g)$.

The Riemann curvature tensor, the Ricci tensor and the Ricci scalar of
$(M,g)$ are (recalling the notation from~\eqref{eq:kn-prod}) given by
\begin{subequations} \label{eq:flrw-curv}
\begin{align}
	R_{abcd} &= \left( g \odot \left[
		\frac{1}{2} \left(\frac{f'^2}{f^2}+\frac{\alpha}{f^2}\right) g
		- \left(\left(\frac{f'}{f}\right)'-\frac{\alpha}{f^2}\right) UU \right] \right)_{abcd} , \\
\notag
	R_{ab} &= -(m-1) \left[\left(\frac{f'}{f}\right)'-\frac{\alpha}{f^2}\right] U_a U_b \\
	& \quad {}
		+ \left[\left(\left(\frac{f'}{f}\right)'-\frac{\alpha}{f^2}\right)
		+ m\left(\frac{f'^2}{f^2}+\frac{\alpha}{f^2}\right)\right] g_{ab} , \\
	\R &= \left(\left(\frac{f'}{f}\right)' - \frac{\alpha}{f^2}\right)
		+ (m+1) \left(\frac{f'^2}{f^2} + \frac{\alpha}{f^2}\right)
	.
\end{align}
\end{subequations}
We suppose that the FLRW spacetime is \emph{non-degenerate}, that is
both that $f'/f \ne 0$ and that the scalar curvature $\R$ is not
constant,
\begin{equation}
	\R'
	= \left[\left(\left(\frac{f'}{f}\right)' - \frac{\alpha}{f^2}\right)
		+ (m+1) \left(\frac{f'^2}{f^2} + \frac{\alpha}{f^2}\right)\right]'
	\ne 0 .
\end{equation}
To make use of identity~\eqref{eq:lie-killing}, we compute the Lie
derivative
\begin{equation}
	\Lie_v \R = v^a \nabla_a \R = A f \R' .
\end{equation}
Thus, defining the operator
\begin{equation} \label{eq:flrw-J}
	J[h] = \frac{1}{f\R'} \dot{\R}[h] ,
\end{equation}
we have the identities
\begin{equation} \label{eq:flrw-J-K}
	J\circ K
		\begin{bmatrix} A \\ \cmidrule(lr){1-1} \tX \end{bmatrix} = A
	\quad \text{or} \quad
	J\circ K = \begin{bmatrix}[c|c] \id & 0 \end{bmatrix} .
\end{equation}
The last equation also implies that
\begin{equation}
	J \circ \left( K \begin{bmatrix} 0 \\ \cmidrule(lr){1-1} \id \end{bmatrix} \right)
	= \begin{bmatrix}[c|c] \id & 0 \end{bmatrix}
		\begin{bmatrix} 0 \\ \cmidrule(lr){1-1} \id \end{bmatrix}
	= 0
	\quad \text{and} \quad
	J \circ \left( K \begin{bmatrix} \id \\ \cmidrule(lr){1-1} 0 \end{bmatrix} \right)
	= \begin{bmatrix}[c|c] \id & 0 \end{bmatrix}
		\begin{bmatrix} \id \\ \cmidrule(lr){1-1} 0 \end{bmatrix}
	= \id .
\end{equation}
Let us denote the block matrix components of $J$ as $J =
\begin{bmatrix}[c|cc] J^p & J^Y & J^Z \end{bmatrix}$ and also $\tilde{J}
= \begin{bmatrix} J^Y & J^Z \end{bmatrix}$. Combining the above
identities with formula~\eqref{eq:simp-killing-flrw}, we get
\begin{equation} \label{eq:flrw-tJ}
	\tilde{J} \circ
		\left(K \begin{bmatrix} 0 \\ \cmidrule(lr){1-1} \id \end{bmatrix}\right)
	= \begin{bmatrix} J^Y & J^Z \end{bmatrix}
		\begin{bmatrix} \del_t \\ \tK \end{bmatrix}
	= 0 .
\end{equation}
Hence, knowing the first compatibility operator from the top line
of~\eqref{eq:flrw-calabi}, it must be possible to factor
\begin{equation} \label{eq:flrw-tHJ}
	\tilde{J} = \tilde{H}_J \begin{bmatrix}
		-\tC_0 & \del_t \\
		0 & \tC_1
	\end{bmatrix} .
\end{equation}
Of course, the full operator $J$ can be then factored as
\begin{equation} \label{eq:flrw-HJ}
	J = H_J \begin{bmatrix}[c|cc]
		\id & 0 & 0 \\
		\cmidrule(lr){1-1} \cmidrule(lr){2-3}
		0 & -\tC_0 & \del_t \\
		0 & 0 & \tC_1
	\end{bmatrix} ,
\end{equation}
where $H_J = \begin{bmatrix}[c|c] J^p & \tilde{H}_J \end{bmatrix}$. For
much of what follows, we will only need the fact that $\tilde{H}_J$ or
$H_J$ exists. However, a direct calculation shows that its explicit form
can be deduced from the identity
\begin{multline} \label{eq:flrw-curv-scalar}
	\dot{\R} \begin{bmatrix} p \\ \cmidrule(lr){1-1} \tY \\ \tZ \end{bmatrix}
	= f^{-2} \tilde{\Delta} p + m(f/f')[(f'^2/f^2) p]' + m(m+1) (f'^2/f^2) p \\
		- f^{-m-1} [ f^{m+1} (2\tilde{\div}\, \tY - \tilde{\tr} \tZ') ]'
		+ f^{-2} [-\tilde{\Delta} \, \tilde{\tr} \, \tZ
			+ \tilde{\div} \, \tilde{\div} \, \tZ_{ab}
			- (m-1) \, \alpha \, \tilde{\tr} \, \tZ] \\
	= f^{-2} \tilde{\Delta} p + \frac{m}{f'f^m} [f^{m+1} (f'^2/f^2) p]'
		+ f^{-m-1} [ f^{m+1} \tilde{\tr} (\del_t \tZ - \tC_0[\tY]) ]'
		- \frac{1}{2} f^{-2} \tilde{\tr} \, \tilde{\tr} \, \tC_1[\tZ]
	,
\end{multline}
meaning that
\begin{align}
\label{eq:flrw-J-expl}
	J &= \frac{1}{f\R'} \begin{bmatrix}[c|cc]
			f^{-2} \tilde{\Delta} + \frac{m}{f'f^m}\del_t f^{m+1} \frac{f'^2}{f^2} &
			-\frac{1}{f^{m+1}} \del_t f^{m+1} \tC_0 &
			\frac{1}{f^{m+1}} \del_t f^{m+1} \tilde{\tr}
				-\frac{1}{2} f^{-2} \tilde{\tr} \, \tilde{\tr} \tC_1
		\end{bmatrix} , \\
	\intertext{and hence}
\label{eq:flrw-HJ-expl}
	H_J &= \frac{1}{f\R'} \begin{bmatrix}[c|cc]
		f^{-2} \tilde{\Delta} + \frac{m}{f'f^m}\del_t f^{m+1} \frac{f'^2}{f^2} &
		\frac{1}{f^{m+1}} \del_t f^{m+1} \, \tilde{\tr} & 
		-\frac{1}{2} f^{-2} \, \tilde{\tr} \, \tilde{\tr}
	\end{bmatrix} ,
\end{align}
where of course we have defined $\tilde{\tr}$, $\tilde{\div}$ and
$\tilde{\Delta}$ such that
\begin{equation} \label{eq:flrw-lapdivtr}
	\tilde{\Delta} \tX = \widetilde{(g^F)^{ab} \nabla_a \nabla_b X} ,
	\quad
	\tilde{\div} \, \tY = \widetilde{\nabla^a Y_{a\cdots}} ,
	\quad
	(\tilde{\tr} \, \tilde{X})_{ab} = \widetilde{(g^F)^{cd} X_{acbd}} ,
	\quad \text{and} \quad
	\tilde{\tr} \, \tilde{Z} = \widetilde{(g^F)^{ab} Z_{ab}} .
\end{equation}

Now we are ready to follow the proof of Theorem~\ref{thm:fintype-compat}
to construct a compatibility complex for the Killing operator $K$ by
lifting the compatibility complex from~\eqref{eq:flrw-calabi}. The
results of these calculations will be presented below directly in
diagrammatic form, where the arrows in the diagrams satisfy the
identities introduced in Section~\ref{sec:compat}. All the relevant
identities are easily checked by direct calculation, relying on the key
identity~\eqref{eq:flrw-J-K}, the basic commutation relations $[\del_t,
\tnabla] = [\del_t, \tC_i] = 0$, the compatibility identities $\tC_{i+1}
\circ \tC_i = 0$ of the operators of the Calabi complex, which were
introduced in Section~\ref{sec:cc}.

We start by applying the information obtained above to give an explicit
reduction of the Killing equation to the first operator from the top
line of~\eqref{eq:flrw-calabi}:

\begin{equation} \label{eq:flrw-lift1}
\begin{tikzcd}[column sep=5cm,row sep=4.5cm]
	\bullet
		\ar{r}{K = \begin{bmatrix}[c|c]
			-2\del_t f & 0 \\
			\cmidrule(lr){1-1} \cmidrule(lr){2-2}
			-f^{-1} \tnabla & \del_t \\
			2f'\tilde{g}^F & \tilde{K}
		\end{bmatrix}}
		\ar[swap,shift right]{d}{\begin{bmatrix}[c|c]0 & \id\end{bmatrix}} \&
	\bullet
		\ar[dashed,bend left]{l}{\begin{bmatrix}J \\ \cmidrule(lr){1-1} 0\end{bmatrix}}
		\ar[swap,shift right]{d}{\begin{bmatrix}[c|cc]
			0 & \id & 0 \\
			0 & 0 & \id 
		\end{bmatrix} \left( \id - K \begin{bmatrix} J \\ \cmidrule(lr){1-1} 0 \end{bmatrix} \right)}
	\\
	\bullet
		\ar[swap]{r}{\begin{bmatrix} \del_t \\ \tC_0 = \tilde{K} \end{bmatrix}}
		\ar[swap,shift right]{u}{\begin{bmatrix}0 \\ \cmidrule(lr){1-1} \id\end{bmatrix}} \&
	\bullet
		\ar[swap,dashed,bend right]{l}{0}
		\ar[swap,shift right]{u}{\begin{bmatrix}0 & 0 \\ \cmidrule(lr){1-2} \id & 0 \\ 0 & \id\end{bmatrix}}
\end{tikzcd}
\end{equation}

Next, we proceed by iterating the construction from
Lemma~\ref{lem:lift-compat}, while simultaneously applying the
simplifications discussed after the proof of
Theorem~\ref{thm:fintype-compat}. The following diagram should be
appended on the right to~\eqref{eq:flrw-lift1}:

\begin{equation} \label{eq:flrw-lift2}
\begin{tikzcd}[column sep=5cm,row sep=4.8cm]
	\bullet
		\ar{r}{\begin{bmatrix}[c|cc]
			\id & 0 & 0 \\
			\cmidrule(lr){1-1} \cmidrule(lr){2-3} 
			0 & -\tC_0 & \del_t \\
			0 & 0 & \tC_1
		\end{bmatrix} \left( \id - K \begin{bmatrix} J \\ \cmidrule(lr){1-1} 0 \end{bmatrix} \right)
		= \left(
			\id
			- \begin{bmatrix}[c|cc]
				\id & 0 & 0 \\
				\cmidrule(lr){1-1} \cmidrule(lr){2-3} 
				0 & -\tC_0 & \del_t \\
				0 & 0 & \tC_1
			\end{bmatrix}
			K \begin{bmatrix} H_J \\ \cmidrule(lr){1-1} 0 \end{bmatrix}
		\right)
		\begin{bmatrix}[c|cc]
			\id & 0 & 0 \\
			\cmidrule(lr){1-1} \cmidrule(lr){2-3} 
			0 & -\tC_0 & \del_t \\
			0 & 0 & \tC_1
		\end{bmatrix}
		}
		\ar[swap,shift right]{d}
		\&
	\bullet
		\ar[swap,shift right]{d}{\begin{bmatrix}[c|cc]0 & \id & 0 \\ 0 & 0 & \id\end{bmatrix}}
		\ar[dashed,bend left]{l}{\begin{bmatrix}[c|cc]
			\id & 0 & 0 \\
			\cmidrule(lr){1-1} \cmidrule(lr){2-3} 
			0 & 0 & 0 \\
			0 & 0 & 0
		\end{bmatrix}}
	\\
	\bullet
		\ar[swap]{r}{\begin{bmatrix} -\tC_0 & \del_t \\ 0 & \tC_1 \end{bmatrix}}
		\ar[swap,shift right]{u} \&
	\bullet
		\ar[swap,shift right]{u}{
			\left( \id
			- \begin{bmatrix}[c|cc]
				\id & 0 & 0 \\
				\cmidrule(lr){1-1} \cmidrule(lr){2-3} 
				0 & -\tC_0 & \del_t \\
				0 & 0 & \tC_1
			\end{bmatrix}
			K \begin{bmatrix} H_J \\ \cmidrule(lr){1-1} 0 \end{bmatrix}
			\right)
			\begin{bmatrix}
				0 & 0 \\
				\cmidrule(lr){1-2} 
				\id & 0 \\
				0 & \id
			\end{bmatrix}
			}
		\ar[swap,dashed,bend right]{l}{
			\begin{bmatrix}[c|cc]
				0 & \id & 0 \\
				0 & 0 & \id
			\end{bmatrix} K
			\begin{bmatrix} \tH_J \\ \cmidrule(lr){1-1} 0 \end{bmatrix}
		}
\end{tikzcd}
\end{equation}
Note that we do not repeat the labels on the left-most vertical arrows,
which can be read off as the right-most vertical arrows
in~\eqref{eq:flrw-lift1}.

Two more iterations of Lemma~\ref{lem:lift-compat} (with simultaneous
simplifications) gives the following diagram, to be appended on the
right to~\eqref{eq:flrw-lift2}:

\begin{equation} \label{eq:flrw-lift3}
\begin{tikzcd}[column sep=5cm,row sep=5cm]
	\bullet
		\ar[swap,shift right]{d}
		\ar{r}{\begin{bmatrix}[c|cc]
			\multicolumn{3}{c}{H_J} \\
			\cmidrule(lr){1-3}
			0 & -\tC_1 & \del_t \\
			0 & 0 & \tC_2
		\end{bmatrix}} \&
	\bullet
		\ar[dashed,bend left]{l}{\begin{bmatrix}[c|cc]
			\id & 0 & 0 \\
			\cmidrule(lr){1-1} \cmidrule(lr){2-3} 
			0 & -\tC_0 & \del_t \\
			0 & 0 & \tC_1
		\end{bmatrix} K \begin{bmatrix}[c|cc]
			\id & 0 & 0 \\
			\cmidrule(lr){1-1} \cmidrule(lr){2-3}
			0 & 0 & 0
		\end{bmatrix}}
		\ar[swap,shift right]{d}{\begin{bmatrix}[c|cc]0 & \id & 0 \\ 0 & 0 & \id\end{bmatrix}}
		\ar{r}{\begin{bmatrix}[c|cc] 0 & -\tC_2 & \del_t \\ 0 & 0 & \tC_3 \end{bmatrix}} \&
	\bullet
		\ar[swap,shift right]{d}{\begin{bmatrix}\id & 0\\ 0 & \id\end{bmatrix}}
		\ar[dashed,bend left]{l}{0}
	\\
	\bullet
		\ar[swap,shift right]{u}
		\ar[swap]{r}{\begin{bmatrix} -\tC_1 & \del_t \\ 0 & \tC_2 \end{bmatrix}} \&
	\bullet
		\ar[swap,dashed,bend right]{l}{0}
		\ar[swap,shift right]{u}{\begin{bmatrix}0 & 0\\ \cmidrule(lr){1-2} \id & 0\\ 0 & \id\end{bmatrix}}
		\ar[swap]{r}{\begin{bmatrix} -\tC_2 & \del_t \\ 0 & \tC_3 \end{bmatrix}} \&
	\bullet
		\ar[swap,shift right]{u}{\begin{bmatrix}\id & 0\\ 0 & \id\end{bmatrix}}
		\ar[swap,dashed,bend right]{l}{0}
\end{tikzcd}
\end{equation}

From this point on, the compatibility complex for $K$ and the top line
of~\eqref{eq:flrw-calabi} become identical.

\begin{thm} \label{thm:flrw}
Consider a non-degenerate FLRW spacetime $(M,g)$, $M = I\times F$,
as introduced at the top of Section~\ref{sec:flrw}, which spatially has
the structure of an $m$-dimensional constant curvature space $(F,g^F)$,
with sectional curvature $\alpha$. The full compatibility complex
$K_i$ for the Killing operator $K_0 = K$~\eqref{eq:flrw-killing} is
given by
\begin{subequations} \label{eq:flrw-complex}
\begin{align}
	K_0
	&= \begin{bmatrix}[c|c]
			-2\del_t f & 0 \\
			\cmidrule(lr){1-1} \cmidrule(lr){2-2} 
			-f^{-1} \tnabla & \del_t \\
			2f'\tilde{g}^F & \tilde{K}
		\end{bmatrix} , \\
	K_1
	&= \begin{bmatrix}[c|cc]
			\id & 0 & 0 \\
			\cmidrule(lr){1-1} \cmidrule(lr){2-3} 
			0 & -\tC_0 & \del_t \\
			0 & 0 & \tC_1
		\end{bmatrix}
		\left( \id - K \begin{bmatrix} J \\ \cmidrule(lr){1-1} 0 \end{bmatrix} \right) , \\
	K_2
	&= \begin{bmatrix}[c|cc]
			\multicolumn{3}{c}{H_J} \\
			\cmidrule(lr){1-3}
			0 & -\tC_1 & \del_t \\
			0 & 0 & \tC_2
		\end{bmatrix} , \\
	K_3
	&= \begin{bmatrix}[c|cc]
			0 & -\tC_2 & \del_t \\
			0 & 0 & \tC_3
		\end{bmatrix} , \\
	K_{i}
	&= \begin{bmatrix}
			-\tC_{i-1} & \del_t \\
			0 & \tC_i
		\end{bmatrix} \quad (3<i<m) , \\
	K_m
	&= \begin{bmatrix}
			-\tC_{m-1} & \del_t
		\end{bmatrix} , \\
	K_{i} &= 0  \quad (m<i) ,
\end{align}
\end{subequations}
where the operator $H_J$ is defined in~\eqref{eq:flrw-HJ}, $\del_t$ and
$\tnabla$ are the covariant derivatives pulled back along the product
structure $t\colon I\times F \to I$ and $I\times F \to F$, while $\tC_i$
are the operators from the Calabi complex associated to the constant
curvature space $(M,g^F)$, as introduced in Section~\ref{sec:cc}. (See
Appendix~\ref{sec:flrw-notation} for a more complete summary of the
notation.)
\end{thm}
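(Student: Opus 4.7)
The strategy is to realize the complex~\eqref{eq:flrw-complex} as the output of the iterative construction in the proof of Theorem~\ref{thm:fintype-compat}, specialized to the FLRW setting and combined with the intermediate simplifications described after that proof. The necessary inputs have already been assembled earlier in Section~\ref{sec:flrw}: the tractor equivalence~\eqref{eq:killing-transport-equiv} puts $K$ into finite type form, and restricting to the $A=0$ sector yields a simpler comparison operator (the column $[\del_t;\tilde{K}]$ from~\eqref{eq:simp-killing-flrw}) whose full compatibility complex is the top row of~\eqref{eq:flrw-calabi}. The latter was shown to be a compatibility complex via an explicit equivalence up to homotopy with the twisted de~Rham complex of the flat connection $\TT = [\del_t;\tilde\TT]$, together with Lemma~\ref{lem:compat-sufficient} and Proposition~\ref{prp:de-rham}. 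I would take that top row, rather than the raw twisted de~Rham complex, as the comparison complex for the lifting procedure.

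The plan then has three stages. First, I would verify that diagram~\eqref{eq:flrw-lift1} is an equivalence up to homotopy of the one-operator complex $K_0 = K$ with the one-operator complex built from the first entry of the top row of~\eqref{eq:flrw-calabi}. All commuting squares and homotopy identities reduce, after expanding the matrix entries of~\eqref{eq:flrw-killing}, to the single algebraic input $J\circ K = [\id\mid 0]$ from~\eqref{eq:flrw-J-K} together with the commutation $[\del_t,\tnabla]=0$. Applying Lemma~\ref{lem:lift-compat} and simplifying block-wise then yields $K_1$ in the claimed form.

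Second, I would iterate the same lifting procedure two more times, obtaining $K_2$ and $K_3$ as encoded in diagrams~\eqref{eq:flrw-lift2} and~\eqref{eq:flrw-lift3}. The verifications use the factorization~\eqref{eq:flrw-HJ} of $J$ through the comparison complex (which produces $H_J$ and its derived object $\tH_J$), the commutation $[\del_t,\tC_i]=0$, and the Calabi identities $\tC_{i+1}\circ\tC_i = 0$. Already at the third stage the corrections involving $H_J$ drop out and the lifted complex agrees block-by-block with the tail of the top row of~\eqref{eq:flrw-calabi}; from this point on the iteration is trivial, giving the stated form of $K_i$ for $3<i<m$, the end operator $K_m$, and the termination $K_i = 0$ for $i>m$, inherited from the termination of the Calabi complex on the $m$-dimensional base.

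Third, I would invoke Lemma~\ref{lem:compat-sufficient} and Proposition~\ref{prp:de-rham} to conclude that the resulting $K_i$ is indeed a full compatibility complex for $K$: the construction transitively provides an equivalence up to homotopy with the twisted de~Rham complex of $\TT$, which is already known to be a compatibility complex. The main obstacle is purely computational: verifying the many commuting squares and homotopy identities hidden in diagrams~\eqref{eq:flrw-lift1}--\eqref{eq:flrw-lift3} is elementary but bulky matrix bookkeeping, and some care is required to ensure that each intermediate simplification (dropping trivial direct summands, or replacing a lifted operator by a suitable factor $\tilde K_{l+1}$) preserves universality at that stage, in the sense discussed after Theorem~\ref{thm:fintype-compat}.
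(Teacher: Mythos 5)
Your proposal is correct and follows essentially the same route as the paper: the paper's proof likewise glues diagrams~\eqref{eq:flrw-lift1}--\eqref{eq:flrw-lift3} into an equivalence up to homotopy between the $K_i$ complex and the top row of~\eqref{eq:flrw-calabi} (already known to be a full compatibility complex via the twisted de~Rham complex of $\TT$), and then applies Lemma~\ref{lem:compat-sufficient} to each consecutive pair of squares. The computational verifications you defer to are exactly the ones the paper also leaves as direct calculations based on~\eqref{eq:flrw-J-K}, $[\del_t,\tC_i]=0$ and $\tC_{i+1}\circ\tC_i=0$.
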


\begin{proof}
The argument given around diagram~\eqref{eq:flrw-calabi} shows that its
top line constitutes a full compatibility complex, which coincides
with the bottom line of the diagram obtained by gluing (from left to
right) the diagrams~\eqref{eq:flrw-lift1}, \eqref{eq:flrw-lift2} and
\eqref{eq:flrw-lift3}, which are continued by identifying the top and
bottom rows. From the preceding discussion in the current section, it is
clear that each pair of consecutive squares in this glued diagram
satisfies the hypotheses of Lemma~\ref{lem:compat-sufficient}. Thus, the
top line of this glued diagram is itself a full compatibility
complex, but that complex consists precisely of the operators $K_i$
in~\eqref{eq:flrw-complex}.
\end{proof}

The non-vanishing ranks of the vector bundles in the $K_i$ complex have
the following pattern, which can be compared to a similar table for the
constant curvature case at the end of Section~\ref{sec:cc} (where
$n=m+1$, for comparison):

\begin{center}
\begin{tabular}{c|cccc}
	& $m=2$ & $m=3$ & $m=4$ & $m\ge 2$ \\ \hline
	& $3$ & $4$ & $5$ & $m+1$ \\
	$K_0$ & \\
	& $6$ & $10$ & $15$ & $\frac{m(m+1)}{2} + m + 1$ \\
	$K_1$ & \\
	& $5$ & $13$ & $31$ & $\frac{m}{2}\binom{m+1}{3}+\frac{m(m+1)}{2}+1$ \\
	$K_2$ & \\
	& $2$ & $10$ & $41$ & $m\binom{m+1}{4}+\frac{m}{2}\binom{m+1}{3}+1$ \\
	$K_3$ & \\
	& & $3$ & $26$ & $\frac{3m}{2}\binom{m+1}{5}+m\binom{m+1}{4}$ \\
	$\vdots$ & & & & $\vdots$ \\
	& & & & $\frac{m(i-1)}{2}\binom{m+1}{i+1}+\frac{m(i-2)}{2}\binom{m+1}{i}$ \\
	\llap{($3<i$)\quad}
	$K_i$ & \\
	& & & & $\frac{mi}{2}\binom{m+1}{i+2}+\frac{m(i-1)}{2}\binom{m+1}{i+1}$ \\
	$\vdots$ & & & & $\vdots$ \\
	$K_m$ & \\
	& $2$ & $3$ & $6$ & $\frac{m(m-1)}{2}$
\end{tabular}
\end{center}

\begin{rem}
In $n=4$ ($m=3$) dimensions, the common choice for gauge-invariant
variables on cosmological FLRW spacetimes are the so-called
\emph{Bardeen potentials}~\cite{bardeen}. They include two scalars
components, $\Phi$ and $\Psi$, a divergence-free spatial vector field,
$\hat{\Phi}_a$, with $3$ independent components, and a divergence-free
trace-free spatial symmetric 2-tensor, $\hat{E}_{ab}$, with $5$
independent components. Hence, the total number of independent
components (not counting the differential relations coming from the
divergence-free conditions) is $1+1+3+5 = 10$, which is less than the
$13$ components of $K_1$ that we have counted above. The difference of
course between our $K_1$ and the Bardeen potentials is that our
expressions are all local (given only in terms of differential
operators), while the Bardeen potentials are non-local (their definition
involves inverted spatial Laplacians; see the Introduction to~\cite{fhk}
for details). While we cannot claim that our construction gives the
minimal number of local gauge-invariant quantities, it is not surprising
that we get a larger number than a construction that allows non-local
expressions.
\end{rem}

\begin{rem} \label{rem:flrw-K1}
It is worth noting that the $K_i$ complex presented above is not
continuously deformable through the class of generic FLRW spacetimes to
the constant or zero curvature cases, which correspond to special
choices of the scale factor $f(t)$. The main reason is that the operator
$J$ introduced in Equation~\eqref{eq:flrw-J} is proportional to $1/\R'$,
which diverges when the background scalar curvature becomes constant.
Since the operator $J$ and the related operator $H_J$ appear in several
places in the formulas~\eqref{eq:flrw-complex} for the operators $K_i$
(for $i\ge 1$), it seems difficult to directly compare them with the
corresponding operators $C_i$ on constant curvature backgrounds
in~\eqref{eq:calabi}. In particular, an explicit expression for the
linearized Riemann tensor $C_1$ on $(M,g)$ would be rather long and
unenlightening in our notation, as can already be glimpsed from the
formula for the linearized scalar curvature
in~\eqref{eq:flrw-curv-scalar}. A more fruitful comparison would be to
try to express the components of our $K_1$ in terms of the linearized
IDEAL characterization tensors that were recently constructed for the
FLRW geometries~\cite{cdk}.

However, we might gain some qualitative insight into the components of
$K_1$, which can be interpreted as a complete set of local
gauge-invariant observables, from the alternative formula that was
indicated in diagram~\eqref{eq:flrw-lift2}:
\begin{equation}
	\begin{bmatrix}
		q \\ \cmidrule(lr){1-1}
		\tY_{a:b} \\
		\tZ_{ab:cd}
	\end{bmatrix}
	=
	K_1 \begin{bmatrix} p \\ \cmidrule(lr){1-1} \tY_a \\ \tZ_{a:b} \end{bmatrix}
	= \left(
			\id
			- \begin{bmatrix}[c|cc]
				\id & 0 & 0 \\
				\cmidrule(lr){1-1} \cmidrule(lr){2-3} 
				0 & -\tC_0 & \del_t \\
				0 & 0 & \tC_1
			\end{bmatrix}
			K \begin{bmatrix} H_J \\ \cmidrule(lr){1-1} 0 \end{bmatrix}
		\right)
		\begin{bmatrix}[c|cc]
			\id & 0 & 0 \\
			\cmidrule(lr){1-1} \cmidrule(lr){2-3} 
			0 & -\tC_0 & \del_t \\
			0 & 0 & \tC_1
		\end{bmatrix}
		\begin{bmatrix} p \\ \cmidrule(lr){1-1} \tY_a \\ \tZ_{ab} \end{bmatrix} .
\end{equation}
The $\tZ_{ab:cd}$ invariants are roughly coming from the linearized
Riemann operator $\tC_1$ on the spatial slice $(F,g^F)$. While these
components can be obtained from a purely spatial projection of the
linearized Riemann operator on the FLRW spacetime $(M,g)$, since the
components of the linearized spacetime Riemann tensor are not by
themselves invariant, they need to be deformed through the $H_J$ and
subsequent operators in the above formula to be truly invariant. The $q$
scalar invariant comes from the difference between the $p$ component of
$h_{ab}$ and the $p$ component of $K[J[h]]_{ab}$. When precomposed with
$h_{ab} = K[v]_{ab}$, both terms in the difference depend only on the
$A$ component of $v_a$ and in exactly the same way, hence cancelling to
give an invariant quantity. The $\tY_{a:b}$ components are harder to
interpret in familiar terms.
\end{rem}

\subsection{Schwarzschild-Tangherlini spacetimes} \label{sec:schw}

Consider an $n$-dimensional spacetime $(\bar{\M},\bar{g})$ where
$\bar{\M} = \M\times \S$, where $\dim \M = 2$ and $\dim \S =
n-2$~\cite{kis-n+m, ki-master, ki-stab}. Let Greek indices
$\mu,\nu,\ldots$ correspond to tensors om $\bar{\M}$. We take the second
factor $(\S,\Omega)$ to be a maximally symmetric space, hence a constant
curvature Riemannian space with sectional curvature $\alpha$, where
$\alpha = 1$ for a unit sphere, $\alpha = 0$ for Euclidean space, and
$\alpha = -1$ for hyperbolic space (or pseudo-sphere). Let upper case
Latin indices $A,B,\ldots$ correspond to tensors on $\S$. Let us denote
local coordinates on $\S$ by $\theta^A$ and the Levi-Civita connection
on $(\S,\Omega)$ by $D_A$; its curvature tensor is then
\begin{equation}
	R_{ABCD} = \alpha (\Omega_{AC} \Omega_{BD} - \Omega_{AD} \Omega_{BC})
	= \frac{\alpha}{2} (\Omega \odot \Omega)_{ABCD} ,
\end{equation}
where we have used the Kulkarni-Nomizu product~\eqref{eq:kn-prod}. The
other factor $(\M,g)$ has signature $({-}{+})$. Let lower case Latin
indices $a,b,\ldots$ correspond to tensors on $\M$, which we will presume
has a timelike Killing vector $t^a$. Let us denote local coordinates
on $\M$ by $y^a$ and the Levi-Civita connection on $(\M,g)$ by
$\nabla_a$. Because $\dim \M = 2$, its curvature is given by
\begin{equation} \label{eq:schw-curv-rt}
	R_{abcd} = \frac{\R}{2} (g_{ac} g_{bd} - g_{ad} g_{bc})
	= \frac{\R}{4} (g \odot g)_{abcd} ,
\end{equation}
where $\R = R_{ab}{}^{ab}$ is the corresponding Ricci scalar.
The above notation and index conventions are chosen to match those
of~\cite{martel-poisson} as closely as possible.

We are interested in warped product~\cite[Ch.7]{oneill}
metrics of the form~\cite{kis-n+m, ki-master, ki-stab}
\begin{equation}
	\bar{g} = g_{ab} dy^a dy^b + r^2 \Omega_{AB} d\theta^A d\theta^B ,
\end{equation}
where $r=r(y)$ and $g_{ab}$ is static in the (Schwarzschild) coordinates
$(y^a) = (t,r)$,
\begin{equation} \label{eq:gen-g}
	g_{ab} = -f dt_a dt_b + \frac{1}{f} dr_a dr_b ,
\end{equation}
with $f = f(r)$. In these coordinates, the timelike Killing vector has
the form $t^a = (\del_t)^a$. For convenience, we also introduce the
notation $t_a = g_{ab} t^b = -f dt_a$ and $r_a = dr_a$. They are
related as $t^a = -\eps^{ab} r_b$, where $\eps_{ab} = (dt \wedge
dr)_{ab}$. Then, of course, $r_a r^a = f$ and $t_a t^a = -f$.

As we will see shortly, under our assumptions, the Einstein equations
with a cosmological constant $\Lambda$,
\begin{equation} \label{eq:einstein-eq}
	\bar{R}_{ab} - \frac{1}{2} \bar{\R} \bar{g}_{ab} + \Lambda
\bar{g}_{ab} = 0 ,
\end{equation}
are solved by~\cite[Eq.(2.15)]{ki-master}
\begin{equation} \label{eq:gen-f}
	f(r) = \alpha - \frac{2M}{r^{n-3}} - \frac{2\Lambda}{(n-1)(n-2)} r^2 ,
\end{equation}
where $M$ is a constant. When $\alpha=1$, $\Lambda = 0$ and $n\ge 4$,
this metric describes the higher dimensional spherically symmetric
static black holes, the so-called \emph{Schwarzschild-Tangherlini}
solutions, specializing to the Schwarzschild solution when $n=4$. When
$n=3$, we are forced to have $\alpha=0$ and the spacetime is actually of
constant curvature. With $n=3$ and $\Lambda < 0$, we get the BTZ
metric~\cite{btz}. In terms of the parameter $M$, the black hole mass is
given by
\begin{equation}
	\frac{(n-2) A_{n-2}}{8\pi} \frac{M}{G}
	= \frac{(n-2)}{2} \frac{A_{n-2}}{A_2} \frac{M}{G} ,
\end{equation}
where $G$ is the $n$-dimensional Newton's constant and
\begin{equation}
	A_{n-2} = \frac{2 \pi^{(n-1)/2}}{\Gamma[(n-1)/2]}
\end{equation}
is the area of the unit $(n-2)$-sphere. When $\alpha=0$, we get the
higher dimensional version of \emph{Taub's plane-symmetric spacetime}
\cite{taub}, \cite[Eq.(15.29)]{stephani-sols}, \cite[Eq.(2.2)]{bcpds}.
When $\alpha=-1$, we get the higher dimensional version of a
\emph{pseudo-Schwarzschild wormhole spacetime}~\cite{lobo-mimoso}.

In what follows, we restrict our attention to $n\ge 4$, which is
physically reasonable, but is also forced upon us by some of our
formulas, which have poles at $n=1,2$ or $3$.

For convenience, let us introduce the notations
\begin{align}
\label{eq:gen-f1}
	f_1(r) = r f'(r)
	&= (n-3) \frac{2M}{r^{n-3}} - \frac{4\Lambda}{(n-1)(n-2)} r^2 , \\
\label{eq:gen-f2}
	f_2(r) = r f_1'(r)
	&= -(n-3)^2 \frac{2M}{r^{n-3}} - \frac{8\Lambda}{(n-1)(n-2)} r^2 ,
\end{align}
as well as note that the formula~\eqref{eq:gen-f} for $f$ parametrized
by the constants $M$ and $\Lambda$, with $\alpha$ fixed, gives the
general solution to the differential equation
\begin{equation} \label{eq:f-ode}
	f_2 + (n-5) f_1 - 2(n-3) (f-\alpha) = 0 .
\end{equation}

Any tensor on $\M$ decomposes as $T_a = T_t dt_a + T_r dr_a$ or $T_a \to
(T_t,T_r)$, with obvious extension to higher rank tensors. With respect
to this decomposition and the coordinates $(t,r)$, the Levi-Civita
connection on $(\M,g)$ is then~\cite[Eq.(2.18)]{ki-master}
\begin{equation} \label{eq:schw-cov-coord}
	\nabla_a v_b
	\to \begin{bmatrix}
		\del_t v_t & \del_t v_r \\
		\del_r v_t & \del_r v_r
	\end{bmatrix}
	+ \begin{bmatrix}
		0 & -\frac{f_1}{2r f} \\
		-\frac{f_1}{2r f} & 0
	\end{bmatrix} v_t
	+ \begin{bmatrix}
		-f\frac{f_1}{2r} & 0 \\
		0 & \frac{1}{f} \frac{f_1}{2r}
	\end{bmatrix} v_r .
\end{equation}
Equivalently, we can summarize this information by giving the covariant
derivatives of the frame $(t_a,r_a)$,
\begin{equation} \label{eq:schw-cov-frame}
	\nabla_a t_b = \frac{f_1}{2r} \eps_{ab}
	\quad \text{and} \quad
	\nabla_a r_b = \frac{f_1}{2r} g_{ab} .
\end{equation}
A direct calculation gives the Ricci scalar on $(\M,g)$ as
\begin{equation}
	\R = \frac{f_1 - f_2}{r^2} .
\end{equation}
And finally, symmetrizing the covariant derivative as written
in~\eqref{eq:schw-cov-coord} or~\eqref{eq:schw-cov-frame}, the explicit
form of the Killing operator on $(\M,g)$ is
\begin{align}
	2\nabla_{(a} v_{b)}
\notag
	&= -2t_{(a} \nabla_{b)} \frac{v_t}{f}
			- 2 t_{a} t_{b} \frac{f_1}{2 r f} v_r
			- 2 t_{(a} r_{b)} \frac{1}{f} \del_t v_r
		+ r_a r_b \frac{1}{f} (2f \del_r v_r + \frac{f_1}{r} v_r)
	\\
	&\to \begin{bmatrix}
		f (2 \del_t \frac{1}{f} v_t - \frac{f_1}{r} v_r) & \del_t v_r + f\del_r \frac{1}{f} v_t \\
		\del_t v_r + f \del_r \frac{1}{f} v_t & \frac{1}{f} (2f\del_r v_r + \frac{f_1}{r} v_r)
	\end{bmatrix} .
\end{align}

Greek indices $\mu,\nu,\ldots$ on $\bar{\M}$-tensors are raised and
lowered by $\bar{g}_{\mu\nu}$. Lower case Latin indices $a,b,c,\ldots$
on $\M$-tensors are raised and lowered by $g_{ab}$. And upper case Latin
indices $A,B,C,\ldots$ on $\S$-tensors are raised and lowered by
$\Omega_{AB}$. Any $\bar{\M}$-tensor decomposes into sectors,%
	\footnote{While our tensor sector formalism with (pseudo-)spherical
	symmetry in $n$-dimensions is strongly inspired
	by~\cite{martel-poisson}, where it was presented for $n=4$, our
	conventions differ by the introduction of $r$-weights in the spherical
	sectors.} %
according to $T_\mu = T_a (dy^a)_\mu + r T_A (\d\theta^A)_\mu \to (T_a,
r T_A)$ and $T^\mu = T^a (\del_a)^\mu + \frac{1}{r} T^A (\del_A)^\mu \to
(T^a, \frac{1}{r} T^A)$, with obvious extension to higher rank tensors.
With a slight departure from this convention, let us define some
$\bar{\M}$-tensors by their sector decomposition
\begin{equation}
	g_{\mu\nu} \to \begin{bmatrix} g_{ab} & 0 \\ 0 & 0 \end{bmatrix} , ~
	g^{\mu\nu} \to \begin{bmatrix} g^{ab} & 0 \\ 0 & 0 \end{bmatrix} ,
	\quad \text{and} \quad
	\Omega_{\mu\nu} \to \begin{bmatrix} 0 & 0 \\ 0 & \Omega_{AB} \end{bmatrix} , ~
	\Omega^{\mu\nu} \to \begin{bmatrix} 0 & 0 \\ 0 & \Omega^{AB} \end{bmatrix} ,
\end{equation}
so that $\bar{g}_{\mu\nu} = g_{\mu\nu} + r^2 \Omega_{\mu\nu}$ and
$\bar{g}^{\mu\nu} = g^{\mu\nu} + r^{-2} \Omega^{\mu\nu}$.

The pair $(\nabla_a, D_A)$ defines a connection on $\bar{\M} = \M \times
\S$, which differs~\cite[App.A]{kis-n+m} from the Levi-Civita connection
$\bar{\nabla}_\mu$ as follows
\begin{equation} \label{eq:schw-cov}
	\bar{\nabla}_\mu v_\nu \to
	\begin{bmatrix}
		\nabla_a v_b & r\nabla_a v_B \\
		r D_A \frac{v_b}{r} & r^2 D_A \frac{v_B}{r}
	\end{bmatrix}
	+ \begin{bmatrix}
		0 & 0 \\
		0 & r^2 \Omega_{AB} \frac{r^c}{r} 
	\end{bmatrix} v_c
	+ \begin{bmatrix}
		0 & 0 \\
		-r_b \delta_A^C & 0
	\end{bmatrix} v_C .
\end{equation}

\begin{rem}
In giving the formula for $\bar{\nabla}_\mu$, we have essentially
extended the action of $\nabla_a$ and $D_A$ as linear differential
operators to tensors defined on $\bar{\M} = \M \times
\S$. The extension is done in exact analogy with the procedure described
at the start of Section~\ref{sec:flrw}. Recall that the covariant
derivatives $\nabla_a$ and $D_A$ simply act as coordinate derivatives on
scalars on $\M$ and $\S$. Suitably extending these coordinate
derivatives to $\bar{\M}$, the same can be said for
$(\M,\S)$-mixed tensors on $\bar{\M}$. But for the sake
of uniformity in notation, we continue to use the notation $\nabla_a$
and $D_A$ even when they act on $\M$- and $\S$-scalars
respectively.
\end{rem}

\bigskip

Next, we need to carefully study the structure of the curvature tensor.
The spacetime Riemann curvature tensor on $(\bar{\M},\bar{g})$
is~\cite[App.A]{kis-n+m}
\begin{align}
	\bar{R}_{\mu\nu\la\ka}
\notag
	&= \frac{\R}{2} (g\odot g)_{\mu\nu\la\ka}
	+ \frac{1}{2r^2} (\alpha - r_a r^a)
		(r^2\Omega \odot r^2\Omega)_{\mu\nu\la\ka}
	- \left(\frac{\nabla\nabla r}{r} \odot r^2\Omega\right)_{\mu\nu\la\ka} \\
	&= \frac{f_1-f_2}{4r^2} (g\odot g)_{\mu\nu\la\ka}
	+ \frac{(\alpha - f)}{2r^2} (r^2\Omega \odot r^2\Omega)_{\mu\nu\la\ka}
	- \frac{f_1}{2r^2} (g \odot r^2\Omega)_{\mu\nu\la\ka} ,
\label{eq:schw-curv-riem}
\end{align}
with the corresponding Ricci tensor
\begin{align}
	\bar{R}_{\mu\la}
\notag
	&= \frac{f_1 - f_2}{4r^2} 2 g_{\mu\la}
	+ \frac{(\alpha-f)}{2r^2} 2 (n-3) r^2 \Omega_{\mu\la}
	- \frac{f_1}{2r^2} (2 r^2 \Omega + (n-2) g)_{\mu\la} \\
	&= -\frac{f_2 + (n-3)f_1}{2r^2} g_{\mu\la}
	- \frac{f_1 + (n-3) (f-\alpha)}{r^2} r^2 \Omega_{\mu\la} .
\label{eq:schw-curv-ricc}
\end{align}
To satisfy Einstein's equations in the presence of a cosmological
constant~\eqref{eq:einstein-eq}, we must have
\begin{equation}
	\bar{R}_{\mu\la}
	= \frac{2 \Lambda}{(n-2)} \bar{g}_{\mu\la}
	= \frac{2 \Lambda}{(n-2)} (g + r^2 \Omega)_{\mu\la} ,
\end{equation}
which implies
\begin{align}
	f_1 &= -(n-3) (f-\alpha) - \frac{2 \Lambda}{(n-2)} r^2 , \\
	f_2 &= -(n-3) f_1 - \frac{4 \Lambda}{(n-2)} r^2 .
\end{align}
Eliminating the explicit dependence on $\Lambda$, we obtain precisely
the second order ODE~\eqref{eq:f-ode} whose general solution is given by
$f(r)$ in~\eqref{eq:gen-f}.

Recalling the definition of the Kulkarni-Nomizu and contraction
products~\eqref{eq:kn-prod} and~\eqref{eq:dot-con}, we get the following
useful identities, where we used $\bar{g}_{\mu\nu}$ for contractions:
\begin{align}
	(g\odot g) \cdot (g\odot g)
	&= 4 (g\odot g) , \\
	(r^2\Omega \odot r^2\Omega) \cdot (r^2\Omega \odot r^2\Omega)
	&= 4 (r^2\Omega \odot r^2\Omega) , \\
	(g\odot r^2\Omega) \cdot (g\odot r^2\Omega)
	&= 2 (g\odot r^2\Omega) , \\
	(g\odot r^2\Omega) \cdot (g\odot g)
	&= 0 , \\
	(g\odot r^2\Omega) \cdot (r^2\Omega\odot r^2\Omega)
	&= 0 , \\
	(g\odot g) \cdot (r^2\Omega\odot r^2\Omega)
	&= 0 , \\
	(g\odot g)_{\la\nu}{}^\nu{}_\ka
	&= -2 g_{\la\ka} , \\
	(r^2\Omega \odot r^2\Omega)_{\la\nu}{}^\nu{}_\ka
	&= -2(n-3) (r^2\Omega)_{\la\ka} , \\
	(g\odot r^2\Omega)_{\la\nu}{}^\nu{}_\ka
	&= -2 (r^2\Omega)_{\la\ka} - (n-2) g_{\la\ka} , \\
	(g\odot g)_{\mu\la\nu\ka} r^\la r^\ka
	&= 2(r^\la r_\la) g_{\mu\nu} - 2r_\mu r_\nu , \\
	(g\odot r^2\Omega)_{\mu\la\nu\ka} r^\la r^\ka
	&= (r^\la r_\la) (r^2\Omega)_{\mu\nu} , \\
	(r^2\Omega \odot r^2\Omega)_{\mu\la\nu\ka} r^\la r^\ka
	&= 0 .
\end{align}
Defining
\begin{align}
	\bar{T}_{\mu\nu\la\ka}
\notag
	&= \bar{R}_{\mu\nu\la\ka}
		- \frac{\Lambda}{(n-1)(n-2)} (\bar{g}\odot \bar{g})_{\mu\nu\la\ka} \\
	&= \frac{M}{r^{n-1}} \left[
		\frac{(n-2)(n-3)}{2} (g\odot g)_{\mu\nu\la\ka}
	+ (r^2\Omega \odot r^2\Omega)_{\mu\nu\la\ka}
	- (n-3) (g\odot r^2\Omega)_{\mu\nu\la\ka} \right] ,
\label{eq:schw-curv-lambda}
\end{align}
we also get the identities
\begin{align*}
	\bar{T}\cdot \bar{T}
	&= \left(\frac{M}{r^{n-1}}\right)^2 \left[
		(n-2)^2(n-3)^2 (g\odot g)
		+ 4 (r^2\Omega\odot r^2\Omega)
		+ 2 (n-3)^2 (g\odot r^2\Omega) \right] \\
	\bar{T}\cdot \bar{T} \cdot \bar{T}\cdot \bar{T}
	&= \left(\frac{M}{r^{n-1}}\right)^4 \left[
		4 (n-2)^4(n-3)^4 (g\odot g)
		+ 64 (r^2\Omega\odot r^2\Omega)
		+ 8 (n-3)^4 (g\odot r^2\Omega) \right] \\
	(\bar{T}\cdot \bar{T})_{\mu\nu}{}^{\nu}{}_\ka
	&= -\left(\frac{M}{r^{n-1}}\right)^2 \left[
		2(n-2)^2(n-3)^2 g
		+ 8(n-3) r^2\Omega
		+ 2(n-3)^2 (2 r^2 \Omega + (n-2) g) \right]_{\mu\ka} \\
	&= -\left(\frac{M}{r^{n-1}}\right)^2 \left[
		2(n-1)(n-2)(n-3)^2 g
		+ 4(n-1)(n-3) r^2\Omega \right]_{\mu\ka} \\
	(\bar{T}\cdot \bar{T})_{\mu\nu}{}^{\mu\nu}
	&= \left(\frac{M}{r^{n-1}}\right)^2 \left[
		4(n-1)(n-2)(n-3)^2
		+ 4(n-1)(n-2)(n-3) \right] \\
	&= 4(n-1)(n-2)^2(n-3) \left(\frac{M}{r^{n-1}}\right)^2 \\
	\frac{\bar{\nabla}_\la (\bar{T}\cdot \bar{T})_{\mu\nu}{}^{\mu\nu}}
		{(\bar{T}\cdot \bar{T})_{\mu\nu}{}^{\mu\nu}}
	&= -2(n-1) \frac{r_\la}{r}
\end{align*}

We would like to use these identities to write $(r^2\Omega)_{\la\ka}$
and a simple $r$-dependent scalar as covariant expression in the
curvature. For the latter, the simplest choice seems to be
\begin{equation}
	\bar{T}^{(1)}[\bar{g}]
	:= \frac{(\bar{T} \cdot \bar{T})_{\mu\nu}{}^{\mu\nu}}{4(n-1)(n-2)^2(n-3)}
	= \left(\frac{M}{r^{n-1}}\right)^2 .
\end{equation}
Next, we encounter a slight dimension dependence in the expression for
$(r^2\Omega)_{\la\ka}$. When $n>4$, we can use
%
%
\begin{equation}
	(r^2 \Omega)_{\la\ka}
	=
	\frac{2(n-2)^2}{(n-1)(n-4)} 
	\frac{(\bar{T}\cdot \bar{T})_{\la\nu}{}^{\nu}{}_\ka}{(\bar{T}\cdot\bar{T})_{\mu\nu}{}^{\mu\nu}}
	+ \frac{(n-2)(n-3)}{(n-1)(n-4)} \bar{g}_{\la\ka}
	=: \bar{T}^{(2)}_{\la\ka}[\bar{g}] ,
\end{equation}
while for $n=4$ the simplest expression we could find is
%
%
%
\begin{multline}
	(r^2 \Omega)_{\la\ka}
	=
	-\frac{2(n-1)(n-2)^4}{(n-3)^3 [\bar{\nabla}(\bar{T}\cdot\bar{T})_{\mu\nu}{}^{\mu\nu}]^2}
	\\
	\left(\bar{T}\cdot \bar{T} \cdot \bar{T}\cdot \bar{T}
		- \frac{(n-3)}{(n-1)} (\bar{T}\cdot\bar{T})_{\mu\nu}{}^{\mu\nu} \,
			\bar{T}\cdot \bar{T}
	\right)_{\la\rho\ka\sigma}
	\frac{\bar{\nabla}^\rho (\bar{T}\cdot\bar{T})_{\mu\nu}{}^{\mu\nu}}{(\bar{T}\cdot\bar{T})_{\mu\nu}{}^{\mu\nu}}
	\frac{\bar{\nabla}^\sigma (\bar{T}\cdot\bar{T})_{\mu\nu}{}^{\mu\nu}}{(\bar{T}\cdot\bar{T})_{\mu\nu}{}^{\mu\nu}}
	\\
	=: \bar{T}^{(3)}_{\la\ka}[\bar{g}] .
\end{multline}
Although, since it also works for $n>4$, if desired, the more
complicated expression $\bar{T}^{(3)}[\bar{g}]$ could actually be used
in higher dimensions too.

To make use of identity~\eqref{eq:lie-killing}, we compute the Lie
derivatives
\begin{align}
	\Lie_v \left(\frac{M}{r^{n-1}}\right)^2
	&= -2(n-1) \frac{r^c v_c}{r} \left(\frac{M}{r^{n-1}}\right)^2 , \\
	\Lie_v (r^2\Omega)_{\mu\nu}
	&\to \begin{bmatrix}
			0 & r (r\nabla_a \frac{v_B}{r}) \\
			r (r\nabla_b \frac{v_A}{r}) &
				2r^2 (D_{(A} \frac{1}{r} v_{B)} + \Omega_{AB} \frac{r^c v_c}{r})
		\end{bmatrix} .
\end{align}
Hence, defining the linear operators
\begin{align}
	\label{eq:schw-J1}
	J_1[h] &:= -\frac{1}{2(n-1)} \frac{r}{f} \left(\frac{M}{r^{n-1}}\right)^{-2}
		\dot{\bar{T}}^{(1)}[h] , \\
	\label{eq:schw-J2}
	J_2[h]_{aB} &:= \frac{1}{r^2}
		\begin{cases}
			\dot{\bar{T}}^{(2)}_{aB}[h] & (n > 4) \\
			\dot{\bar{T}}^{(3)}_{aB}[h] & (n = 4)
		\end{cases} ,
\end{align}
the Lie derivative formula~\eqref{eq:lie-killing} implies the
compositional identities
\begin{align}
	J_1 \circ \bar{K}[v] &= \frac{r^c v_c}{f} , \\
	J_2 \circ \bar{K}[v]_{aB} &= \nabla_a \frac{v_B}{r} .
\end{align}

Based on Equation~\eqref{eq:schw-cov}, the explicit expression for the
Killing operator is
\begin{equation}
	\bar{K}_{\mu\nu}[v] = 2 \bar{\nabla}_{(\mu} v_{\nu)}
	\to
	\begin{bmatrix}
		2\nabla_{(a} v_{b)} & r^2\nabla_a \frac{1}{r} v_B + r D_B \frac{v_a}{r} \\
		r^2\nabla_b \frac{1}{r} v_A + r D_A \frac{v_b}{r} & 2 r^2 D_{(A} \frac{1}{r} v_{B)}
			+ 2r^2 \frac{r^c v_c}{r} \Omega_{AB}
	\end{bmatrix} .
\end{equation}

For further convenience, we parametrize
\begin{equation} \label{eq:schw-vh-param}
	v_\mu \to \begin{bmatrix} u_t\, f dt_a + u_r dr_a \\ r (r X_A) \end{bmatrix}
	\quad \text{and} \quad
	h_{\mu\nu} \to \begin{bmatrix}
		p \, r_a r_b - 2t_{(a} w_{b)} & r (r Y_{aB}) \\
		r (r Y_{bA}) & r^2 Z_{AB}
		\end{bmatrix} .
\end{equation}
The Killing equation $h = \bar{K}[v]$ then becomes
\begin{equation} \label{eq:schw-killing}
	\begin{bmatrix}
		p \\ \cmidrule(lr){1-1}
		w \\
		Y \\
		Z
	\end{bmatrix}
	= \bar{K} \begin{bmatrix}
		u_r \\ \cmidrule(lr){1-1}
		u_t \\
		X
	\end{bmatrix}
	= \begin{bmatrix}[c|cc]
		\frac{1}{f} (2f \del_r + \frac{f_1}{r}) & 0 & 0 \\
		\cmidrule(lr){1-1} \cmidrule(lr){2-3}
		dr\,\frac{1}{f} \del_t - dt\,\frac{f_1}{2r} & \nabla & 0 \\
		dr\, \frac{f}{r^2} D \frac{1}{f} & dt\, \frac{f}{r^2} D & \nabla  \\
		2\Omega \frac{f}{r} & 0 & C_0
	\end{bmatrix} \begin{bmatrix}
		u_r \\ \cmidrule(lr){1-1}
		u_t \\
		X
	\end{bmatrix} ,
\end{equation}
where $C_0[X]_{AB} = D_A X_B + D_B X_A$ is the Killing operator on the
constant curvature factor $(\S,\Omega)$, and hence the first operator of
the Calabi complex $C_i$, $i\ge 0$, which constitutes a compatibility
complex for $C_0$ (Section~\ref{sec:cc}).

\begin{rem}
Note that we are continuing here to use the block matrix notation for
differential operators, as discussed previously in
Remark~\ref{rem:block-matrix}.
\end{rem}

With the above parametrizations for $v$ and $h$, the compositional
identities for the operators $J_1$~\eqref{eq:schw-J1} and
$J_2$~\eqref{eq:schw-J2} simplify to
\begin{subequations} \label{eq:schw-J1J2K}
\begin{align}
	J_1 \circ \bar{K}
	&= \begin{bmatrix}[c|ccc] J_1^p & J_1^w & J_1^Y & J_1^Z \end{bmatrix}
		\circ \bar{K}
	= \begin{bmatrix}[c|cc] \id & 0 & 0 \end{bmatrix} , \\
	J_2 \circ \bar{K}
	&= \begin{bmatrix}[c|ccc] J_2^p & J_2^w & J_2^Y & J_2^Z \end{bmatrix}
		\circ \bar{K}
	= \begin{bmatrix}[c|cc] 0 & 0 & \nabla \end{bmatrix} .
\end{align}
\end{subequations}

Now we have all the information that we need to use the methods of
Section~\ref{sec:compat} to construct a compatibility complex for the
Killing operator $\bar{K}$. We will follow roughly the same outline as
we did in the Section~\ref{sec:flrw} on cosmological FLRW geometries.

From now on, our strategy will be to show that our Killing operator
$\bar{K}_0 = \bar{K}$ is equivalent to each of the operators
\begin{equation}
	\tK_0
	\begin{bmatrix}
		u_t \\
		X
	\end{bmatrix}
	= \begin{bmatrix}
		\bar{d}_0 = \bar{\nabla} \to \begin{bmatrix} \nabla \\ D \end{bmatrix} &
			\begin{bmatrix} 0 \\ 0 \end{bmatrix} \\
		dt\,\frac{f}{r^2}\,D & d_0 = \nabla \\
		0 & C_0
	\end{bmatrix}
	\begin{bmatrix}
		u_t \\
		X
	\end{bmatrix}
	\quad \text{and} \quad
	K_0
	\begin{bmatrix}
		u_t \\
		X
	\end{bmatrix}
	= \begin{bmatrix}
		\bar{d}_0 & 0 \\
		0 & d_0 \\
		0 & C_0
	\end{bmatrix}
	\begin{bmatrix}
		u_t \\
		X
	\end{bmatrix} ,
\end{equation}
where we have introduced the notation $\bar{d}_i$ and $d_i$, $i\ge 0$,
for the usual exterior derivatives acting on $i$-forms on $\bar{\M}$ and
$\M$ respectively (hence the corresponding de~Rham complexes). In the
sequel, we will use the notations $\bar{d}_0$ and $\left[
\begin{smallmatrix} \nabla \\ D \end{smallmatrix} \right]$ completely
interchangeably. Then, we will lift the known compatibility complex for
$K_0$ first to $\tK_0$ and finally to $\bar{K}_0$. This known
compatibility complex has the form

\begin{subequations} \label{eq:schw-calabi}
\begin{align}
	K_0 &=
	\begin{bmatrix}
		\bar{d}_0 & 0 \\
		0 & d_0 \\
		0 & C_0
	\end{bmatrix} , \\
	K_1 &=
	\begin{bmatrix}
		\bar{d}_1 & 0 & 0 \\
		0 & d_1 & 0 \\
		0 & -C_0 & d_0 \\
		0 & 0 & C_1
	\end{bmatrix} , \\
	K_2 &=
	\begin{bmatrix}
		\bar{d}_2 & 0 & 0 & 0 \\
		0 & C_0 & d_1 & 0 \\
		0 & 0 & -C_1 & d_0 \\
		0 & 0 & 0 & C_2
	\end{bmatrix} , \\
	K_i &=
	\begin{bmatrix}
		\bar{d}_i & 0 & 0 & 0 \\
		0 & C_{i-2} & d_1 & 0 \\
		0 & 0 & -C_{i-1} & d_0 \\
		0 & 0 & 0 & C_i
	\end{bmatrix} 
		\quad (2 < i < n-2), \\
	K_{n-2} &=
	\begin{bmatrix}
		\bar{d}_{n-2} & 0 & 0 & 0 \\
		0 & C_{n-4} & d_1 & 0 \\
		0 & 0 & -C_{n-3} & d_0
	\end{bmatrix} , \\
	K_{n-1} &=
	\begin{bmatrix}
		\bar{d}_{n-1} & 0 & 0 \\
		0 & C_{n-3} & d_1
	\end{bmatrix} , \\
	K_i &= 0
		\quad (n \le i) .
\end{align}
\end{subequations}
It is straightforward to construct an equivalence between this complex
and a twisted de~Rham complex, similar to how it was done
in~\eqref{eq:flrw-calabi}, thus showing that each of the above
compatibility operators is complete.

\bigskip

We start with the explicit reduction of $\bar{K}_0$ to $\tK_0$ and then
to $K_0$. Here and in each subsequent step, we give pairs of diagrams,
which could be concatenated vertically, illustrating the passage from
the $\bar{K}_i$ to the $\tK_i$ and to the $K_i$ sequences. All the
diagrams below illustrate equivalences up to homotopy, as discussed in
Section~\ref{sec:compat}. All the required identities can be checked by
direct calculation, making careful use of the known identities $d_{i+1}
\circ d_i = 0$, $\bar{d}_{i+1} \circ \bar{d}_i = 0$, $C_{i+1} \circ C_i
= 0$, as well as the compositional identities~\eqref{eq:schw-J1J2K}.

\begin{subequations} \label{eq:schw-lift1}

\begin{equation} \label{eq:schw-lift1a}
\begin{tikzcd}[column sep=7.5cm,row sep=5cm]
	\bullet
		\ar{r}{\bar{K}_0 = \bar{K}}
		\ar[swap,shift right]{d}{\begin{bmatrix}[c|cc]0 & \id & 0 \\ 0 & 0 & \id\end{bmatrix}} \&
	\bullet
		\ar[dashed,bend left=15]{l}{\begin{bmatrix}J_1 \\ \cmidrule(lr){1-1} 0 \\ 0\end{bmatrix}}
		\ar[swap,shift right,pos=0.6]{d}{\begin{bmatrix}[c|ccc]
				0 & \id & 0 & 0 \\
				0 & 0 & \id & 0 \\
				0 & 0 & 0 & \id
			\end{bmatrix}
			\left( \id - \bar{K} \begin{bmatrix} J_1 \\ \cmidrule(lr){1-1} 0 \\ 0 \end{bmatrix} \right)}
	\\
	\bullet
		\ar[swap]{r}{\tK_0 = \begin{bmatrix}
			\nabla & 0 \\
			dt\,\frac{f}{r^2} D & \nabla \\
			0 & C_0
		\end{bmatrix}}
		\ar[swap,shift right]{u}{\begin{bmatrix}0 & 0 \\ \cmidrule(lr){1-2} \id & 0 \\ 0 & \id\end{bmatrix}} \&
	\bullet
		\ar[swap,dashed,bend right=15]{l}{0}
		\ar[swap,shift right]{u}{
			\begin{bmatrix}
				0 & 0 & 0 \\
				\cmidrule(lr){1-3}
				\id & 0 & 0 \\
				0 & \id & 0 \\
				0 & 0 & \id
			\end{bmatrix}}
\end{tikzcd}
\end{equation}

\begin{equation} \label{eq:schw-lift1b}
\begin{tikzcd}[column sep=7.5cm,row sep=5cm]
	\bullet
		\ar{r}{\tK_0 = \begin{bmatrix}
			\nabla & 0 \\
			dt\,\frac{f}{r^2} D & \nabla \\
			0 & C_0
		\end{bmatrix}}
		\ar[swap,shift right]{d}{\begin{bmatrix}\id & 0 \\ 0 & \id\end{bmatrix}} \&
	\bullet
		\ar[dashed,bend left=20]{l}{0}
		\ar[swap,shift right]{d}{\begin{bmatrix}
			\begin{bmatrix} \id \\ r^2 dt\cdot J_2^w \end{bmatrix} &
				\begin{bmatrix} 0 \\ r^2 dt\cdot (J_2^Y-\id) \end{bmatrix} &
				\begin{bmatrix} 0 \\ r^2 dt\cdot J_2^Z \end{bmatrix} \\
			J_2^w & J_2^Y & J_2^Z \\
			0 & 0 & \id
		\end{bmatrix}}
	\\
	\bullet
		\ar[swap]{r}{K_0 = \begin{bmatrix}
			\bar{d}_0 & 0 \\
			0 & d_0 \\
			0 & C_0
		\end{bmatrix}}
		\ar[swap,shift right]{u}{\begin{bmatrix}\id & 0 \\ 0 & \id\end{bmatrix}} \&
	\bullet
		\ar[swap,dashed,bend right=20]{l}{0}
		\ar[swap,shift right]{u}{
			\begin{bmatrix}[@{~[}cc@{]~}cc]
				\id & 0 & 0 & 0 \\
				0 & dt\,\frac{f}{r^2} & \id & 0 \\
				0 & 0 & 0 & \id
			\end{bmatrix}}
\end{tikzcd}
\end{equation}

\end{subequations}

Next, as we did previously in Section~\ref{sec:flrw}, we iterate the
construction from Lemma~\ref{lem:lift-compat}, while applying the
simplifications discussed after the proof of
Theorem~\ref{thm:fintype-compat}. As before, the $\bar{K}_i$ and $\tK_i$
complexes are built up by appending the following diagrams to the right
of the diagrams in~\eqref{eq:schw-lift1}. Also as before, we do not
repeat the labels on the vertical arrows if they can be read off a
preceding diagram.

The resulting operators $\bar{K}_1$ and $\tK_1$ will be, respectively,
compatibility operators for $\bar{K}_0$ and $\tK_0$. Some of the
auxiliary arrows in these diagrams use the operators $\tH_{J_1}$ and
$H_{J_1}$, which are defined as follows. Noting that
\begin{align*}
	J_1 \bar{K} \begin{bmatrix} 0 & 0 \\ \cmidrule(lr){1-2} \id & 0 \\ 0 & \id \end{bmatrix}
	&= J_1 \begin{bmatrix}
		0 & 0 \\
		\cmidrule(lr){1-2}
		\nabla & 0 \\
		dt\, \frac{f}{r^2} D & \nabla \\
		0 & C_0
		\end{bmatrix} \\
	&= \begin{bmatrix} J_1^w & J_1^Y & J_1^Z \end{bmatrix}
		\begin{bmatrix}
		\nabla & 0 \\
		dt\, \frac{f}{r^2} D & \nabla \\
		0 & C_0
		\end{bmatrix}
	= \begin{bmatrix}[c|cc] \id & 0 & 0 \end{bmatrix}
		\begin{bmatrix} 0 & 0 \\ \cmidrule(lr){1-2} \id & 0 \\ 0 & \id \end{bmatrix}
	= 0 ,
\end{align*}
we must be able to factor
\begin{equation} \label{eq:schw-HJ1-tHJ1}
	\begin{bmatrix}
		J_1^w & J_1^Y & J_1^Z
	\end{bmatrix}
	=
	\tH_{J_1} \tK_1 ,
	\quad \text{and} \quad
	\begin{bmatrix}[c|ccc]
		J_1^p & J_1^w & J_1^Y & J_1^Z
	\end{bmatrix}
	=
	H_{J_1} \begin{bmatrix}[c|c] \id & 0 \\ \cmidrule(lr){1-1} \cmidrule(lr){2-2} 0 & \tK_1 \end{bmatrix}
\end{equation}
through some operators $\tH_{J_1}$ and $H_{J_1}$. A bit more precisely,
$H_{J_1} = \begin{bmatrix}[c|c] J_1^p & \tH_{J_1} \end{bmatrix}$.

\begin{subequations} \label{eq:schw-lift2}

\begin{equation} \label{eq:schw-lift2a}
\begin{tikzcd}[column sep=7cm,row sep=5cm]
	\bullet
		\ar{r}{\bar{K}_1 = \begin{bmatrix}[c|c]
				\id & 0 \\ \cmidrule(lr){1-1} \cmidrule(lr){2-2}
				0 & \tK_1
			\end{bmatrix}
			\left(\id - \bar{K}\begin{bmatrix}J_1 \\ \cmidrule(lr){1-1} 0 \\ 0 \end{bmatrix}\right)}
		\ar[swap,shift right]{d}
		\&
	\bullet
		\ar[swap,shift right]{d}{\begin{bmatrix}[c|c] 0 & \id \end{bmatrix}}
		\ar[dashed,bend left=15]{l}{
			\begin{bmatrix}[c|c]
				\id & 0 \\
				\cmidrule(lr){1-1} \cmidrule(lr){2-2}
				0 & 0 \\
			\end{bmatrix}}
	\\
	\bullet
		\ar[swap]{r}{\tK_1}
		\ar[swap,shift right]{u} \&
	\bullet
		\ar[swap,shift right]{u}{
			\left(\id - \begin{bmatrix}[c|c] \id & 0 \\ \cmidrule(lr){1-1} \cmidrule(lr){2-2} 0 & \tK_1 \end{bmatrix}
				\bar{K} \begin{bmatrix} H_{J_1} \\ \cmidrule(lr){1-1} 0 \\ 0 \end{bmatrix}
			\right)
			\begin{bmatrix}[c|cccc]
				0 & 0 & 0 & 0 & 0 \\ \cmidrule(lr){1-1} \cmidrule(lr){2-5}
				\id & 0 & 0 & 0 & 0 \\ \cmidrule(lr){1-1} \cmidrule(lr){2-5}
				0 & \id & 0 & 0 & 0 \\
				0 & 0 & \id & 0 & 0 \\
				0 & 0 & 0 & \id & 0 \\
				0 & 0 & 0 & 0 & \id
			\end{bmatrix}
		}
		\ar[swap,dashed,bend right=15]{l}{
			\begin{bmatrix}[c|ccc]
				0 & \id & 0 & 0 \\
				0 & 0 & \id & 0 \\
				0 & 0 & 0 & \id
			\end{bmatrix} \bar{K} \begin{bmatrix} \tH_{J_1} \\ \cmidrule(lr){1-1} 0 \\ 0 \end{bmatrix}}
\end{tikzcd}
\end{equation}

\begin{equation} \label{eq:schw-lift2b}
\begin{tikzcd}[column sep=7cm,row sep=5cm]
	\bullet
		\ar{r}{\tK_1 = \begin{bmatrix}
			\begin{bmatrix} 0 & -dr\cdot(-) \end{bmatrix} & dr\cdot(-) & 0 \\
			\cmidrule(lr){1-3}
			\bar{d}_1 \begin{bmatrix} \id & 0 \\ 0 & -r^2 dt\cdot(-) \end{bmatrix} &
				\bar{d}_1 \begin{bmatrix} 0 \\ r^2 dt\cdot(-) \end{bmatrix} & 0 \\
			0 & d_1 & 0 \\
			0 & -C_0 & d_0 \\
			0 & 0 & C_1
		\end{bmatrix}
		\begin{bmatrix}
			\begin{bmatrix}\id \\ 0\end{bmatrix} &
				\begin{bmatrix}0 \\ \id\end{bmatrix} &
				\begin{bmatrix}0 \\ 0\end{bmatrix} \\
			J_2^w & J_2^Y & J_2^Z \\
			0 & 0 & \id
		\end{bmatrix}}
		\ar[swap,shift right]{d}
		\&
	\bullet
		\ar[swap,shift right]{d}{\begin{bmatrix}[c|cccc]
				0 & \id & 0 & 0 & 0 \\
				0 & 0 & \id & 0 & 0 \\
				0 & 0 & 0 & \id & 0 \\
				0 & 0 & 0 & 0 & \id
			\end{bmatrix}}
		\ar[dashed,bend left=15,pos=.577]{l}{\begin{bmatrix}
			0 \\ -\frac{1}{f}\,dr \\ 0
		\end{bmatrix} \begin{bmatrix}[c|cccc] \id & 0 & 0 & 0 & 0 \end{bmatrix}}
	\\
	\bullet
		\ar[swap]{r}{K_1 = \begin{bmatrix}
				\bar{d}_1 & 0 & 0 \\
				0 & d_1 & 0 \\
				0 & -C_0 & d_0 \\
				0 & 0 & C_1
			\end{bmatrix}}
		\ar[swap,shift right]{u} \&
	\bullet
		\ar[swap,shift right]{u}{
			\begin{bmatrix}
				0 & 0 & 0 & 0 \\
				\cmidrule(lr){1-4}
				\id & 0 & 0 & 0 \\
				0 & \id & 0 & 0 \\
				0 & 0 & \id & 0 \\
				0 & 0 & 0 & \id
			\end{bmatrix}
			+ \begin{bmatrix}
				dr\cdot(-) \\
				\cmidrule(lr){1-1}
				\bar{d}_1 \begin{bmatrix} 0 \\ r^2 dt\cdot(-) \end{bmatrix} \\
				d_1 \\
				-C_0 \\
				0
			\end{bmatrix}
			\tH_{J_2}
		}
		\ar[swap,dashed,bend right=15,pos=.6]{l}{
			\begin{bmatrix}
				\begin{bmatrix} 0 \\ -r^2 dt\cdot(-) \end{bmatrix} \\
				-\id \\
				0
			\end{bmatrix} \tH_{J_2}}
\end{tikzcd}
\end{equation}

\end{subequations}

Above, we have used the notations $dt\cdot(-) = dt^a (-)_a$ and
$dr\cdot(-) = dr^a (-)_a$. Also, the operator $\tH_{J_2}$ is defined as
follows. Noting that
\begin{align*}
	J_2 \bar{K} \begin{bmatrix} 0 \\ \cmidrule(lr){1-1} \id \\ 0 \end{bmatrix}
	&= J_2 \begin{bmatrix} 0 \\ \cmidrule(lr){1-1} \nabla \\ dt\,\frac{f}{r^2} D \\ 0\end{bmatrix}
	= \begin{bmatrix} J_2^w & J_2^Y\,dt\,\frac{f}{r^2} \end{bmatrix}
		\left(\bar{d}_0 = \begin{bmatrix} \nabla \\ D \end{bmatrix}\right)
	= \begin{bmatrix}[c|cc] 0 & 0 & \nabla \end{bmatrix}
		\begin{bmatrix} 0 \\ \cmidrule(lr){1-1} \id \\ 0 \end{bmatrix}
	= 0 , \\
	J_2 \bar{K} \begin{bmatrix} 0 \\ \cmidrule(lr){1-1} 0 \\ \id \end{bmatrix}
	&= J_2 \begin{bmatrix} 0 \\ \cmidrule(lr){1-1} 0 \\ \nabla \\ C_0 \end{bmatrix}
	= \begin{bmatrix} J_2^Y & J_2^Z \end{bmatrix}
		\begin{bmatrix} d_0 = \nabla \\ C_0 \end{bmatrix}
	= \begin{bmatrix}[c|cc] 0 & 0 & \nabla \end{bmatrix}
		\begin{bmatrix} 0 \\ \cmidrule(lr){1-1} 0 \\ \id \end{bmatrix}
	= \nabla
	= \begin{bmatrix} \id & 0 \end{bmatrix}
		\begin{bmatrix} \nabla \\ C_0 \end{bmatrix} ,
\end{align*}
we must be able to factor
\begin{equation} \label{eq:schw-tHJ2}
	\begin{bmatrix}
		\begin{bmatrix} J_2^w & J_2^Y \, dt\, \frac{f}{r^2} \end{bmatrix} &
		J_2^Y - \id & J_2^Z
	\end{bmatrix}
	=
	\tH_{J_2}
	\begin{bmatrix}
		\bar{d}_1 & 0 & 0 \\
		0 & d_1 & 0 \\
		0 & -C_0 & d_0 \\
		0 & 0 & C_1
	\end{bmatrix}
\end{equation}
through some operator $\tH_{J_2}$.

For convenience, we note that
\begin{equation*}
	\begin{bmatrix}[c|cccc]
		\id & 0 & 0 & 0 & 0
	\end{bmatrix}
	\tK_1
	=
	\begin{bmatrix}
		dr\cdot J_2^w & dr\cdot(J_2^Y-\id) & dr\cdot J_2^Z
	\end{bmatrix} ,
\end{equation*}
while on the other hand
\begin{equation*} 
	\begin{bmatrix}[c|c]
		0 & \tH_{J_2}
	\end{bmatrix} \tK_1
	= J_2^Y \frac{dr}{f}
		\begin{bmatrix} dr\cdot J_2^w & dr\cdot(J_2^Y-\id) & dr\cdot J_2^Z \end{bmatrix}
	.
\end{equation*}
Then, defining
\begin{equation} \label{eq:schw-HJ2}
	H_{J_2} = \begin{bmatrix}[c|c] -J_2^Y \frac{dr}{f} & \tH_{J_2} \end{bmatrix} ,
	\quad \text{we have} \quad
	H_{J_2} \tK_1 = 0 .
\end{equation}

With the next iteration of Lemma~\ref{lem:lift-compat}, we construct the
compatibility operators $\bar{K}_2$ and $\tK_2$.

\begin{subequations} \label{eq:schw-lift3}

\begin{equation} \label{eq:schw-lift3a}
\begin{tikzcd}[column sep=7cm,row sep=5cm]
	\bullet
		\ar{r}{\bar{K}_2 =
		\begin{bmatrix}[c|c]
			\multicolumn{2}{c}{H_{J_1}} \\
			\cmidrule(lr){1-2}
			0 & \tK_2
		\end{bmatrix}}
		\ar[swap,shift right]{d}
		\&
	\bullet
		\ar[swap,shift right]{d}{\begin{bmatrix}[c|c]
				0 & \id
			\end{bmatrix}}
		\ar[dashed,bend left=15]{l}{
			\begin{bmatrix}[c|c]
				\id & 0 \\
				\cmidrule(lr){1-1} \cmidrule(lr){2-2}
				0 & \tK_1
			\end{bmatrix}
			\bar{K}
			\begin{bmatrix}[c|c]
				\id & 0 \\
				\cmidrule(lr){1-1} \cmidrule(lr){2-2}
				0 & 0 \\
			\end{bmatrix}
		}
	\\
	\bullet
		\ar[swap]{r}{\tK_2}
		\ar[swap,shift right]{u} \&
	\bullet
		\ar[swap,shift right]{u}{
			\begin{bmatrix}
				0 \\ \cmidrule(lr){1-1}
				\id
			\end{bmatrix}}
		\ar[swap,dashed,bend right=15]{l}{0}
\end{tikzcd}
\end{equation}

\begin{equation} \label{eq:schw-lift3b}
\begin{tikzcd}[column sep=7cm,row sep=5cm]
	\bullet
		\ar{r}{\tK_2 =
		\begin{bmatrix}[c|cccc]
			\multicolumn{5}{c}{H_{J_2}} \\
			\cmidrule(lr){1-5}
			0 & \bar{d}_2 & 0 & 0 & 0 \\
			0 & 0 & C_0 & d_1 & 0 \\
			0 & 0 & 0 & -C_1 & d_0 \\
			0 & 0 & 0 & 0 & C_2
		\end{bmatrix}}
		\ar[swap,shift right]{d}
		\&
	\bullet
		\ar[swap,shift right,pos=0.6]{d}{\begin{bmatrix}[c|cccc]
				0 & \id & 0 & 0 & 0 \\
				0 & 0 & \id & 0 & 0 \\
				0 & 0 & 0 & \id & 0 \\
				0 & 0 & 0 & 0 & \id
			\end{bmatrix}}
		\ar[dashed,bend left=15,pos=.577]{l}{
			-\begin{bmatrix}
				dr\cdot(-) \\
				\cmidrule(lr){1-1}
				\bar{d}_1 \begin{bmatrix} 0 \\ r^2 dt\cdot(-) \end{bmatrix} \\
				d_1 \\
				-C_0 \\
				0
			\end{bmatrix}
			\begin{bmatrix}[c|cccc] \id & 0 & 0 & 0 & 0 \end{bmatrix}}
	\\
	\bullet
		\ar[swap]{r}{K_2 = \begin{bmatrix}
				\bar{d}_2 & 0 & 0 & 0 \\
				0 & C_0 & d_1 & 0 \\
				0 & 0 & -C_1 & d_0 \\
				0 & 0 & 0 & C_2
			\end{bmatrix}}
		\ar[swap,shift right]{u} \&
	\bullet
		\ar[swap,shift right]{u}{
			\begin{bmatrix}
				0 & 0 & 0 & 0 \\
				\cmidrule(lr){1-4}
				\id & 0 & 0 & 0 \\
				0 & \id & 0 & 0 \\
				0 & 0 & \id & 0 \\
				0 & 0 & 0 & \id
			\end{bmatrix}}
		\ar[swap,dashed,bend right=15]{l}{0}
\end{tikzcd}
\end{equation}

\end{subequations}

With two more iterations of Lemma~\eqref{lem:lift-compat}, we construct
the compatibility operators $\bar{K}_3$, $\bar{K}_4$ and $\tK_3$,
$\tK_4$.

\begin{subequations} \label{eq:schw-lift4}

\begin{equation} \label{eq:schw-lift4a}
\begin{tikzcd}[column sep=5cm,row sep=5cm]
	\bullet
		\ar[swap,shift right]{d}
		\ar{r}{\bar{K}_3 = \begin{bmatrix}[c|c|cccc]
			0 & 0 & \bar{d}_3 & 0 & 0 & 0 \\
			0 & 0 & 0 & C_1 & d_1 & 0 \\
			0 & 0 & 0 & 0 & -C_2 & d_0 \\
			0 & 0 & 0 & 0 & 0 & C_3
		\end{bmatrix}} \&
	\bullet
		\ar[dashed,bend left]{l}{0}
		\ar[swap,shift right]{d}{\begin{bmatrix}
				\id & 0 & 0 & 0 \\
				0 & \id & 0 & 0 \\
				0 & 0 & \id & 0 \\
				0 & 0 & 0 & \id
			\end{bmatrix}}
		\ar{r}{\bar{K}_4 = \begin{bmatrix}
				\bar{d}_4 & 0 & 0 & 0 \\
				0 & C_2 & d_1 & 0 \\
				0 & 0 & -C_3 & d_0 \\
				0 & 0 & 0 & C_4
			\end{bmatrix}} \&
	\bullet
		\ar[swap,shift right]{d}{\begin{bmatrix}
				\id & 0 & 0 & 0 \\
				0 & \id & 0 & 0 \\
				0 & 0 & \id & 0 \\
				0 & 0 & 0 & \id
			\end{bmatrix}}
		\ar[dashed,bend left]{l}{0}
	\\
	\bullet
		\ar[swap,shift right]{u}
		\ar[swap]{r}{\tK_3 = \begin{bmatrix}[c|cccc]
				0 & \bar{d}_3 & 0 & 0 & 0 \\
				0 & 0 & C_1 & d_1 & 0 \\
				0 & 0 & 0 & -C_2 & d_0 \\
				0 & 0 & 0 & 0 & C_3
			\end{bmatrix}} \&
	\bullet
		\ar[swap,dashed,bend right]{l}{0}
		\ar[swap,shift right]{u}{\begin{bmatrix}
				\id & 0 & 0 & 0 \\
				0 & \id & 0 & 0 \\
				0 & 0 & \id & 0 \\
				0 & 0 & 0 & \id
			\end{bmatrix}}
		\ar[swap]{r}{\tK_4 = \begin{bmatrix}
				\bar{d}_4 & 0 & 0 & 0 \\
				0 & C_2 & d_1 & 0 \\
				0 & 0 & -C_3 & d_0 \\
				0 & 0 & 0 & C_4
			\end{bmatrix}} \&
	\bullet
		\ar[swap,shift right]{u}{\begin{bmatrix}
				\id & 0 & 0 & 0 \\
				0 & \id & 0 & 0 \\
				0 & 0 & \id & 0 \\
				0 & 0 & 0 & \id
			\end{bmatrix}}
		\ar[swap,dashed,bend right]{l}{0}
\end{tikzcd}
\end{equation}

\begin{equation} \label{eq:schw-lift4b}
\begin{tikzcd}[column sep=5cm,row sep=5cm]
	\bullet
		\ar[swap,shift right]{d}
		\ar{r}{\tK_3 = \begin{bmatrix}[c|cccc]
			0 & \bar{d}_3 & 0 & 0 & 0 \\
			0 & 0 & C_1 & d_1 & 0 \\
			0 & 0 & 0 & -C_2 & d_0 \\
			0 & 0 & 0 & 0 & C_3
		\end{bmatrix}} \&
	\bullet
		\ar[dashed,bend left]{l}{0}
		\ar[swap,shift right]{d}{\begin{bmatrix}
				\id & 0 & 0 & 0 \\
				0 & \id & 0 & 0 \\
				0 & 0 & \id & 0 \\
				0 & 0 & 0 & \id
			\end{bmatrix}}
		\ar{r}{\tK_4 = \begin{bmatrix}
				\bar{d}_4 & 0 & 0 & 0 \\
				0 & C_2 & d_1 & 0 \\
				0 & 0 & -C_3 & d_0 \\
				0 & 0 & 0 & C_4
			\end{bmatrix}} \&
	\bullet
		\ar[swap,shift right]{d}{\begin{bmatrix}
				\id & 0 & 0 & 0 \\
				0 & \id & 0 & 0 \\
				0 & 0 & \id & 0 \\
				0 & 0 & 0 & \id
			\end{bmatrix}}
		\ar[dashed,bend left]{l}{0}
	\\
	\bullet
		\ar[swap,shift right]{u}
		\ar[swap]{r}{K_3 = \begin{bmatrix}
				\bar{d}_3 & 0 & 0 & 0 \\
				0 & C_1 & d_1 & 0 \\
				0 & 0 & -C_2 & d_0 \\
				0 & 0 & 0 & C_3
			\end{bmatrix}} \&
	\bullet
		\ar[swap,dashed,bend right]{l}{0}
		\ar[swap,shift right]{u}{\begin{bmatrix}
				\id & 0 & 0 & 0 \\
				0 & \id & 0 & 0 \\
				0 & 0 & \id & 0 \\
				0 & 0 & 0 & \id
			\end{bmatrix}}
		\ar[swap]{r}{K_4 = \begin{bmatrix}
				\bar{d}_4 & 0 & 0 & 0 \\
				0 & C_2 & d_1 & 0 \\
				0 & 0 & -C_3 & d_0 \\
				0 & 0 & 0 & C_4
			\end{bmatrix}} \&
	\bullet
		\ar[swap,shift right]{u}{\begin{bmatrix}
				\id & 0 & 0 & 0 \\
				0 & \id & 0 & 0 \\
				0 & 0 & \id & 0 \\
				0 & 0 & 0 & \id
			\end{bmatrix}}
		\ar[swap,dashed,bend right]{l}{0}
\end{tikzcd}
\end{equation}

\end{subequations}

From this point on, the complexes $\bar{K}_i$, $\tK_i$ become identical
with $K_i$ from~\eqref{eq:schw-calabi}.

\bigskip

\begin{thm} \label{thm:schw}
Consider the family of $n$-dimensional ($n\ge4$) spacetimes $(\bar{\M},
\bar{g})$ introduced at the top of Section~\ref{sec:schw}, warped
products of a static $2$-dimensional factor $(\M,g)$ and a constant
curvature factor $(\S,\Omega)$ with sectional curvature $\alpha$, which
includes the higher dimensional Schwarzschild
(Schwarzschild-Tangherlini), Taub and pseudo-Schwarzschild solutions,
possibly with a nonzero cosmological constant. The full
compatibility complex $\bar{K}_i$ for the Killing operator $\bar{K}_0 =
\bar{K}$~\eqref{eq:schw-killing} is given by
\begin{subequations} \label{eq:schw-complex}
\begin{align}
	\bar{K}_0 &=
	\begin{bmatrix}[c|cc]
		\frac{1}{f} (2f \del_r + \frac{f_1}{r}) & 0 & 0 \\
		\cmidrule(lr){1-1} \cmidrule(lr){2-3}
		dr\,\frac{1}{f} \del_t - dt\,\frac{f_1}{2r} & \nabla & 0 \\
		dr\, \frac{f}{r^2} D \frac{1}{f} & dt\, \frac{f}{r^2} D & \nabla  \\
		2\Omega \frac{f}{r} & 0 & C_0
	\end{bmatrix}
	, \\
	\bar{K}_1 &=
	\begin{bmatrix}[c|ccc]
		\id & 0 & 0 & 0 \\
		\cmidrule(lr){1-1} \cmidrule(lr){2-4}
		0 & \begin{bmatrix} 0 & -dr\cdot(-) \end{bmatrix} & dr\cdot(-) & 0 \\[1ex]
		\cmidrule(lr){1-1}
		\cmidrule(lr){2-4}
		0 &
			\bar{d}_1 \begin{bmatrix} \id & 0 \\ 0 & -r^2 dt\cdot(-) \end{bmatrix} &
			\bar{d}_1 \begin{bmatrix} 0 \\ r^2 dt\cdot(-) \end{bmatrix} & 0 \\
		0 & 0 & d_1 & 0 \\
		0 & 0 & -C_0 & d_0 \\
		0 & 0 & 0 & C_1
	\end{bmatrix}
	\begin{bmatrix}[c|ccc]
		\id & 0 & 0 & 0 \\
		\cmidrule(lr){1-1} \cmidrule(lr){2-4}
		0 &
			\begin{bmatrix}\id \\ 0\end{bmatrix} &
			\begin{bmatrix}0 \\ \id\end{bmatrix} &
			\begin{bmatrix}0 \\ 0\end{bmatrix} \\
		0 & J_2^w & J_2^Y & J_2^Z \\
		0 & 0 & 0 & \id
	\end{bmatrix} \notag \\
	& \qquad {}
	\left(
		\begin{bmatrix}[c|ccc]
			\id & 0 & 0 & 0 \\
			\cmidrule(lr){1-1} \cmidrule(lr){2-4}
			0 & \id & 0 & 0 \\
			0 & 0 & \id & 0 \\
			0 & 0 & 0 & \id
		\end{bmatrix}
		- \bar{K} \begin{bmatrix} J_1 \\ \cmidrule(lr){1-1} 0 \\ 0 \end{bmatrix} \right) , \\
	\bar{K}_2 &=
	\begin{bmatrix}[c|c|cccc]
		\multicolumn{6}{c}{H_{J_1}} \\
		\cmidrule(lr){1-6}
		0 & \multicolumn{5}{c}{H_{J_2}} \\
		\cmidrule(lr){1-1}
		\cmidrule(lr){2-6}
		0 & 0 & \bar{d}_2 & 0 & 0 & 0 \\
		0 & 0 & 0 & C_0 & d_1 & 0 \\
		0 & 0 & 0 & 0 & -C_1 & d_0 \\
		0 & 0 & 0 & 0 & 0 & C_2
	\end{bmatrix} , \\
	\bar{K}_3 &=
	\begin{bmatrix}[c|c|cccc]
		0 & 0 & \bar{d}_3 & 0 & 0 & 0 \\
		0 & 0 & 0 & C_1 & d_1 & 0 \\
		0 & 0 & 0 & 0 & -C_2 & d_0 \\
		0 & 0 & 0 & 0 & 0 & C_3
	\end{bmatrix} , \\
	\bar{K}_i &=
	\begin{bmatrix}
		\bar{d}_i & 0 & 0 & 0 \\
		0 & C_{i-2} & d_1 & 0 \\
		0 & 0 & -C_{i-1} & d_0 \\
		0 & 0 & 0 & C_i
	\end{bmatrix}
		\quad (3 < i < n-2) , \\
	\bar{K}_{n-2} &=
	\begin{bmatrix}
		\bar{d}_{n-2} & 0 & 0 & 0 \\
		0 & C_{n-4} & d_1 & 0 \\
		0 & 0 & -C_{n-3} & d_0
	\end{bmatrix} , \\
	\bar{K}_{n-1} &=
	\begin{bmatrix}
		\bar{d}_{n-1} & 0 & 0 \\
		0 & C_{n-3} & d_1
	\end{bmatrix} , \\
	\bar{K}_{i} &= 0 \quad (n\le i) .
\end{align}
\end{subequations}
where $f(r)$ is defined in~\eqref{eq:gen-f} and $f_1 = r f'(r)$, $\bar{d}_i$ and $d_i$ denote the exterior derivatives on
$i$-forms, on $\bar{\M}$ and $\M$ respectively, while $D$ and $C_i$ are
the covariant derivative and the Calabi complex
operators~\eqref{eq:calabi} on $(\S,\Omega)$, and we have also used the
operators $J_1$~\eqref{eq:schw-J1},
$J_2$~\eqref{eq:schw-J2}, $H_{J_1}$~\eqref{eq:schw-HJ1-tHJ1},
$H_{J_2}$~\eqref{eq:schw-HJ2}. (See Appendix~\ref{sec:schw-notation} for
a more complete summary of the notation.)
\end{thm}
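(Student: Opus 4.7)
The plan is to prove the theorem by mirroring the strategy used for Theorem~\ref{thm:flrw}: starting from a complex whose compatibility property is already established, one lifts this property through a chain of homotopy equivalences to obtain the desired $\bar{K}_i$ complex. The key inputs are Lemmas~\ref{lem:compat-sufficient} and~\ref{lem:lift-compat}.

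First I would verify that the complex $K_i$ given in~\eqref{eq:schw-calabi} is itself a full compatibility complex. Because $K_0$ is block-diagonal in the operators $\bar{d}_0$, $d_0$, and $C_0$, one can explicitly construct an equivalence up to homotopy between $K_i$ and a twisted de~Rham complex associated to a flat connection built from the de~Rham differentials on $\bar{\M}$ and $\M$, together with the Killing-transport connection of the constant-curvature factor $(S,\Omega)$. This is completely parallel to the construction performed around diagram~\eqref{eq:flrw-calabi}, and once established it makes $K_i$ into a compatibility complex by Lemma~\ref{lem:compat-sufficient} and Proposition~\ref{prp:de-rham}.

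Next I would interpret the diagrams~\eqref{eq:schw-lift1}--\eqref{eq:schw-lift4} as providing successive homotopy equivalences: diagram~\eqref{eq:schw-lift1a} equates $\bar{K}_0$ with $\tK_0$ using the operator $J_1$ and the first identity in~\eqref{eq:schw-J1J2K}; diagram~\eqref{eq:schw-lift1b} equates $\tK_0$ with $K_0$ using $J_2$ and the second identity in~\eqref{eq:schw-J1J2K}. The diagrams~\eqref{eq:schw-lift2}, \eqref{eq:schw-lift3}, \eqref{eq:schw-lift4} then carry out iterated applications of Lemma~\ref{lem:lift-compat}, where at each step one has to factor a known expression (such as those leading to the definitions of $\tH_{J_1}$, $H_{J_1}$ in~\eqref{eq:schw-HJ1-tHJ1}, $\tH_{J_2}$ in~\eqref{eq:schw-tHJ2}, and $H_{J_2}$ in~\eqref{eq:schw-HJ2}) through the already-constructed compatibility operator at the previous level. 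Crucially, after the $i=4$ level the three rows $\bar{K}_i$, $\tK_i$, $K_i$ become literally identical and the equivalences stabilize to identity maps, so the induction ends automatically with $K_i$ from~\eqref{eq:schw-calabi}.

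With all these diagrams assembled, each consecutive pair of squares fulfils the hypothesis of Lemma~\ref{lem:compat-sufficient}, yielding an equivalence up to homotopy between the top line $\bar{K}_i$ and the bottom line $K_i$. Since $K_i$ is a full compatibility complex by the first step, Lemma~\ref{lem:compat-sufficient} transfers this property to $\bar{K}_i$, completing the proof. The main technical obstacle is purely bookkeeping: verifying the many homotopy and commuting-square identities hidden in the diagrams~\eqref{eq:schw-lift1}--\eqref{eq:schw-lift4} by direct calculation. These checks reduce to the basic identities $\bar{d}_{i+1}\circ\bar{d}_i=0$, $d_{i+1}\circ d_i=0$, $C_{i+1}\circ C_i=0$, the explicit form~\eqref{eq:schw-cov} of $\bar{\nabla}$ on the warped product, the ODE~\eqref{eq:f-ode} satisfied by $f$, and the compositional identities~\eqref{eq:schw-J1J2K} for $J_1$ and $J_2$; no new conceptual input is required beyond these.
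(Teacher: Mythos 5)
Your proposal is correct and follows essentially the same route as the paper: establish that the complex~\eqref{eq:schw-calabi} is a full compatibility complex via an equivalence with a twisted de~Rham complex (as in~\eqref{eq:flrw-calabi}), then glue the diagrams~\eqref{eq:schw-lift1}--\eqref{eq:schw-lift4} and apply Lemma~\ref{lem:compat-sufficient} to transfer the property first to $\tK_i$ and then to $\bar{K}_i$. Your identification of the two-stage lift through the intermediate complex $\tK_i$, the role of the factorizations defining $\tH_{J_1}$, $H_{J_1}$, $\tH_{J_2}$, $H_{J_2}$, and the stabilization of all three rows after the fourth level matches the paper's argument.
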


While we have unambiguously defined the operators $J_1$, $J_2$,
$H_{J_1}$, and $H_{J_2}$, we have not computed them explicitly. For our
purposes here, it is sufficient that they exist and satisfy a few
defining properties. Of course, in individual cases, they could be
easily computed using computer algebra.

\begin{proof}
The proof is very much parallel to the proof of Theorem~\ref{thm:flrw}.
We start with the knowledge that the complex~\eqref{eq:schw-calabi} is a
full compatibility complex. Then, gluing together (from left to right)
the diagrams~\eqref{eq:schw-lift1}, \eqref{eq:schw-lift2},
\eqref{eq:schw-lift3} and~\eqref{eq:schw-lift4}, we observe that the
glued diagrams satisfy the hypotheses of
Lemma~\ref{lem:compat-sufficient}. This implies, that $\tK_i$ is a full
compatibility complex as well, which in turn implies that so is
$\bar{K}_i$, whose operators we have explicitly listed
in~\eqref{eq:schw-complex}.
\end{proof}

The non-vanishing ranks of the vector bundles in the $\bar{K}_i$ complex
have the following pattern, which can be compared to similar table for
the constant curvature (Section~\ref{sec:cc}) and FLRW cases
(Section~\ref{sec:flrw}, where $m=n-1$, for comparison):

\begin{center}
\begin{tabular}{c|cccc}
	& $n=4$ & $n=5$ & $n=6$ & $n\ge 4$ \\ \hline
	& $4$ & $5$ & $6$ & $(n-2)+1+1$ \\
	$\bar{K}_0$ & \\
	& $10$ & $15$ & $21$ & $\frac{(n-1)(n-2)}{2} + 2(n-2) + 3$ \\
	$\bar{K}_1$ & \\
	& $18$ & $35$ & $64$ & $\frac{n-2}{2}\binom{n-1}{3} + (n-1)(n-2) + (n-2) + \binom{n}{2} + (n-2) + 1$ \\
	$\bar{K}_2$ & \\
	& $12$ & $35$ & $95$ &
		$(n-2)\binom{n-1}{4}
		+ (n-2)\binom{n-1}{3}
		+ \frac{(n-1)(n-2)}{2}
		+ \binom{n}{3} + (n-2) + 1$ \\
	$\bar{K}_3$ & \\
	& $2$ & $17$ & $81$ &
		$\frac{3(n-2)}{2}\binom{n-1}{5}
		+ 2(n-2)\binom{n-1}{4}
		+ \frac{(n-2)}{2}\binom{n-1}{3}
		+ \binom{n}{4}$ \\
	$\vdots$ & & & & $\vdots$ \\
	& & & &
		$\frac{(n-2)(i-1)}{2}\binom{n-1}{i+1}
		+ (n-2)(i-2)\binom{n-1}{i}
		+ \frac{(n-2)(i-3)}{2}\binom{n-1}{i-1}
		+ \binom{n}{i}$ \\
	\llap{($3<i$)\quad}
	$\bar{K}_i$ & \\
	& & & &
		$\frac{(n-2)i}{2}\binom{n-1}{i+2}
		+ (n-2)(i-1)\binom{n-1}{i+1}
		+ \frac{(n-2)(i-2)}{2}\binom{n-1}{i}
		+ \binom{n}{i+1}$ \\
	$\vdots$ & & & & $\vdots$ \\
	$\bar{K}_{n-1}$ \\
	& $2$ & $4$ & $7$ & $\frac{(n-2)(n-3)}{2} + 1$
\end{tabular}
\end{center}

\begin{rem}
In $n=4$ dimensions, it is well-known~\cite{jezierski,swaab} that, for
practical purposes, taking the linearized Einstein equations into
account, the gauge invariant degrees of freedom for linear perturbations
on the Schwarzschild background reduce to the Regge-Wheeler (axial) and
Zerilli (polar) scalars, or equivalently the complex Teukolsky scalar.
It is even possible to give the Regge-Wheeler and Zerilli scalars local
and manifestly gauge-invariant definitions, based on the linearization
of curvature tensors vanishing on the Schwarzschild
background~\cite{dotti-schw}. However, it is also known that there exist
so-called \emph{algebraically special} modes that are not pure gauge but
lie in the kernel of these gauge-invariants~\cite{whiting-price}. Hence,
this small set of invariants cannot be considered complete in our sense.
In our construction, $\bar{K}_1$ has $18$ independent components
(without taking the linearized Einstein equations into account, though).
But our construction proves that they form a complete set of local
gauge-invariants.
\end{rem}

\begin{rem} \label{rem:schw-K1}
In analogy with Remark~\ref{rem:flrw-K1} about FLRW geometries, it is
worth noting that the $\bar{K}_i$ complex presented above is not
continuously deformable through the class of gST spacetimes to the $M=0$
case, which corresponds to the constant curvature limit. The main
reason, again, is that the operators $J_1$ and $J_2$, introduced in
Equations~\eqref{eq:schw-J1} and~\eqref{eq:schw-J2}, are proportional to
$1/M$ and hence diverge in that limit. These operators, together with
their factorizations $H_{J_1}$ and $H_{J_2}$ appear in several places in
the formulas~\eqref{eq:schw-complex} for the operators $\bar{K}_i$ (for
$i\ge 1$). Thus, also in this case, it would be difficult to compare the
local gauge-invariant components of our $\bar{K}_1$ operator to the
components of the linearized Riemann operator, which would also be given
by long and unenlightening expressions. A more fruitful comparison would
be to try to express the components of our $\bar{K}_1$ in terms of the
linearized IDEAL characterization tensors that were recently constructed
for the gST geometries~\cite{kh-gst}.

However, the intuition proposed in the second paragraph of
Remark~\ref{rem:flrw-K1} still largely applies to the components of our
$\bar{K}_1$. In particular, our $J_1$ operator is directly analogous to
the $J$ operator introduced for FLRW geometries. On the other hand, the
$J_2$ operator did not have a direct analogy, so the way it induces
gauge invariant components of $\bar{K}_1$ is slightly different.
\end{rem}

\section{Discussion} \label{sec:discuss}

In this work, we have studied the construction of the compatibility
complex (Definition~\ref{def:compat}) $K_l$, $l=0,1,2,\ldots$, for a
linear differential operator $K_0$ of regular finite type
(Definition~\ref{def:fintype}). The construction proceeds by putting the
operator $K_0$ into a canonical form of a flat connection and then
lifting the resulting twisted de~Rham complex to a compatibility complex
for $K_0$ (Theorem~\ref{thm:fintype-compat}). Our primary and motivating
example of an operator of regular finite type is the Killing operator $K_{ab}[v]
= \nabla_a v_b + \nabla_b v_a$ on a Lorentzian (or even
pseudo-Riemannian) manifold $(M,g)$. Once known, the components of the
first compatibility operator $K_1$ can be interpreted (as discussed in
the Introduction) as a complete set of local gauge-invariant observables
in linearized gravity on $(M,g)$.

We have applied the abstract construction of Section~\ref{sec:compat} to
several physically motivated examples: flat (Minkowski) and constant
curvature (de~Sitter or anti-de~Sitter) spacetimes in
Section~\ref{sec:cc}, cosmological (FLRW) spacetimes in
Section~\ref{sec:flrw}, (Schwarzschild-Tangherlini) spherically
symmetric black hole spacetimes%
	\footnote{The family of spacetimes considered Section~\ref{sec:schw}
	is actually richer than just asymptotically flat spherically symmetric
	black holes (the Schwarzschild-Tangherlini ones). More generally, it
	allows for a non-zero cosmological constant and also allows to
	substitute spherical symmetry for planar or pseudo-spherical symmetry, which
	respectively give rise Taub's plane symmetric spacetimes or to
	pseudo-Schwarzschild solutions.} %
in Section~\ref{sec:schw}. In each case, we have kept the dimension $n =
\dim M$ general, allowing at least $n\ge 4$. While the contents of
Section~\ref{sec:cc} are well-known (they were previously reviewed in
more detail in~\cite{kh-calabi}), the Killing compatibility complexes
constructed in Sections~\ref{sec:flrw} and~\ref{sec:schw} are new.

One may wish to compare the main result for FLRW geometries,
Theorem~\ref{thm:flrw}, with the recent works~\cite{fhh,cdk,fhk}, which
were the first to (a) construct, (b) give a geometric interpretation to and
(c) prove completeness for the first compatibility operator $K_1$ in a
context very similar the one considered in  Section~\ref{sec:flrw} (the
difference is that here we do not include the presence of a dynamical
scalar inflaton field on an cosmological FLRW geometry). The systematic
approach developed in this work can also be easily applied in the
presence of an inflaton field. Then, the systematically constructed
compatibility operator $K_1$ would be necessarily equivalent to what was
obtained in~\cite{fhh,cdk,fhk}. The difference is that our systematic
construction automatically comes with a proof of completeness, while the
previous proof of completeness given in~\cite{fhh} relied very heavily
on parallels with known results for the flat and constant curvature
cases~\cite{higuchi,kh-calabi}, without an obvious way to generalize it.
On the other hand, our systematic construction does not give a
Stewart-Walker-like (cf.\ the introduction to Section~\ref{sec:killing})
geometric interpretation to $K_1$ as a linear local gauge-invariant
observable. On the other hand, the approach put forward
in~\cite{cdk,fhk}, of constructing a \emph{candidate} $K_1$ by
linearizing an \emph{IDEAL characterization} of the background geometry,
automatically gives $K_1$ a Stewart-Walker-like geometric interpretation,
but does not automatically prove completeness.%
	\footnote{Although, the only possibility we know in which completeness
	might fail is when the IDEAL characterization tensors vanish at
	\emph{quadratic} or \emph{higher} order when approaching the isometry
	class of the characterized geometry in the space of metrics. Then
	their \emph{linearization} might fail to capture all of the linear
	invariants.} %
Thus, we see great
potential in joining the methods of the current work with those
of~\cite{cdk,fhk} to construct universal Killing compatibility operators
(equivalently, complete sets of linear local gauge-invariant
observables) on a variety of backgrounds, while getting the benefits of
straightforward geometric interpretation and of a systematic way to
prove completeness.

For the Schwarzschild black hole (and its higher dimensional
generalizations), the Regge-Wheeler and Zerilli local gauge-invariants
have been known for a long time~\cite{ki-master}. Other local
gauge-invariants have also been proposed (see~\cite{jezierski,swaab,ab-kerr}
for a brief review). However, to our knowledge, no
claim of completeness has ever been made for an explicit set of local
gauge-invariants on Schwarzschild. Thus, even our construction of the
first compatibility $K_1$ operator in Section~\ref{sec:schw} appears to
be new. On the other hand, the $4$-dimensional Schwarzschild black hole
does have a known IDEAL characterization~\cite{fs-schw}, recently
extended to higher dimensions~\cite{kh-gst}, so as was argued in the
previous paragraph its linearization would have provided a good
candidate for $K_1$. To our knowledge, this has not been done explicitly
in the literature. Again, comparing that heuristic construction with our
systematic approach would be very interesting.

The next logical step is to apply our methods to the Kerr black hole and
higher dimensional (Myers-Perry) generalizations. As a first step, we
intend to construct a Killing compatibility complex for the Kerr
geometry~\cite{aabkw}, thus providing a proof of completeness for the
list of local gauge-invariants recently proposed in~\cite{ab-kerr}.

Once the Killing compatibility complex is known on a given geometry,
this information has interesting applications to the symplectic and
Poisson structures on the space of solutions of linearized
gravity~\cite[Sec.5]{kh-calabi}.

\paragraph{Acknowledgments.}
The author thanks S.~Aksteiner, L.~Andersson and T.~B\"ackdahl for many
useful discussions on gauge-invariant observables, as well as
B.~Kruglikov and W.~Seiler for pertinent feedback at the early stages of
this work. Also, the author was partially supported by the GA\v{C}R
project 18-07776S and RVO: 67985840.

\appendix

\section{Flat connection form for PDEs of regular finite type} \label{sec:fintype}

We have chosen to express our definition of a PDE of \emph{(regular)
finite type} (Definition~\ref{def:fintype}) directly in terms of a
\emph{flat connection} (Definition~\ref{def:flat-conn}). Elsewhere in
the literature, the definition is given in different terms, but the
equivalence with the form of a flat connection is well-known, even for
nonlinear equations, once the notion of regularity is made precise. For
instance, the contents of Rmk.2.3.3, Rmk.2.3.6, and Ex.2.3.17
of~\cite{seiler-inv} concern precisely the equivalence between these two
possible definitions.

For the convenience of the reader, we give an elementary proof of this
equivalence for linear equations, which is all that we will need. When
we speak of \emph{equivalence} below, we mean in the sense of one
operator complexes of Definition~\ref{def:homalg}. The length of our
proof mostly reflects the amount of notation that we have needed to
introduce along the way to make the argument as explicit as possible.

For the following definition, we need to quickly introduce the notion of
jets and jet bundles~\cite[Secs.2.1--2]{seiler-inv}. Given a vector
bundle $V\to M$, the \emph{$N$-jet} $j^N_x v$ at $x\in M$ of a section
$v\in \Secs(V)$ is the equivalence class of sections that have the same
Taylor expansion about $x$ up to order $N$ in local adapted coordinates
on $V\to M$. The definition clearly does not depend on the choice of
adapted coordinates. Denote by $J_x^N V$ the vector space of all
$N$-jets at $x$ and let $J^N V = \bigsqcup_{x\in M} J_x^N V$ be the
\emph{$N$-jet bundle} of $V$, which can naturally be given the structure
of a smooth vector bundle $J^N V \to M$, with $J^0 V = V$. By throwing
away higher terms of Taylor series expansions, we can define natural
projections $\pi^N_{N'}\colon J^N V \to J^{N'} V$ when $N\ge N'$. By
assigning to a section $v\in \Secs(V)$ its own $N$-jet at each point of
$M$, we can define the natural \emph{$N$-jet extension} differential
operator $j^N\colon \Secs(V) \to \Secs(J^N V)$, which is universal in
the sense that for any differential operator $D\colon \Secs(V) \to
\Secs(W)$ of order at most $N$, there exists a unique vector bundle map
$p^0(D) \colon J^N V \to W$ such that $D[v] = d(j^N v)$. Extending this
notation, we denote by $p^l(D) = p^0(j^l \circ D)$ the \emph{$l$-th
prolongation} of $D$.

\begin{dfn} \label{def:fintype-orig}
Let $V\to M$ and $W\to M$ be vector bundles and $K\colon \Secs(V) \to
\Secs(W)$ a linear differential operator. The PDE $K[v] = 0$ is said to
be of \emph{finite type} when (a) locally there exists an integer $N <
\oo$, a vector bundle morphism $\kappa\colon J^NV \to J^{N+1}V$ and a
differential operator $\lambda\colon \Secs(W) \to \Secs(J^{N+1}V)$ such
that $j^{N+1} v - \kappa(j^N v) = \lambda[K[v]]$ for any $v \in
\Secs(V)$, and moreover \emph{regular} when (b) locally the dimension of
the solution space is finite and constant.
\end{dfn}

\begin{prp} \label{prp:fintype-conn}
If a differential operator $K \colon \Secs(V) \to \Secs(W)$ between
vector bundles $V\to M$, $W \to M$ defines a PDE of regular finite type
$K[v]=0$, then $K$ is equivalent to a flat connection operator
$\bar{\DD}$ on some vector bundle $\bar{U}\to M$.
\end{prp}

\begin{proof}
Our proof will consist of three steps: (a) equivalence of the original
differential operator $K$ to a connection $\DD$ on $J^N V$ together with
a non-differential constraint $E\colon J^N V \to W'$, (b) equivalence to
the restriction $\tilde{\DD}$ of $\DD$ to the sub-bundle
$\tilde{U}\hookrightarrow J^N V\to M$ satisfying the non-differential
constraint $E(\tilde{u}) = 0$, and (c) equivalence to the restriction
$\bar{\DD}$ of $\tilde{\DD}$ to the sub-bundle $\bar{U}\hookrightarrow
\tilde{U} \to M$ spanned by flat sections.

Since all of our definitions and claims are local, we might as well work
in local adapted coordinates on $V$, $W$ and any other vector bundles.
For instance, we will use coordinates $(x^a)$ on $M$, $(x^a,v^\alpha)$
on $V$ and $(x^a,v^\alpha, v^{\alpha,1}, \ldots, v^{\alpha,N})$ on $J^N V$,
such that $v_{\alpha,k}(j^N \phi(x)) = \del^k v^\alpha(\phi(x))$, where
$\del^k$ stands for all possible independent partial derivatives of
order $k$ with respect to the $(x^a)$ coordinates, with similar
notations used for other bundles. Also, when there is no confusion, we
will denote a general section of a vector bundle $V$ by $v$, a general
section of $W$ by $w$, and so on. We will denote a general section of
$J^NV$ by $v^{(N)} = (v, v^1, \ldots, v^N)$.

Before proceeding, let us establish some notation. Namely, supposing
that $K$ is a differential operator of order $k$, there is a unique
non-differential bundle map that factors $K$ through $k$-jets, which we
denote by $p^0(K)(j^k v) = K[v]$ and similarly $p^l(K)(j^{k+l} v) = j^l
K[v]$. Also, we will need the algebraic operators $\iota_k$ defined by
the identity%
	\footnote{The non-triviality of the identity stems from the fact that
	$\del^k(\del^l v)$ has more components than $\del^{k+l} v$, if we
	do not symmetrize the partial derivatives between the $\del^k$ and
	$\del^l$ operators.} %
$\del^k(\del^l v) =
\iota_k(\del^{k+l} v)$, as well as the differential operators
$\Delta^N_k$ defined by the identity
\begin{equation}
	\Delta^N_k[\del v - \iota_1(v^1), \ldots, \del v^{N-1} - \iota_1(v^N),
		\del v^N - \iota_1(v^{N+1})]
	= \del^k v - \iota_k(v^k) ,
\end{equation}
for $k\le N$. These operators basically encode the identities $\del^0 v
- v = 0$, $\del^1 v - \iota_1(v^1) = \del v - \iota_1(v^1)$, $\del^2 v -
\iota_2(v^2) = \del(\del v - \iota_1(v^1)) + (\del v^1 - \iota_1(v^2))$,
\ldots .

(a)
Essentially, all the information that we will need to establish the
first equivalence is contained in the bundle map $\kappa$ and the
differential operator $\lambda$ from Definition~\ref{def:fintype}. The
non-trivial information is contained in the highest order components,
$\del^{N+1} v - \kappa^{N+1}(v,\del v, \ldots, \del^N v) =
\lambda^{N+1}[K[v]]$, which implies the identity
\begin{equation}
	\del(\del^N v) - \iota_1(\kappa^{N+1}(v,\del v, \ldots, \del^N v))
		= \iota_1(\lambda^{N+1}[K[v]]) .
\end{equation}
That last identity will be a crucial piece of our definition of a
connection on $J^N V$.

To complete the necessary definitions, we will need the differential
operator $P_1\colon \Secs(W) \to \Secs(T^*M\otimes_M J^N V)$ given by
\begin{equation}
	P_1[w] = \begin{bmatrix}
		0 \\ \vdots \\
		0 \\
		\iota_1(\lambda^{N+1}[w])
	\end{bmatrix} .
\end{equation}
Next, the bundle $W'$ and the bundle map $E\colon J^N \to W'$ are chosen
so that $\ker E = \pi^{N'}_0 (\ker p^{N'}(K))$ for some $N'$. Since we
are allowed to specify $N$ and $N'$ as we like, we can pick them so that
$N' > 0$, $N > k$. The meaning is that $E$ takes into account all
integrability conditions of order $N$ that can be obtained by prolonging
the equation $K[v] = 0$ by $N'$ differentiations. By construction, there
must exist a differential operator $P_0\colon \Secs(W) \to \Secs(W')$
such that $P_0[K[v]] = E(j^N v)$. Also, since we have presumed that $N >
k$, we have $\pi^N_k(\ker E) \subset \ker p^0(K)$ and there must exist a
bundle map $\bar{P}_0 \colon W'\to W$ such that $P_0 E = p^0(K)
\pi^N_k$. It remains to define the connection on $J^N V$, which we give
by the formula
\begin{equation}
	\DD \begin{bmatrix} v \\ v^1 \\ \vdots \\ v^{N-1} \\ v^N \end{bmatrix}
	= \begin{bmatrix}
		\del v - \iota_1(v^1) \\
		\del v^1 - \iota_1(v^2) \\
		\vdots \\
		\del v^{N-1} - \iota_1(v^N) \\
		\del v^N - \iota_1(\kappa^{N+1}(v,v^1, \ldots, v^N))
	\end{bmatrix} .
\end{equation}

Having introduced all the necessary notation. The desired equivalence is
explicitly exhibited by the diagram
\begin{equation}
\begin{tikzcd}[column sep=4cm,row sep=4cm]
	v \ar{r}{K}
		\ar[swap,shift right]{d}{j^N} \&
	w \ar[swap,shift right]{d}{\begin{bmatrix} P_1 \\ P_0 \end{bmatrix}}
		\ar[dashed,bend left]{l}{0}
	\\
	v^{(N)}
		\ar[swap]{r}{\begin{bmatrix} \DD \\ E \end{bmatrix}}
		\ar[swap,shift right]{u}{\pi^N_0} \&
	\begin{bmatrix} v^{(N)}_1 \\ w' \end{bmatrix}
		\ar[swap,shift right]{u}{\begin{bmatrix} p^0(K)\Delta^N_k & \bar{P}_0 \end{bmatrix}}
		\ar[swap,dashed,bend right]{l}{\begin{bmatrix} -\Delta^N_N & 0 \end{bmatrix}}
\end{tikzcd} ,
\end{equation}
where $v^N_1$ denotes a general section of $T^*M \otimes_M J^N V \to M$.
To prove that we have an equivalence, we must verify all the conditions
required by Definition~\ref{def:homalg}. The commutativity of the
squares formed by solid arrows follows from direct computations. One is
involves only the defining properties of $P_0$ and $P_1$, while the
other uses the definitions of $\Delta^N_k$ and $\bar{P}_0$:
\begin{align*}
	\begin{bmatrix}
		p^0(K)\Delta_k^N & \bar{P}_0
	\end{bmatrix}
	\begin{bmatrix} \DD \\ E \end{bmatrix}
	- K\pi^N_0
	&= p^0(K) (\Delta^N_k \DD) + (\bar{P}_0 E) - p^0(K)\pi^N_k (j^N \pi^N_0) \\
	&= p^0(K) \pi^N_k (j^N \pi^N_0 - \id) + p^0(K) \pi^N_k (\id - j^N \pi^N_0)
	= 0 .
\end{align*}
The remaining checks involve the homotopy corrections:
\begin{align*}
	\pi_0^N j^N &= \id - 0 ,
	\\
	j^N\pi_0^N
	&= \id + (j^N\pi_0^N - \id) \\
	&= \id + \Delta^N_N \DD ,
\end{align*}
\begin{align*}
	\left( \id
	- \begin{bmatrix}
		p^0(K)\Delta_k^N & \bar{P}_0
	\end{bmatrix}
	\begin{bmatrix}
		P_1 \\ P_0
	\end{bmatrix}
	- 0
	\right) K
	&= K - p^0(K) \Delta^N_k (P_1 K) - \bar{P}_0 (P_0 K) \\
	&= K - p^0(K) (\Delta^N_k \DD) j^N - (\bar{P}_0 E) j^N \\
	&= K - p^0(K) (j^k \pi^N_0 - \pi^N_k) j^N - p^0(K) j^k
	= 0 ,
\end{align*}
\begin{multline*}
	\left( \begin{bmatrix} \id & 0 \\ 0 & \id \end{bmatrix}
	- \begin{bmatrix}
		P_1 \\ P_0
	\end{bmatrix}
	\begin{bmatrix}
		p^0(K)\Delta_k^N & \bar{P}_0
	\end{bmatrix}
	- \begin{bmatrix} \DD \\ E \end{bmatrix}
		\begin{bmatrix} -\Delta^N_N & 0 \end{bmatrix}
	\right)
	\begin{bmatrix} \DD \\ E \end{bmatrix}
	\\
	=
	\begin{bmatrix} \DD \\ E \end{bmatrix}
	- \begin{bmatrix}
		P_1 \\ P_0
	\end{bmatrix}
		p^0(K) \left( (j^k\pi^N_0 - \pi^N_k) + \pi^N_k \right)
	- \begin{bmatrix} \DD \\ E \end{bmatrix}
		( \id - j^N\pi^N_0 )
	\\
	=
	\left(
	\begin{bmatrix} \DD \\ E \end{bmatrix} j^N
	- \begin{bmatrix} P_1 \\ P_0 \end{bmatrix} K
	\right)
	\pi^N_0
	=
	\begin{bmatrix} 0 \\ 0 \end{bmatrix} .
\end{multline*}

(b)
The next step is to eliminate the non-differential $E(v,v^1,\ldots, v^N)
= 0$ constraint. By introducing a sub-bundle $\iota\colon \tilde{U}
\hookrightarrow J^NV \to M$ such that $\tilde{U} = \ker E$. In general
$\ker E$ need not be a vector bundle (the fiber ranks may be
non-constant over $M$), hence $\tilde{U}$ might not exist as a bundle.
However, from requirement in Definition~\ref{def:fintype-orig}(b), we
know that the dimension of the solution space of $K[v] = 0$ is finite
and locally constant, which by part (a) of our proof also applies to the
solution space of $\DD[v^{(N)}] = 0$ under the $E$-constraint.
In fact, the dimension of the solution space on a neighborhood of $x\in
M$ is bounded from above by $\dim \ker_x E$, simply because a $\DD$-flat
section is uniquely determined by its value at any one point. The only
reason that an element $\tilde{u}_x \in \ker_x E$ might not correspond
to a local solution is that there might exist some higher order
differential consequence of $\DD[v^{(N)}] = 0$ (or equivalently
of $K[v] = 0$) that imposes further integrability conditions on $j^N v$,
which $\tilde{u}_x$ may not satisfy. However, from the theory of formal
integrability of PDEs (\emph{Cartan-Kuranishi}
theorem~\cite[Sec.7.4]{seiler-inv}), it is well known that past a
certain finite differential order $N'$, no further constraints on $j^Nv$
will appear from considering $\del^{N'} K[v] = 0$ or higher order
differential consequences. Let us use this order $N'$ (or any higher
one) to influence the definition of the $E$-constraint that we
introduced in part (a) of the proof. That is, we are free to assume that
$E$ has been chosen such that every element $\tilde{u}_x \in \ker_x E$
defines a unique local solution%
	\footnote{The existence of such a solution is guaranteed by applying
	$\DD$-parallel transport.} %
of $K[v]=0$ with $j^N v(x) = \tilde{u}_x$. In other words, $\dim \ker_x
E$ is equal to the local dimension of the solution space about $x\in M$.
But then, by the regular finite type hypothesis on $K[v]=0$, we know that $\dim
\ker_x E$ is locally constant, meaning that $\ker E$ is indeed a vector
bundle, which we can denote by $\iota \colon \tilde{U} \to J^N V \to M$.

Since we are working locally, we are free to presume that there also
exists a projection bundle map $q \colon J^N V \to \tilde{U}$ such that
$q\iota = \id$. In the other direction, we have the identity $(\id -
\iota q) \iota = 0$, which means that there must exist a bundle map
$h\colon W' \to J^NV$ such that $\iota q = \id - hE$. Further, we can
define a connection operator $\tilde{\DD}$ on $\tilde{U}$ by the formula
\begin{equation}
	\tilde{\DD} \tilde{u} = q_1 \DD \iota(\tilde{u}) ,
\end{equation}
where we have introduced the convenient notation $q_1 = \id\otimes q
\colon T^*M \otimes_M J^NV \to T^*M\otimes_M \tilde{U}$. We will use the
same convention also for $E_1 = \id \otimes E$ and $\iota_1 = \id
\otimes \iota$. We finally have all the ingredients to exhibit the next
equivalence
\begin{equation}
\begin{tikzcd}[column sep=4cm,row sep=4cm]
	v^{(N)} \ar{r}{\begin{bmatrix} \DD \\ E \end{bmatrix}}
		\ar[swap,shift right]{d}{q} \&
	\begin{bmatrix} v^{(N)}_1 \\ w' \end{bmatrix}
		\ar[swap,shift right]{d}{\begin{bmatrix} q_1 & -q_1 \DD h \end{bmatrix}}
		\ar[dashed,bend left]{l}{\begin{bmatrix} 0 & h \end{bmatrix}}
	\\
	\tilde{u}
		\ar[swap]{r}{\tilde{\DD}}
		\ar[swap,shift right]{u}{\iota} \&
	\tilde{u}_1
		\ar[swap,shift right]{u}{\begin{bmatrix} \iota_1 \\ 0 \end{bmatrix}}
		\ar[swap,dashed,bend right]{l}{0}
\end{tikzcd} .
\end{equation}
For the commutativity of the solid arrow squares we first need one more
identity. Let us define
\begin{equation}
	E' = h_1 E_1 \DD - \DD h E
\end{equation}
and note that it is a non-differential operator (as follows from the
basic properties of the connection operator). From its definition,
$E'(v^{(N)}) = 0$ is also an integrability condition. But by the
discussion from the preceding paragraph, $E(V^{(N)}) = 0$ already takes
into account all possible integrability conditions. Hence, we must be
able to factor $E' = h' E$ for some bundle map $h' \colon W'\to T^*M
\otimes_M W'$, from which follows the identity
\begin{equation}
	h_1 E_1 \DD = \DD h E + h' E = (\DD h + h') E .
\end{equation}
Hence,
\begin{align*}
	\begin{bmatrix} \DD \\ E \end{bmatrix} \iota
	- \begin{bmatrix} \iota_1 \\ 0 \end{bmatrix} \tilde{\DD}
	&= \begin{bmatrix}
		\DD\iota - (\iota_1 q_1) \DD \iota \\
		E \iota \\
		\end{bmatrix}
	= \begin{bmatrix}
		(h_1 E_1 \DD) \iota \\
		0
		\end{bmatrix}
	= \begin{bmatrix}
		(\DD h + h') (E\iota) \\
		0
		\end{bmatrix}
	= \begin{bmatrix} 0 \\ 0 \end{bmatrix} ,
	\\
	\tilde{\DD} q
	- \begin{bmatrix} q_1 & -q_1 \DD h \end{bmatrix}
		\begin{bmatrix} \DD \\ E \end{bmatrix}
	&= q_1 \DD (\iota q) - q_1 \DD (\id - h E)
	= 0 .
\end{align*}
The first set of identities involving the homotopy
corrections also easily follows from the definition of the $\iota$ and
$q$ bundle maps.
We check the remaining ones by direct computation:
\begin{equation*}
	\id - \begin{bmatrix} q_1 & -q_1 \DD h \end{bmatrix}
		\begin{bmatrix} \iota_1 \\ 0 \end{bmatrix}
	= \id - q_1 \iota_1 = 0 ,
\end{equation*}
\begin{align*}
	\left( \begin{bmatrix} \id & 0 \\ 0 & \id \end{bmatrix}
	- \begin{bmatrix} \iota_1 \\ 0 \end{bmatrix}
		\begin{bmatrix} q_1 & -q_1 \DD h \end{bmatrix}
	- \begin{bmatrix} \DD \\ E \end{bmatrix}
		\begin{bmatrix} 0 & h \end{bmatrix}
	\right)
	&= \begin{bmatrix}
		\id - \iota_1 q_1 & \iota_1 q_1 \DD h - \DD h \\
		0 & \id - Eh
		\end{bmatrix}
	\\
	&= \begin{bmatrix}
		h_1 E_1 & - (h_1 E_1 \DD) h \\
		0 & \id - Eh
		\end{bmatrix}
	= \begin{bmatrix}
		h_1 E_1 & - (\DD h + h') E h \\
		0 & \id - Eh
		\end{bmatrix}
	\\
	&= \begin{bmatrix}
		\id & (\DD h + h') \\
		0 & \id
		\end{bmatrix}
		\begin{bmatrix}
		h_1 E_1 & - (\DD h + h') \\
		0 & \id - Eh
		\end{bmatrix} ,
\end{align*}
where the last factor has the property
\begin{equation*}
	\begin{bmatrix}
		h_1 E_1 & - (\DD h + h') \\
		0 & \id - Eh
	\end{bmatrix}
	\begin{bmatrix} \DD \\ E \end{bmatrix}
	= \begin{bmatrix}
		(h_1 E_1 \DD - \DD h E - h' E) \\
		E(\id - hE)
	\end{bmatrix}
	= \begin{bmatrix}
		0 \\
		(E \iota) q
	\end{bmatrix}
	= \begin{bmatrix}
		0 \\
		0
	\end{bmatrix} .
\end{equation*}

(c)
At this point, we know that the local solutions of $K[v] = 0$ are in
bijection with the $\tilde{\DD}$-flat local sections of $\tilde{U} \to
M$. In principle, it is now sufficient to check that $\DD$ is flat (if
it were not flat, then the rank of $\tilde{U}$ could not coincide with
the dimension of the local solution space of $K[v]=0$, though the two do
coincide by construction from part (b) of our proof). However, we will
take a slightly indirect route and show a more general result, that will
also be referred to in our discussion of the Killing equation in
Section~\ref{sec:killing}. Namely, provided the local solutions of
$\tilde{\DD} \tilde{u} = 0$ span a sub-bundle $\bar{\iota} \colon
\bar{U} \hookrightarrow \tilde{U} \to M$, we will show that the
restriction of $\tilde{\DD}$ to $\bar{U} \to M$ is flat and the original
$\tilde{\DD} \tilde{u} = 0$ equation is equivalent to the new $\bar{\DD}
\bar{u} = 0$ equation.

In our case, from the regular finite type assumption on $K[v] = 0$, we know that
the local solution space has locally constant (finite) dimension, which
is easily seen to be equivalent to the local solutions of $\tilde{\DD}
\tilde{u} = 0$ spanning a sub-bundle.

Now, under our hypotheses and since we are working locally, we can
choose a frame on $\bar{U} \to M$ which corresponds to flat sections
$\bm{\tilde{u}}_\beta$ on $\tilde{U}$. Namely, we define
$\bar{\iota}(\bar{u}) = \bar{u}^\beta \bm{\tilde{u}}_\beta$. Locally, there
also exists the projection bundle map $\bar{q}\colon \tilde{U} \to
\bar{U} \to M$, which satisfies $\bar{q} \bar{\iota} = \id$ and hence acts as
$\bar{q}\left(\bar{u}^\beta \bm{\tilde{u}}_\beta\right) = \bar{u}$.
Hence, using again the notation $\bar{\iota}_1 = 1\otimes \bar{\iota}$, we get
the identity
\begin{equation}
\tilde{\DD}[\bar{\iota}(\bar{u})]
= d\bar{u}^\beta \otimes \bm{\tilde{u}}_\beta
	+ \bar{u}^\beta (\tilde{\DD} \bm{\tilde{u}}_\beta)
= d\bar{u}^\beta \otimes \bm{\tilde{u}}_\beta
= \iota_1 \left(\bar{\DD} \bar{u}\right) ,
\end{equation}
where $d$ is simply the exterior derivative acting on the scalars
$\bar{u}^\beta$ and we have defined $\bar{\DD}$ to act on the frame
components of $\bar{u}$ as $(\bar{\DD}\bar{u})^\beta = d \bar{u}^\beta$.
The operator $\bar{\DD}$ is clearly a flat connection on the bundle
$\bar{U} \to M$. It remains only to exhibit the equivalence between the
equations $\tilde{\DD}\tilde{u} = 0$ and $\bar{\DD}\bar{u} = 0$.

As we already discussed in part (b) of our proof, when $\bar{\iota}$ is
not surjective, there must be some integrability conditions that follow
from the differential consequences of $\tilde{\DD}\tilde{u} = 0$. In
other words, there exists a differential operator $\bar{\lambda}$ such
that $\bar{\lambda} \tilde{\DD}$ is a non-differential operator,
satisfying $(\bar{\lambda} \tilde{\DD}) \iota = 0$, with $\iota \colon
\bar{U} \hookrightarrow \ker \bar{\lambda}\tilde{\DD}$ actually being an
isomorphism. Again, as before, this means that there exists an operator
$\bar{h}$ such that $\bar{\iota} \bar{q} = \bar{h} (\bar{\lambda}
\bar{\DD})$. Recalling again the notation, $\tilde{U}_1 = T^*M \otimes_M
\tilde{U}$ and $\bar{U}_1 = T^*M \otimes_M \bar{U}$, as well as
$\bar{q}_1 = \id \otimes \bar{q}$ and $\bar{\iota}_1 = \id \otimes
\bar{\iota}$. With that in mind, the desired equivalence is explicitly
given by the diagram
\begin{equation}
\begin{tikzcd}[column sep=4cm,row sep=4cm]
	\tilde{u} \ar{r}{\tilde{\DD}}
		\ar[swap,shift right]{d}{\bar{q}} \&
	\tilde{u}_1
		\ar[swap,shift right]{d}{\bar{q}_1 (\id-\tilde{\DD} \bar{h} \bar{\lambda})}
		\ar[dashed,bend left]{l}{\bar{h} \bar{\lambda}}
	\\
	\bar{u}
		\ar[swap]{r}{\bar{\DD}}
		\ar[swap,shift right]{u}{\bar{\iota}} \&
	\bar{u}_1
		\ar[swap,shift right]{u}{\bar{\iota}_1}
		\ar[swap,dashed,bend right]{l}{0}
\end{tikzcd} .
\end{equation}
The arguments to check all the required identities are similar to those
in part (b). We check that the solid arrows form commutative squares by
direct computation:
\begin{align*}
	\bar{\DD} \bar{q}
	&= \bar{q}_1 \tilde{\DD} (\bar{\iota} \bar{q})
	= \bar{q}_1 \tilde{\DD} - \bar{q}_1 \tilde{\DD} \bar{h} \bar{\lambda} \tilde{\DD}
	= \bar{q}_1 (\id - \tilde{\DD} \bar{h} \bar{\lambda}) \tilde{\DD} , \\
	\tilde{\DD} \bar{\iota}
	&= \bar{\iota}_1 \bar{\DD} .
\end{align*}

To check some of the identities with the homotopy corrections, we need
one more identity. For ease of notation, define $\bar{E} = \bar{\lambda}
\tilde{\DD}$, which by assumption is a non-differential operator which
incorporates all the integrability conditions of the equation
$\tilde{\DD}\tilde{u} = 0$. Just as in part (b), since all integrability
conditions must factor through $\bar{E}$, there must exist an operator
$\bar{h}'$ such that $\bar{h}_1 \bar{E}_1 \tilde{\DD} - \tilde{\DD}
\bar{h} \bar{E} = \bar{h}' \bar{E}$, which implies the identity
\begin{equation}
	[\bar{h}_1 \bar{E}_1 - (\tilde{\DD} \bar{h} + \bar{h}') \bar{\lambda}]
		\tilde{\DD} = 0 .
\end{equation}
Hence, we can verify that
\begin{align*}
	\id - \bar{q}\bar{\iota} &= 0 , \\
	\id - \bar{\iota} \bar{q} - \bar{h}\bar{\lambda} \tilde{\DD} &= 0 , \\
	(\id - \bar{q}_1 (\id - \tilde{\DD}\bar{h}\bar{\lambda}) \bar{\iota}_1) \bar{\DD}
	&= (\id - (\bar{q}_1 \bar{\iota}_1)) \bar{\DD}
		+ \bar{q}_1 \tilde{\DD} \bar{h} \bar{\lambda} (\bar{\iota}_1 \bar{\DD}) \\
	&= \bar{q}_1 \tilde{\DD} \bar{h} ((\bar{\lambda} \tilde{\DD}) \bar{\iota})
	= 0 , \\
	(\id - \bar{\iota}_1 \bar{q}_1 (\id - \tilde{\DD}\bar{h}\bar{\lambda})
		- \tilde{\DD} \bar{h} \bar{\lambda}) \tilde{\DD}
	&= (\id - \bar{\iota}_1 \bar{q}_1) (\id - \tilde{\DD}\bar{h}\bar{\lambda}) \tilde{\DD} \\
	&= \bar{h}_1 \bar{E}_1 \tilde{\DD}
		- \bar{h}_1 \bar{E}_1 \tilde{\DD} (\bar{h} \bar{\lambda} \tilde{\DD}) \\
	&= \bar{h}_1 \bar{E}_1 \tilde{\DD}
		- (\bar{h}_1 \bar{E}_1 \tilde{\DD}) (\id - \bar{\iota} \bar{q}) \\
	&= (\tilde{\DD} \bar{h} + \bar{h}') (\bar{E} \bar{\iota}) \bar{q}
	= 0 .
\end{align*}
This concludes the proof.
\end{proof}

Since according to Definition~\ref{def:fintype-orig} the flat section
equation $\DD v = 0$ for a flat connection $\DD$ is itself of regular finite
type (with $N=0$), Proposition~\ref{prp:fintype-conn} shows that
Definitions~\ref{def:fintype} and~\ref{def:fintype-orig} are clearly
equivalent and can be used interchangeably.

The reader might notice that the structure of parts (b) and (c) in
the proof of Proposition~\ref{prp:fintype-conn} is rather similar. The
reason that we have included both of them in detail is that part (c) can
basically be read independently and establishes the following (of course
also well-know) more specific result:

\begin{lem} \label{lem:conn-flat-conn}
Let $\DD$ be a connection on a vector bundle $V$. If the local solutions
of the flat section equation $\DD v = 0$ span a sub-bundle
$W\hookrightarrow V$, then the restriction $\bar{\DD} = \DD|_W$ of $\DD$
to $W$ is a flat connection on $W$. Moreover, $\DD$ is equivalent to
$\bar{\DD}$ in the sense of one operator complexes
(Definition~\ref{def:homalg}).
\end{lem}

\section{Notation reference} \label{sec:notation}

\subsection{Constant curvature spacetime} \label{sec:cc-notation}

\begin{tabular}{lll}
	$\alpha$
		& curvature constant
		& Section~\ref{sec:cc}
	\\[3ex]
	$C_i$, $C_0 = K$
		& Calabi compatibility complex for Killing operator $K$
		& \eqref{eq:calabi}
	\\[3ex]
	$S\odot T$
		& Kulkarni-Nomizu product
		& \eqref{eq:kn-prod}
\end{tabular}

\subsection{FLRW spacetimes} \label{sec:flrw-notation}

\begin{tabular}{lp{6cm}l}
	$(M,g) = (I\times F, -dt^2 + f^2 \tilde{g}^F)$
		& FLRW geometry, with scale factor $f$
		& Section~\ref{sec:flrw}
	\\[3ex]
	$\alpha$
		& curvature constant
		& Section~\ref{sec:flrw}
	\\[3ex]
	$U_a$
		& unit covector normal to $F$ factor
		& Section~\ref{sec:flrw}
	\\[3ex]
	$R_{abcd}, R_{ab}, \R$
		& background Riemann, Ricci and scalar curvatures on $(M,g)$
		& \eqref{eq:flrw-curv}
	\\[3ex]
	$\del_t, \tnabla$
		& derivative operators extended from $I$ and $F$ to $M$
		& Section~\ref{sec:flrw}
	\\[3ex]
	$\tilde{\Delta}, \tilde{\div}, \tilde{\tr}$
		& Laplacian, divergence and trace extended from $F$ to $M$
		& \eqref{eq:flrw-lapdivtr}
	\\[3ex]
	$v_a = Af U_b + f^2 \tX_a$
		& covector parametrization, $U^a\tX_a = 0$
		& \eqref{eq:flrw-v-param}
	\\[3ex]
	$h_{ab} = p U_a U_b - 2f^2 U_{(a} \tY_{b)} + f^2 \tZ_{ab}$
		& symmetric 2-tensor parametrization, $U^a\tY_a = 0 = U^a\tZ_{ab}$
		& \eqref{eq:flrw-h-param}
	\\[3ex]
	$K$
		& Killing operator on $(M,g)$
		& \eqref{eq:flrw-killing}
	\\[3ex]
	$\tC_i$, $\tC_0 = \tilde{K}$
		& extension of Calabi and Killing operators from $(F,g^F)$ to $M$
		& \eqref{eq:flrw-calabi}
	\\[3ex]
	$J, \tilde{J}$
		& operator to extract $A = J\circ K[v]$, subcomponent $\tilde{J}$
		& \eqref{eq:flrw-J}, \eqref{eq:flrw-tJ}, \eqref{eq:flrw-J-expl}
	\\[3ex]
	$H_J, \tilde{H}_J$
		& factorization of $J, \tilde{J}$
		& \eqref{eq:flrw-HJ}, \eqref{eq:flrw-tHJ}, \eqref{eq:flrw-HJ-expl}
\end{tabular}

\subsection{Schwarzschild-Tangherlini spacetimes} \label{sec:schw-notation}

\begin{tabular}{lp{6.5cm}l}
	$(\bar{\M}, \bar{g}) = (\M \times \S, g + r^2 \Omega), \bar{\nabla}_\mu$
		& generalized Schwarzschild-Tangherlini (gST) spacetime,
			covariant derivative
		& Section~\ref{sec:schw}
	\\[3ex]
	$\bar{R}_{abcd}, \bar{R}_{ab}, \bar{\R}$
		& background Riemann, Ricci and scalar curvatures on
			$(\bar{\M},\bar{g})$
		& \eqref{eq:schw-curv-riem}, \eqref{eq:schw-curv-ricc}
	\\[3ex]
	$\bar{T}_{abcd}$
		& $\Lambda$-shifted background Riemann curvature on
			$(\bar{\M},\bar{g})$
		& \eqref{eq:schw-curv-lambda}
	\\[3ex]
	$(\S,\Omega), D_A$
		& constant curvature factor,
			covariant derivative extended to $\bar{\M}$
		& Section~\ref{sec:schw}
	\\[3ex]
	$(\M,g), \nabla_a$
		& radio-temporal factor,
			covariant derivative extended to $\bar{\M}$
		& Section~\ref{sec:schw}
	\\[3ex]
	$R_{abcd}, R_{ab}, \R$
		& background Riemann, Ricci and scalar curvatures on $(\M,g)$
		& \eqref{eq:schw-curv-rt}
	\\[3ex]
	$-f(r)$
		& the $\bar{g}_{tt}$ metric component
		& \eqref{eq:gen-g}, \eqref{eq:gen-f}
	\\[3ex]
	$f_1(r) = r f'(r), f_2(r) = r'f_1(r)$
		& derivatives of $f(r)$
		& \eqref{eq:gen-f1}, \eqref{eq:gen-f2}
	\\[3ex]
	$M, \Lambda, \alpha$
		& mass, cosmological, curvature constants
		& \eqref{eq:gen-f}
	\\[3ex]
	$t_a = -f dt_a$
		& timelike Killing covector on $(\M,g)$
		& Section~\ref{sec:schw}
	\\[3ex]
	$r, r_a = dr_a$
		& radial coordinate and covector on $(\M,g)$
		& Section~\ref{sec:schw}
	\\[3ex]
	$v_\mu \to \begin{bmatrix} u_t\, f dt_a + u_r dr_a \\ r (r X_A) \end{bmatrix}$
		& covector parametrization on $\bar{\M}$
		& \eqref{eq:schw-vh-param}
	\\[3ex]
	$h_{\mu\nu} \to \begin{bmatrix}
		p \, r_a r_b - 2t_{(a} w_{b)} & r (r Y_{aB}) \\
		r (r Y_{bA}) & r^2 Z_{AB}
	\end{bmatrix}$
		& symmetric 2-tensor parametrization on $\bar{\M}$
		& \eqref{eq:schw-vh-param}
	\\[3ex]
	$\bar{K}$
		& Killing operator on $(\bar{\M},\bar{g})$
		& \eqref{eq:schw-killing}
	\\[3ex]
	$C_i$, $C_0 = K$
		& extension of Calabi and Killing operators from $(\S,\Omega)$ to $\bar{\M}$
		& \eqref{eq:schw-killing}
	\\[3ex]
	$J_1$
		& operator to extract $u_r = J_1\circ \bar{K}[v]$
		& \eqref{eq:schw-J1}
	\\[3ex]
	$J_2$
		& operator to extract $\nabla_a \frac{v_B}{r} = J_2\circ \bar{K}[v]_{aB}$
		& \eqref{eq:schw-J2}
	\\[3ex]
	$H_{J_1}, \tilde{H}_{J_1}$
		& factorization of $J_1$
		& \eqref{eq:schw-HJ1-tHJ1}
	\\[3ex]
	$H_{J_2}, \tilde{H}_{J_2}$
		& factorization of parts of $J_2$
		& \eqref{eq:schw-HJ2}, \eqref{eq:schw-tHJ2}
\end{tabular}

\bibliographystyle{utphys-alpha}
\bibliography{killing}

\end{document}